\providecommand{\tabularnewline}{\\}
\theoremstyle{plain}
\newtheorem{prop}{\protect\propositionname}[section]
\theoremstyle{plain}
\newtheorem{assumption}{\protect\assumptionname}
\theoremstyle{plain}
\newtheorem{lem}{\protect\lemmaname}[section]
\newcommand{\sym}[1]{\rlap{#1}}% Thanks to David Carlisle
\providecommand{\assumptionname}{Assumption}
\providecommand{\lemmaname}{Lemma}
\providecommand{\propositionname}{Proposition}
\begin{document}
\title{Party On: The Labor Market Returns to Social Networks in Adolescence\thanks{We are grateful to seminar and conference participants at Northwestern
University, Santa Barbara University, USC, Peking University, SOCCAM,
the All-California Labor Economics conference, the North American
Winter Meeting of the Econometric Society, the Asian Meeting of the
Econometric Society in China, and the China Labor Economist Conference.
We are particularly grateful to Larry Katz for his very insightful
suggestions. This project was supported by the California Center for
Population Research at UCLA (CCPR), which receives core support (P2C-
HD041022) from the Eunice Kennedy Shriver National Institute of Child
Health and Human Development (NICHD). Research done prior to joining
Audible. Audible is a subsidiary of Amazon. All views contained in
this research are the author's own and do not represent the views
of Amazon or its affiliates. All errors are our own.}}
\author{Adriana Lleras-Muney, Matthew Miller, Shuyang Sheng, and Veronica
Sovero\thanks{Adriana Lleras-Muney, Department of Economics, UCLA, email: alleras@econ.ucla.edu;
Matthew Miller, Audible Inc, email: mattmm@audible.com; Shuyang Sheng,
Department of Economics, UCLA, email: ssheng@econ.ucla.edu; Veronica
Sovero, Department of Economics, UC Riverside, email: vsovero@ucr.edu.}}
\maketitle
\begin{abstract}
\begin{singlespace}
\noindent We investigate the returns to adolescent friendships on
earnings in adulthood using data from the National Longitudinal Study
of Adolescent to Adult Health. Because both education and friendships
are jointly determined in adolescence, OLS estimates of their returns
are likely biased. We implement a novel procedure to obtain bounds
on the causal returns to friendships: we assume that the returns to
schooling range from 5 to 15\% (based on prior literature), and instrument
for friendships using similarity in age among peers. Having one more
friend in adolescence increases earnings between 7 and 14\%, substantially
more than OLS estimates would suggest. 
\end{singlespace}
\end{abstract}
\begin{singlespace}

\section{Introduction}
\end{singlespace}

An individual's social capital (the number and quality of their connections)
impacts many economic outcomes. Individuals' classroom and school
peers impact their education (\citealp{Sacerdote2001}; \citealp{Carrell2009}),
earnings (\citealp{carrell2018disruptivepeers}; \citealp{Michelman2021})
and health (\citealp{Carrell2011}). Whom one befriends from among
these peers also influences important lifetime outcomes: \citet{Chetty2022}
show that the number of high SES friendships and economic mobility
in a neighborhood are positively associated. However, there is little
evidence on whether the number of one's friends (not just the types
they are exposed to) causally affect labor market outcomes.

Using data from the National Longitudinal Study of Adolescent to Adult
Health (hereafter Add Health), which follows individuals from adolescence
into adulthood, we study the labor market returns to adolescent friendships.
During adolescence, individuals develop important cognitive and social
skills, have key educational outcomes determined and form friendships
\citep{heckmanmobility2014}. Friendships are associated with better
larbor market outcomes: Figure \ref{fig:returns_edu_friend} plots
log annual earnings at ages 24--34 against the number of friends
at ages 12--20, separately for males and females. The non-parametric
lines show a striking positive association between adolescent friendships
and young adult earnings for both sexes.\footnote{We detail later in the paper how we measure the number of friends.}
OLS estimates of this association may be biased downward however:
individuals with social skills may engage in activities excluded from
regressions, like drinking and partying, that help them make friends
but reduce their education and earnings. This ``taste for partying''
is unobserved and omitted from earnings regressions, and can cause
the return to friendships to be downward biased. Conversely, social
individuals may prefer productive social activities (working together),
which improve their networks and yield higher earnings, biasing the
return to friendships upward.\footnote{Social skills are associated with large and growing returns in the
labor market (\citealp{Weidmann2021}; \citealp{Deming2017}).}

To estimate the causal returns to friendships, we construct an instrument
for the number of friends that exploits the fact that homophily (similarity
in traits) predicts friendship formation (\citealp{Boucher2015};
\citealp{Jackson2014}). Our instrument is the average absolute difference
in age between a student and the peers in her school and grade, which
we refer to as age distance. Because our baseline regression controls
for own age and the average age of individuals in the school-grade,
as well as grade and school fixed effects, age distance will vary
across individuals because the distribution of ages varies across
schools and grades. For example, a 13.5 year old with two peers aged
12.5 and 14.5 has an age distance of 1 while mean age is 13.5. If
the same 13.5 year old were instead in a group with a 13 year old
and a 14 year old, mean age would be the same, but age distance would
be smaller (0.5), resulting in more friends. This variation is similar
to what is used in \citet{Bifulco2011classmates} and \citet{carrell2018disruptivepeers},
who leverage variation across cohorts within schools to estimate the
effects of peer characteristics on outcomes.\footnote{These papers typically investigate the direct (reduced form) relationship
between peer traits and outcomes. Although we leverage similar (though
not identical) variation, we use this variation in peer characteristics
as an instrument for friendships -- we do not study the direct effect
of peer similarity on outcomes.} In our setting, there is also variation in the instrument \textit{within}
school-grade because age distance depends both on the mean age in
their group and the student's own age. Our identification strategy
is most similar to \citet{Fletcher2020} who investigate the effect
of friendships on education outcomes using similarity in race/ethnicity
and socioeconomic status as instruments for friendships. We investigate
the effect of friendships on earnings instead.

Education is also endogenous in these regressions because educational
and social investments are determined jointly in adolescence. Studies
with well-identified causal returns to education make use of large
data sets and typically exploit state-level variation in compulsory
schooling or the cost of schooling, such as distance to school (\citealp{Mountjoy2022})
or school openings (see \citet{card2001estimating} for a review and
\citet{Oreopolous2020} and \citet{Psacharopoulos2018} for more recent
examples). In our data, these instruments are weak predictors of education.

We propose a novel approach to estimating the returns to friendships
that does not require instruments for education. We rely on the findings
from the previous literature and assume that the causal returns to
a year of schooling range from 5 to 15\% (\citealp{Oreopolous2020};
\citealp{Psacharopoulos2018}). Under this assumption, and making
use of the homophily-based instrument for friendships, we derive bounds
for the returns to friendships.\footnote{We are grateful to Larry Katz for suggesting this approach.}

We find that the returns to having one more friend during adolescence
range between 7 and 14\%, similar to the returns to one more year
of schooling. Our identifying assumption is that age distance determines
earnings only through its effects on education and friendships, conditional
on own characteristics, mean peer characteristics, and grade and school
fixed effects. The results are robust to a number of checks, including
addressing the potential concern that parents sort into schools or
grades based on relative age. These instrumented returns to friendships
are larger than OLS estimates. Computations suggest that measurement
error can explain the discrepancy, consistent with \citet{Griffith2021}.
The result is also consistent with downward omitted variable bias:
the data show that preferences for activities like drinking are associated
with more friendships but lower GPA and earnings.

We contribute to the well-established literature examining peer effects
among adolescents and young adults (see the review by \citealp{Jeon2015}).\footnote{Identifying how the behaviors and characteristics of peers affect
individuals was first discussed by \citet{Manski1993}. A few early
papers (\citealp{BRAMOULLE2009identification}) attempted to use friendship
networks to instrument for peer outcomes and overcome the joint determination
of outcomes (the reflection problem). These papers assume that a friendship
network is exogenous or endogenous through unobserved group heterogeneity.} For example, having peers with good academic outcomes improves one's
academic outcomes (\citealp{Carrell2008,Carrell2009}; \citealp{SACERDOTE2011peer};
\citealp{Bifulco2011classmates}; \citealp{Denning2021}). Peer effects
in adolescence can also carry into adulthood. \citet{carrell2018disruptivepeers}
find that having disruptive peers in the classroom during adolescence
has deleterious effects on individuals' labor market outcomes as adults,
partly because disruptive peers lower test scores.

Our paper suggests a new mechanism by which peer characteristics might
operate: friendship formation. A few papers investigate the effects
of friendship networks on educational outcomes using data on friendship
nominations (\citealp{Babcock2008}; \citealp{LavySand2019}; \citealp{Fletcher2020}).
However, friendships may also matter for labor market outcomes, separately
from their effects on education attainment. Previous work documents
that the size and connectedness of an individual's social network
in adulthood can improve wages and job match quality (\citealp{granovetter1973strength};
\citealp{montgomery1991social}; \citealp{calvo2004effects}; \citealp{Cappellari2015};
\citealp{Dustmann2015}). In addition to the number of friends, the
types of friends one is associated with may also impact labor market
outcomes (\citealp{Chetty2022}). Although our results on this are
only suggestive (we do not have powerful instruments for different
types of friendships), we find that almost all types of friendships
have returns in the labor market, including friends from low SES backgrounds
and with weak connections. This suggests that the overall number of
friendships -- not just their type -- is a relevant factor in determining
earnings.

There are no papers we are aware of that estimate the \textit{causal}
returns to the number of friendships on labor market outcomes --
this is the main contribution of this paper. The closest paper to
ours is by \citet{conti2013popularity}, who estimate returns to high
school friendships on wages using data from the Wisconsin Longitudinal
Survey. Their estimation strategy corrects for non-classical measurement
error in the number of friendships due to under-sampling, but they
do not account for endogeneity in both education and friendships which
this paper addresses. 

Our second contribution is methodological. A strand of econometric
literature considers models with \textit{endogenous} networks (\citealp{goldsmith2013social};
\citealp{Hsieh2015network}; \citealp{Badev2021}; \citealp{johnsson2021peer};
\citealp{Auerbach2022}; \citealp{Griffith2022}; \citealp{Sheng2023}).
But these papers do not consider the joint determination of education
and friendships. As a result, they do not address how the endogeneity
in education complicates the identification of causal labor market
returns to friendships.

Our identification approach is motivated by the econometric literature
on partial identification, which is widely used as a remedy for identification
failure in various applications such as interval data \citep{Manski2002},
missing data \citep{Manski2003}, auctions \citep{Haile2003}, and
games with multiple equilibria \citep{Tamer2003}. We exploit the
idea of partial identification to resolve a new challenge where we
lack a good instrument for one of the endogenous regressors, but the
coefficient of that regressor is not of primary interest and can be
bounded. To our knowledge, this is the first paper that proposes a
partial identification approach to avoiding instrumenting for a secondary
endogenous regressor.

\section{Conceptual Framework: How are Friendships Formed in Adolescence?}

We start by summarizing a basic model of education and friendship
formation (Appendix \ref{app:Model}) and describe its implications
for the empirical analysis.\footnote{All figures and tables designated with a letter (e.g., \textquotedblleft A\textquotedblright ,
``B'') are shown in the Online Appendix.}

During adolescence, individuals decide how to allocate their time
between studying, socializing, and leisure. Studying increases educational
attainment, and socializing increases one's number of friends, both
of which have positive returns in the labor market. In deciding how
to allocate their time, individuals consider the returns to each activity,
which depend on their innate intelligence and social skills (Proposition
\ref{prop:endow}).

Because time spent investing in education and friends is determined
at the same time, both education and friendships are potentially endogenous
in a Mincer earnings equation and thus OLS estimates of their returns
may be biased. However, the sign of the bias in friendship returns
is not clear ex-ante (Proposition \ref{prop:ols}). On the one hand,
social skills may determine the number of friendships and have an
independent effect on earnings, causing an upwards bias in the returns
to friendships. On the other hand, partying and drinking may increase
friendships but negatively affect skill accumulation and labor market
outcomes, causing a downward bias in the returns to friendships.

To account for the endogeneity of friendships, we will exploit the
fact that an individual's accumulated social capital also depends
on the traits and decisions of their school-grade peers because the
production of friendships requires coordination with others -- you
cannot ``party alone''. Conditional on time spent socializing, individuals
that are more similar to each other are more likely to become friends.\footnote{In equilibrium this may not hold (see Proposition \ref{prop:homophily}).}

\section{Data}

We use the restricted-use National Longitudinal Study of Adolescent
to Adult Health (Add Health). The in-school sample is a complete census
of all students enrolled in a given school during the 1994--95 school
year. The in-school sample data include basic demographics as well
as friendship nominations. A random sample of the students interviewed
in school was selected for in-home interviews in Wave 1 during the
1994--95 school year (ages 12--20 years) and tracked over in subsequent
survey waves. In Wave 4 (which was conducted in 2008--09), respondents
were age 24 to 34, on average 29 years old. For this in-home sample,
we observe measures of endowments, investments, and cognitive and
social outcomes.

Our estimation sample is constructed from the in-home sample, but
we use information from the in-school sample to construct friendship
and homophily measures. We include 10,605 individuals with complete
data for gender, age, and race. Summary statistics for the estimation
sample are presented in Table \ref{tab:sum_stats}.

\textbf{Outcomes.} The main outcome of interest is total earnings
from wages or salary in the last year. If a respondent replied \textquotedblleft do
not know\textquotedblright{} to the earnings question, they were prompted
with twelve categories of earnings. We use the midpoint of the selected
range for these respondents (approximately 2\% of the sample). We
drop individuals who reported zero earnings (\textasciitilde 6\%),
which means they were unemployed the entire year. Individuals in our
sample made roughly \$38,000 in the previous year.

\textbf{Intermediate outcomes: friendships and education.} Our primary
measure of the number of friendships is a person's \emph{grade in-degree}.
For a given student, grade in-degree is the number of people \textit{within
the same school and grade} who nominate them as one of their friends
in Wave 1.\footnote{Individuals could list up to five nominations of each gender. }
In-degree has been widely used in the social network literature as
an objective measure of an individual's number of friendships because
it does not rely on self-reporting \citep{conti2013popularity}. Importantly,
we observe all nominations sent to students in the in-home sample
because the network data is derived from the in-school sample. Unlike
Add Health's measure of in-degree, we exclude nominations from students
in different grades: the majority of nominations (77\%) occur within
the same grade (on average, school in-degree is 4.4, whereas grade
in-degree is only 3.4 (Table \ref{tab:sum_stats})).

Education is measured by years of schooling. On average, individuals
in our data obtain almost 15 years of schooling. We also observe GPA,
a measure of how much students learned in school.

\textbf{Endowments}. We use self-reported extroversion, collected
in Wave 2, as the main measure of the social endowment of individuals.
Extroversion is one of the ``Big Five'' psychological traits. About
65\% of individuals report being extroverted.\footnote{The survey question is ``You are shy?'', and the the choices are
``strongly agree / agree / neither agree nor disagree / disagree
/ strongly disagree''. Individuals choosing last three categories
are defined as extrovert. Due to the survey design this measure is
missing for 26\% of individuals in the data. To maximize sample size,
we impute this measure and include a dummy for whether it is missing.} Consistent with existing evidence (\citealp{lenton2014personality}),
extroverts have larger earnings, and perhaps not surprisingly, more
friends (Columns 2 and 3 of Table \ref{tab:sum_stats}). Most interestingly,
they also have more years of schooling.

The Add Health Picture Vocabulary Test (AHPVT) score is our main measure
for cognitive endowments. This test, administered in Wave 1, is an
abbreviated version of the widely used Peabody Picture Vocabulary
test and measures verbal ability. While it is not an overall measure
of intellectual ability, it has a high correlation with other intelligence
tests (\citealp{Hodapp1999}; \citealp{Dunn2007}). For simplicity,
we refer to it as IQ. As expected, individuals with above median IQs
have greater earnings and education (more years of education and higher
GPA). Perhaps surprisingly, they also have more friends (Columns 4
and 5 of Table \ref{tab:sum_stats}).

\section{Empirical Strategy}

We now turn our attention to estimating the labor market returns to
friendships. We follow the previous literature and estimate the following
earnings equation (conditional on employment):
\[
Y_{i}=r_{e}E_{i}+r_{f}F_{i}+\beta'X_{i}+\gamma'\bar{X}_{gs}+\alpha_{g}+\lambda_{s}+\epsilon_{i},
\]
where the outcome of interest is the log annual earnings $Y_{i}$
in Wave 4 for a given individual $i$ (observed in grade $g$ and
school $s$ during Wave 1), $E_{i}$ stands for years of schooling,
and $F_{i}$ is a measure of $i$'s number of friends (such as in-degree).
We control for $i's$ characteristics $X_{i}$ (age in Wave 1, sex,
race, IQ, and extroversion) and mean characteristics in $i's$ school-grade
$\bar{X}_{gs}$ (mean age, fraction female, fraction white, mean IQ
and fraction extrovert). We control for grade fixed effects ($\alpha_{g}$)
to account for differences across grades within a school. We include
school fixed effects ($\lambda_{s}$) to control for unobserved school-level
characteristics that could be correlated with labor market outcomes
and potentially sort individuals into schools. Standard errors are
clustered at the school level.

The object of interest is the coefficient $r_{f}$, measuring the
returns to having one more friend. Proposition \ref{prop:ols} shows
that OLS estimates of $r_{f}$ are biased because education and friendships
are jointly determined. We take an instrumental variable approach
to overcome the endogeneity issue, which will also address classical
measurement error in number of friends.

We use homophily measures as instruments for friendships, following
the evidence that individuals that resemble each other are more likely
to become friends \citep{Jackson2008}. Unfortunately, the instruments
for education used in the literature (quarter of birth, distance to
school) are weak in our data. Instead, we propose a novel approach
to estimating the returns to friendships that does not require an
instrument for education.\footnote{The education literature shows that relative age within a classroom
also affects educational attainment (e.g., \citealp{black2011too}).
However, we experimented using homophily measures for education as
well as for in-degree and found that homophily instruments fail the
weak IV tests if we use them to predict both education and friendships.}

\subsection{\label{sec:bounds}Bounding the returns to friendships}

There is a substantial literature estimating the (causal) returns
to schooling in the United States using various approaches. While
estimates differ across studies and populations, causal estimates
of $r_{e}$ typically lie between 5 and 15\% \citep{card2001estimating,Psacharopoulos2018,Oreopolous2020}.
Instead of estimating $r_{e}$, we assume that $r_{e}$, while unknown,
lies in this range. Then by instrumenting for friendships only, we
derive upper and lower bounds for the (causal) returns to friendships.

Let $\theta=(r_{f},\beta',\gamma',\alpha',\lambda')'$ denote the
vector of parameters, where $\alpha=(\alpha_{g},\forall g)'$ is the
vector of grade fixed effects, and $\lambda=(\lambda_{s},\forall s)'$
is the vector of school fixed effects. Let $\tilde{X}_{i}$ denote
the vector of friendships $F_{i}$ and other covariates (individual
characteristics $X_{i}$, school-grade mean characteristics $\bar{X}_{gs}$,
and grade and school dummies), and $Z_{i}$ the vector that consists
of the instruments for $F_{i}$ (homophily measures) and the covariates.
Ideal instruments for friendships satisfy two conditions: (i) they
predict friendships $F_{i}$ ($\mathbb{E}[Z_{i}\tilde{X}'_{i}]$ has
full column rank); and (ii) they are excluded from the earnings equation
($\mathbb{E}[Z_{i}\epsilon_{i}]=0$). Instruments for friendships
can also predict education. The exclusion restriction is satisfied
if the instruments do not affect earnings except through friendships
and education. This exclusion restriction implies the moment condition
\begin{equation}
\mathbb{E}[Z_{i}(Y_{i}-r_{e}E_{i}-\theta'\tilde{X}_{i})]=0.\label{eq:moment}
\end{equation}
Suppose that $W$ is a positive definite weighting matrix. If the
education return $r_{e}$ were known, the parameter $\theta$ would
satisfy $\theta=\mathbb{E}[Q_{i}(Y_{i}-r_{e}E_{i})]$, where $Q_{i}$
denotes the vector $(\mathbb{E}[\tilde{X}_{i}Z'_{i}]W\mathbb{E}[Z_{i}\tilde{X}'_{i}])^{-1}\mathbb{E}[\tilde{X}_{i}Z'_{i}]WZ_{i}$.
Because the true value of $r_{e}$ lies between $r_{e}^{l}=0.05$
and $r_{e}^{u}=0.15$, the parameter $\theta$ is bounded between
$\mathbb{E}[Q_{i}(Y_{i}-r_{e}^{l}E_{i})]$ and $\mathbb{E}[Q_{i}(Y_{i}-r_{e}^{u}E_{i})]$.

In practice, we can estimate the bounds by setting $r_{e}$ at $r_{e}^{l}$
and $r_{e}^{u}$ and estimating a GMM regression of $Y_{i}-r_{e}E_{i}$
on friendships $F_{i}$ and other covariates, using $Z_{i}$ as the
instrument. For example, if we set $r_{e}=r_{e}^{l}$ and regard $Y_{i}-r_{e}^{l}E_{i}$
as the dependent variable, then the GMM estimator yields an estimator
for the bound $\mathbb{E}[Q_{i}(Y_{i}-r_{e}^{l}E_{i})]$. The bound
$\mathbb{E}[Q_{i}(Y_{i}-r_{e}^{u}E_{i})]$ can be estimated similarly.\footnote{To ensure that the upper and lower bounds are estimated using the
same weighting matrix, we use the GMM estimator that assumes homogeneous
errors, that is, we set $W=\mathbb{E}[Z_{i}Z'_{i}]^{-1}$. This estimator
is equivalent to 2SLS.} For each component of $\theta$, we then obtain a consistent estimator
for the upper (lower) bound by taking the maximum (minimum) of the
two estimates of the component.\footnote{\label{fn:bound}Whether the upper or lower bound of a component of
$\theta$ is achieved at $r_{e}^{l}$ or $r_{e}^{u}$ depends on the
sign of the corresponding component of $\mathbb{E}[Q_{i}E_{i}]$.
For example, let $Q_{i,1}$ denote the first component of $Q_{i}$.
Then $r_{f}$ has the upper bound $\mathbb{E}[Q_{i,1}(Y_{i}-r_{e}^{l}E_{i})]$
if $\mathbb{E}[Q_{i,1}E_{i}]\geq0$ and $\mathbb{E}[Q_{i,1}(Y_{i}-r_{e}^{u}E_{i})]$
if $\mathbb{E}[Q_{i,1}E_{i}]<0$. The lower bound of $r_{f}$ can
be derived by swapping $r_{e}^{l}$ and $r_{e}^{u}$. In practice,
the components of $\mathbb{E}[Q_{i}E_{i}]$ can be recovered by regressing
education $E_{i}$ on friendships $F_{i}$ and other covariates, using
$Z_{i}$ as the instrument. The sign of the coefficient on friendships
determines whether the upper bound of $r_{f}$ is achieved when $r_{e}$
is high or low. A positive friendship coefficient implies that the
upper (lower) bound of $r_{f}$ is achieved at low (high) $r_{e}$.}

\textbf{Inference.} We follow \citet{Imbens2004} to construct a confidence
interval for any true value of $\theta$ that lies between the bounds.
These confidence intervals are asymptotically valid regardless of
whether $\theta$ is point identified or partially identified. If
a component of $\mathbb{E}[Q_{i}E_{i}]$ is $0$, the upper and lower
bounds for the corresponding component of $\theta$ coincide, and
this component of $\theta$ is point identified. In general, the components
of $\mathbb{E}[Q_{i}E_{i}]$ are not equal to $0$, the bounds do
not coincide, and $\theta$ is only partially identified.

We can assess whether $r_{f}$ is point identified or not by regressing
education $E_{i}$ on friendships $F_{i}$ and other covariates, using
$Z_{i}$ as the instrument, as noted in footnote \ref{fn:bound}.
If the coefficient of friendships is not significantly different from
zero, then we cannot reject the null that $r_{f}$ is point identified
-- the exact returns to education are not important because $F_{i}$
and $E_{i}$ are not correlated. In this case, we would not need to
instrument for education in order to obtain an unbiased estimate of
$r_{f}$. In our data, the coefficient on friendships is significant
from zero -- we can reject the null that $r_{f}$ is point identified.\footnote{This also ensures that the estimated upper and lower bounds are jointly
asymptotically normal, as required by \citet[Assumption 1(i)]{Imbens2004}.}

To be specific about the confidence intervals, denote the upper and
lower bounds of $\theta$ by $\theta_{u}$ and $\theta_{l}$. Let
$\hat{\theta}_{u}$ and $\hat{\theta}_{l}$ be the estimators for
$\theta_{u}$ and $\theta_{l}$ and $se(\hat{\theta}_{u})$ and $se(\hat{\theta}_{l})$
the standard errors of the estimators.\footnote{In Section \ref{sec:age_dist} we also consider an instrument constructed
using estimates from a pairwise regression. In general, standard errors
in two-step estimators should account for the presence of first-step
estimators. However, we are in a special case where the moment condition
in equation (\ref{eq:moment}) has a zero derivative with respect
to the first-step parameters (coefficients in the pairwise regression).
Therefore, the first-step estimation has no impact on the standard
errors in the second step and standard calculation of standard errors
is valid \citep[p.2179]{NEWEY1994}.} \citet[Lemma 4]{Imbens2004} proposed a $(1-\alpha)$ confidence
interval for the true value of $\theta$ that takes the form of $[\hat{\theta}_{l}-c\cdot se(\hat{\theta}_{l}),\hat{\theta}_{u}+c\cdot se(\hat{\theta}_{u})]$,
where the critical value $c$ satisfies $\Phi(c+\frac{\hat{\theta}_{u}-\hat{\theta}_{l}}{\max\{se(\hat{\theta}_{u}),se(\hat{\theta}_{l})\}})-\Phi(-c)=1-\alpha$,
with $\Phi(\cdot)$ being the cdf of the standard normal distribution.\footnote{The confidence intervals proposed by \citet{Imbens2004} require a
superefficient estimator of the length of an identified set (Assumption
1(iii) in their paper). Nevertheless, \citet[Lemma 3]{Stoye2009}
provides a simple sufficient condition for the superefficiency to
hold: the estimated upper and lower bounds are almost surely ordered.
Our estimated bounds are ordered because we take the maximum/minimum
of the two estimates. Therefore, by \citet[Proposition 1]{Stoye2009}
the confidence intervals in \citet{Imbens2004} are appropriate.} In practice, each component of $\theta$ has a critical value $c$,
and we need to solve for it numerically. Using the critical values
for each component of $\theta$, we can construct confidence intervals
for each component of $\theta$.\footnote{The STATA code for the bound estimates and confidence intervals can
be found in Appendix \ref{app:Code}.}

\textbf{Advantage of bounding.} Provided that the true return to education
is between 5 and 15\%, our approach provides consistent bound estimates
and valid confidence intervals for the returns to friendships. This
approach is preferable to a simple calibration that assumes the return
to education is known and equal to a particular value (e.g., $r_{e}=10\%$),
which yields a biased estimator if the true return to education is
different from the presumed value and provides no confidence intervals.

\subsection{\label{sec:age_dist}Using age distance as an instrument}

\citet{McPherson2001} report the results of various studies documenting
that in many settings (including schools) individuals of the same
age are much more likely to be friends. Based on this evidence, we
compute age distance for each pair of individuals as the absolute
difference between their ages to use as an instrument.

The distance between individuals $i$ and $j$ is defined at the pair
level -- that is, between two students of the same school and grade.
To construct an instrumental variable for in-degree at the individual
level, we average the pairwise distances over all $j$ that are in
$i$'s school-grade. We call this measure ``age distance''. In Appendix
\ref{app:pairwise IV} we consider another aggregating approach. We
run a Probit regression of friendships among pairs of students on
the pairwise age distance and basic controls. We then use the predicted
in-degree, calculated by the sum of the predicted friendships, as
an instrument for in-degree.

To be valid our instrument must operate only through friendships and
education (the two endogenous variables). Although age distance also
potentially affects educational attainment (\citealp{black2011too}),
the returns to friendships are still (partially) identified because
we estimate the returns to friendships after subtracting off the impact
of education from log earnings. The instrument satisfies the exclusion
restriction so long as it is uncorrelated with the remaining unobservables.

\subsection{\label{sec:id_var}Identifying variation and identifying assumptions}

We use a similar set of controls and identifying variation as \citet{Murphy2020}
who study the effects of academic rank on academic outcomes, and \citet{Cicala2017}
who show that where an individual ranks within a given social distribution
determines their choice of friends, behaviors, and outcomes. Given
that we control for cohort-mean age and own age, this leaves variation
in age distance among students with the same age and cohort-mean age
for identification. 

Table \ref{tab:variation_dist} illustrates this variation for a simple
example. In the first cohort, students are aged 13, 13.5, and 14 (mean
age 13.5). The second cohort has three students aged 13.2, 13.5, and
13.8 (same mean age of 13.5). In both cohorts, there is a student
aged 13.5. However, the age distance for this student is smaller in
cohort 2 (0.3 years) compared to cohort 1 (0.5 years). In the absence
of coordination effects, our model predicts that the student in cohort
2 will have more friends than the student in cohort 1 because they
are closer in age to their peers. The greater variance in the distribution
of ages in cohort 1 compared to cohort 2 results in a greater age
distance in cohort 1 (0.67) than cohort 2 (0.4).

Following \citet{Bifulco2011classmates}, we document in Table \ref{tab:dist_resid}
that there is sufficient variation of this kind in our data after
accounting for the basic set of controls. The variation in age distance,
measured by the standard deviation (s.d. 0.435), is about halved by
the inclusion of individual age (s.d. 0.2). But 45\% of the original
variation (s.d. 0.2) remains after including the full set of controls,
regardless of whether we control for grade FE or mean age in the cohort.
This residual variation in age distance is significant, and larger
than the residual variation in \citet{Bifulco2011classmates}.

A key identifying assumption is that conditional on our basic controls,
age distance does not predict earnings except through its effects
on education and friendships (exclusion restriction). Because the
identifying variation in age distance is generated from differences
in the age distribution across cohorts, we must assume that individuals
do not sort into schools and grades based on the variance in age --
that is, parents do not care or know about age distance and do not
sort using this criterion. 

Our identification assumption is not violated if parents want to place
their child in a cohort where the child is older relative to their
classmates (the practice sometimes referred to as red-shirting). In
our previous example, both cohorts have the same mean age, which means
that parents would be indifferent between the two cohorts because
the child\textquoteright s age relative to the cohort mean will be
identical in both cohorts. However, age distance will be on average
smaller in cohort 2. Alternatively, parents may care about their child\textquoteright s
\textit{age rank} within a cohort. In our example, a 13.7 year old
child would still be indifferent between the two cohorts, because
they would be the second oldest in both cohorts. But this student
would be closer in age to their peers in cohort 2, and we expect they
would have more friends.

\subsubsection{\label{sec:exclusion}Empirical Support for the Exclusion Restriction}

While we cannot test the exclusion restriction directly, we conduct
a series of placebo tests using observable pre-determined characteristics
(parental education, parental and own nativity, religion, birthweight
and breastfeeding, height, disabilities, etc.): age distance should
not predict these variables conditional on our basic controls. We
select the variables using two criteria. First, they are mostly determined
early in life, before adolescent friendships are formed. Second, the
prior literature and our data suggest they determine earnings.\footnote{We check that these variables predict earnings by regressing log earnings
with and without the predetermined variables, conditional on the basic
covariates (Columns 1 and 2 of Table \ref{tab:iv_test_joint}). We
reject the null that these variables do not jointly predict earnings
(p-value < 0.001). The $R^{2}$ increases from 0.059 to 0.066 (a 12\%
increase) with the addition of these variables, confirming that they
predict earnings.} Table \ref{tab:iv_test} shows that age distance does not predict
any of the 14 variables we consider, conditional on basic controls:
the coefficients on age distance are statistically insignificant in
all regressions. Moreover, if we regress age distance on all of these
variables (Column 3 of Table \ref{tab:iv_test_joint}), we cannot
reject the null that they do not predict age distance, conditional
on the basic controls (the p-value of the joint F-test is 0.56).

These placebo tests show that many important pre-determined characteristics
that predict earnings are not statistically associated with age distance,
providing support for the exclusion restriction.

\subsection{First stage results: the effect of age distance on friendships}

Table \ref{tab:first_stage} documents that age distance has a negative
and statistically significant effect on in-degree. If the average
age distance between a student and the students in their school-grade
increases by one year, the student has one less friend (Column 1).
Increasing the age distance by one standard deviation (0.44) lowers
the number of friends by about 0.44 friends, a 13\% decline relative
to the mean (3.4). The F-statistic is about 94, well above the standard
thresholds required to rule out weak instruments and large enough
for standard t-statistics in the second stage to be valid \citep{Lee2022}.
These results are similar if we estimate the first stage using the
pairwise data where an observation is a potential link between two
students in the same school-grade (22 million potential links). Using
a Probit probability model, we show that pairwise age distance is
a strong predictor of friendships between two individuals, even after
controlling for the age of the nominated individual in the pair and
the mean age in the cohort (Column 1 of Table \ref{tab:pair_dist}).\footnote{We do not control for the age of the individual who nominates the
friendship because otherwise there is no variation in the pairwise
distance.}

\subsection{Monotonicity}

In settings with heterogeneous treatment effects, IV estimates are
interpretable only under monotonicity: the endogenous variable (number
of friends) must be weakly monotonic in the instrument (age distance)
for all individuals. From a theoretical standpoint, increasing an
individual's homophily level has ambiguous predictions on socializing
and the number of friends because of coordination effects (Proposition
\ref{prop:homophily}). For example, suppose that an adolescent is
placed in a cohort with adolescents that are older instead of being
of the same age. If older individuals socialize more than younger
individuals, then the adolescent may socialize more and accumulate
more friends in the older cohort because it is more productive to
socialize in this group, despite the larger age difference.

Although monotonicity remains an untestable assumption, we empirically
investigate whether it appears to be violated. Figure \ref{fig:first_stage}
shows a bin-scatter plot of friendships and age distance, in the raw
data (left plot) and with basic controls (right plot). The slope is
negative. A non-parametric plot (Figure \ref{fig:first_stage_nonpara})
further confirms that the relationship is weakly decreasing: age distance
does not increase the number of friendships.\footnote{There is a small portion where this is not true but this occurs for
very high values of age distance that are uncommon.} Column 2 of Table \ref{tab:first_stage} shows that age distance
squared has a negative but insignificant effect on friendships. In
the pairwise first stage, age distance squared is significant, so
the relationship is not exactly linear (Column 2 of Table \ref{tab:pair_dist}).
However, the coefficients still imply a monotonic relationship: these
negative coefficients imply that the curve is strictly decreasing
and concave.

We also investigate whether the effect of age distance is symmetric.
Column 3 of Table \ref{tab:first_stage} shows that it is not: it
is more detrimental --- from the point of view of making friends
--- to be young among older peers than it is to be older among young
ones (this is also true in the pairwise estimation, see Column 3 of
Table \ref{tab:pair_dist}). However, age distance is still negatively
correlated to friendships regardless of whether peers are older or
younger.

\citet{Angrist1995} discuss what is needed for the monotonicity assumption
to be met in the case of a continuous treatment. A testable implication
of monotonicity is that the CDFs of the treatment (in-degree) at different
levels of the instrument (age distance) should not cross. A visual
representation of this test is given in Figure \ref{fig:cdf_df}.
We plot the difference in the CDF of in-degree as age distance increases
by one unit.\footnote{Let $X$ denote in-degree and $Z$ age distance. We estimate the CDF
difference $\mathbb{E}[1\{X\leq x\}|Z=z']-\mathbb{E}[1\{X\leq x\}|Z=z]$
for all $x$ and $z'\geq z$ by regressing the indicator variable
$1\{X\leq x\}$ on $Z$. The coefficient of $Z$ provides an estimate
of the CDF difference by one unit increase in $Z$ \citep{Rose2021}.} The CDF differences are non-negative throughout the support, whether
we control for the basic covariates or not. A formal test of this
no-crossing condition is provided by \citet{Barrett2003}. We split
the data evenly into high and low values of age distance. The null
hypothesis is that the distribution of in-degree with high age distance
is first-order stochastically dominated by the distribution of in-degree
with low age distance.\footnote{Let $N_{H}$ and $N_{L}$ denote the number of individuals with age
distance higher and lower than the median in the sample. Let $F_{H}(x)$
and $F_{L}(x)$ denote the CDFs of in-degree for the subgroups with
high and low age distance. The null and alternative hypotheses are
$H_{0}:F_{H}(x)\geq F_{L}(x)$ for all $x$ and $H_{1}:F_{H}(x)<F_{L}(x)$
for some $x$. \citet{Barrett2003} propose the test statistic $\hat{S}=\left(\frac{N_{H}N_{L}}{N_{H}+N_{L}}\right)^{1/2}\sup_{x}(\hat{F}_{L}(x)-\hat{F}_{H}(x))$,
where $\hat{F}_{H}(\cdot)$ and $\hat{F}_{L}(\cdot)$ are estimators
of $F_{H}(\cdot)$ and $F_{L}(\cdot)$. They suggest that the p-value
can be computed by $\exp(-2(\hat{S})^{2})$.} The p-values for the raw and residual distributions are 0.927 and
0.993 respectively, suggesting that we cannot reject the null, further
supporting the monotonicity assumption.

\section{IV Results: The Causal Returns to Friendships}

We now turn our attention to estimating the causal returns to friendships.

\subsection{Reduced form results: homophily and earnings in adulthood}

Figure \ref{fig:earn_iv} shows the correlation between log earnings
and age distance in a bin-scatter plot. Greater age distance is associated
with lower earnings in the raw data (left plot), and with basic controls
(right plot). Conditional on basic controls, individuals in cohorts
with more dissimilar peers in terms of age have lower earnings as
adults: increasing age distance by one standard deviation (0.44) lowers
earnings by 7.5\% (Column 1 of Table \ref{tab:reduced_form}). This
relationship is statistically significant.

\subsection{The causal returns to friendships: IV results}

Table \ref{tab:second_stage_main} reports the estimated bounds on
the returns to adolescent friendships. The OLS estimates for in-degree
are reported in Column 1 for reference, where the return to one more
friend is 0.025. If in-degree is treated as endogenous but education
is not, the estimate for in-degree increases to 0.12 (Column 2). If
we take a calibration approach, where we assume the return to education
is 10\%, the returns to in-degree remain at 0.11 (Column 3).\footnote{This estimate is obtained by running a regression on in-degree, controlling
for covariates and instrumenting for in-degree, where the dependent
variable is given by log earnings minus 10\% times years of schooling.}

Our main specification, which allows for the returns to education
to vary anywhere from 5 to 15\%, bounds the returns to friendships
from 0.093 to 0.137 (Column 4). The confidence interval (CI) for the
returns does not include zero -- these estimates are also statistically
significant. The upper bound corresponds to the lower return to schooling,
and vice versa. This is because the correlation between in-degree
and schooling is positive (Figure \ref{fig:edu_friend} and Table
\ref{tab:cov_edu=000026indeg}): if we regress education on in-degree,
controlling for covariates and instrumenting for in-degree, the coefficient
on in-degree is 0.44 and statistically significant. The significant
coefficient also implies that the returns to friendships are not point
identified.

\textbf{Robustness.} If we use a Probit probability model to predict
links using the pairwise data and then use the predicted in-degree
as an instrument, the bounds are similar and range from 0.065 to 0.096
(Column 5), although the estimates are not significant.

The results are robust to using alternative measures of friendships
(Table \ref{tab:second_stage_measures}), including reciprocated friendships,
which suggests that we are not simply capturing the returns to popularity.

We also check if the results are robust to the inclusion of additional
controls (Table \ref{tab:second_stage_robust}). Column 1 reproduces
our preferred specification for reference and shows bounds of 0.093--0.137.
Our estimates are similar (and the CI does not include 0) if we control
for age rank (Column 2), predetermined individual-level controls (the
ones we used for the placebo tests in Section \ref{sec:exclusion},
Column 3), variables capturing current SES including parental income
and living with parents (Column 4), or their respective cohort-level
means (Column 5). The last two regressions are potentially problematic
because income (or living with parents) is endogenous (it is not truly
predetermined) but it is nevertheless reassuring that the results
are similar.

Table \ref{tab:second_stage_bully} investigates another potential
violation of the exclusion restriction. Perhaps in settings where
age distance is larger, there is greater bullying. Because bullying
can affect mental health \citep{Arseneault2010} which in turn affects
earnings, our instrument could affect earnings through this additional
channel. We test this by controlling for measures of social cohesion
in adolescence. These controls do not individually or jointly affect
the estimated bounds.

\textbf{Magnitudes}. The magnitude of the returns to friendships ranges
from 0.07 to 0.14 across specifications. Since recent estimates of
the returns to schooling are on the larger side (more than 10\%),
the more likely bound for the returns to friendships is the lower
bound, which hovers around 0.07--0.09. Thus an increase of one standard
deviation in the number of friends (3.2) increases earnings by 22\%
-- a large and economically significant return. By comparison, a
one standard deviation increase in years of schooling (2.1) would
increase earnings by 21\% so the effect size is similar.\footnote{The implied elasticity of earnings with respect to friendships is
0.24, whereas the elasticity of earnings with respect to education
is 1.5, assuming the return to education is 10\%.} While the return to in-degree seems large, it represents the returns
to having a friend during adolescence -- we are not measuring the
returns to one year of friendship but the returns to having a friend.
These friendships likely last many years, potentially into adulthood
(Section \ref{sec:discussion}).

Our bounds are larger than the point estimates in \citet{conti2013popularity}
of around 2\%. In addition to methodological differences (they do
not account for the endogeneity of education and friendships), they
study men who were seniors in high school in Wisconsin in 1957, whereas
the Add Health data is nationally representative and surveys students
in middle or high school in 1994--95, at a time when the returns
to education and other individual traits in the labor market are much
larger. They measure labor market outcomes 35 years later whereas
we observe them 15 years later. The returns to friendships in adolescence
may attenuate over time, yielding larger returns when respondents
are surveyed earlier in their careers.

\subsection{Why are OLS estimates downward biased?}

Our preferred bounds do not include the OLS estimate of 0.025.\footnote{However, the confidence interval for the returns to friendships {[}0.008,
0.224{]} includes the confidence interval for the OLS estimator {[}0.018,
0.031{]}, so we cannot reject the null that the IV and OLS estimates
are the same.} We explore two reasons why the estimated bounds are larger than the
OLS estimate: omitted variable bias and measurement error.\footnote{IV estimates might also exceed OLS estimates if treatment effects
are heterogeneous. OLS estimates in Table \ref{tab:subsample} suggest
that while there is some heterogeneity in the returns to friendships
across subpopulations, it is too small to explain the discrepancy
between OLS and IV estimates.}

\textbf{Omitted variable bias.} Omitted variable biases in our model
are in general ambiguous (Proposition \ref{prop:ols}). The sign of
the OLS bias is determined by the correlation between in-degree and
the error term $\epsilon_{i}$.\footnote{Strictly speaking, the bias is given by the covariance between in-degree
and $\epsilon_{i}$, multiplied by the inverse of a matrix $X'X$,
where $X$ consists of residualized education and in-degree. In our
sample this inverse matrix has positive diagonal elements and relatively
small off-diagonal elements (Table \ref{tab:cov_edu=000026indeg}).
Therefore, the sign of the bias is mostly determined by the correlation
between in-degree and $\epsilon_{i}$.} A downward (upward) bias arises if in-degree is positively correlated
with an unobserved factor that negatively (positively) affects earnings.

The data suggest possible omitted factors that could explain our findings:
partying and drinking. Alcohol consumption among adolescents is largely
motivated by its capacity to facilitate social interactions (\citealp{feldman1999alcohol};
\citealp{kuntsche2005young}), but it may have detrimental impacts
on other outcomes. We find that drinking alcohol increases friendships
but lowers GPA and the odds of working in cognitively demanding jobs
(Table \ref{tab:production_functions}).\footnote{We use data from the O{*}NET to construct indices of the extent to
which occupations require social or cognitive skills and match these
scores to a person's occupation. See Appendix \ref{app:ONET} for
details.} Drinking also increases depression and lowers self-reported health
in adulthood. However, time spent with friends is associated with
more friends and better health, lowers the odds of working in jobs
with high cognitive demands, but it does not lower GPA. Thus \textit{certain}
social behaviors, like drinking, increase friendships but lower cognitive
productivity and lower health, both of which likely affect wages.

Additional results suggest that OLS is downward biased. Table \ref{tab:ols_build}
shows that the addition of covariates in OLS regressions of earnings
on grade in-degree increases the estimated returns to friendships.
We control for education only (Column 1), and then we progressively
control endowments (IQ and extroversion - Column 2), personal demographics
(age, gender, race - Column 3), mean characteristics of one's peers
(Column 4), grade fixed effects (Column 5) and school fixed effects
(Column 6). The model with all the controls yields the largest coefficient.
This evidence suggests that individual characteristics and school
fixed effects capture preferences and environments that cause OLS
to be downward biased.

\textbf{Measurement error.} Classical measurement error in friendships
would attenuate the OLS coefficient. The direction and magnitude of
the OLS bias depend on the ratio of the covariance between in-degree
and the true measure, divided by the variance of in-degree: if the
ratio is smaller than one then OLS is attenuated \citep{Bound2001}.
Suppose that high-engagement friendships are the ``true'' number
of friends and grade in-degree is a poor proxy. What would the OLS
bias be in this case? In our sample, the covariance between (residualized)
grade in-degree and (residualized) high-engagement degree is 2.28.
The (residualized) variance of grade in-degree is 8.67 (Table \ref{tab:cov_network_measures}).
Their ratio is about 0.26, which suggests that a consistent estimate
could be 4 times larger than OLS, roughly 0.09, similar to our IV
estimates.

\textbf{Limitations}. Measurement error may also come from the fact
that individuals in the survey are only allowed to list up to five
friends of each gender, generating censoring -- a non-classical form
of measurement error. In this case our IV might be inconsistent --
an IV estimate is consistent only if the instrument is uncorrelated
with measurement error \citep{Bound2001}.

\section{\label{sec:discussion}Discussion: Which Friends Matter and How do
Friendships Affect Earnings?}

We end with an informal discussion of two important issues. First,
do all friendships yield positive returns in the labor market? Second,
what are the mechanisms by which adolescent friendships increase earnings?
While we cannot fully answer these questions in a causal manner, we
discuss some suggestive evidence.

Figure \ref{fig:friend_types} plots the non-parametric relationship
between in-degree and earnings, for different measures of in-degree
suggested by prior studies: same vs. opposite gender (\citealp{McDougall2007};
\citealp{Hall2011}), high vs. low engagement \citep{Gee2017strongties},
high vs. low SES (\citealp{LavySand2019}; \citealp{Fletcher2020};
\citealp{Chetty2022}), and disruptive vs. non-disruptive peers \citep{carrell2018disruptivepeers}.
The friendships that are expected to have higher returns indeed appear
to have higher returns: it pays off more to be friends with people
of the same gender, to have more close friends, to have more friends
who are not disruptive or come from higher SES families. However,
all friendships (except for disruptive ones) appear to have positive
returns, albeit smaller ones. Table \ref{tab:ols_types} confirms
these descriptive patterns using OLS. Unfortunately, our instruments
are not powerful enough to produce precise estimates of different
types of friendships so further work is needed in this area.\footnote{Although all the first stages are strong (Table \ref{tab:first_stage_types}),
we do not have sufficient variation to \textit{separately} identify
the effects of, e.g., male and female friendships.}

Why do friendships in adolescence matter for adult earnings? We do
not have estimates of the causal effect of education on potential
mediators, so we cannot estimate the ideal IV bounds. Instead, we
summarize the suggestive evidence from OLS regressions. Friendships
formed in adolescence are associated with higher likelihood of working
(Column 1 of Table \ref{tab:social_mechanism}), consistent with the
previous literature that friends help workers find jobs (\citealp{Dustmann2015}).
Adolescent friendships are associated with a lower likelihood of working
in repetitive occupations (Column 2) and a greater likelihood of working
in supervisory roles (though this relationship is not statistically
significant, Columns 3). Individuals with more friends are more likely
to be employed in occupations that require greater social skills,
which on average pay larger wages (Column 4). Individuals with more
friends appear to have greater social and management skills and broader
networks as adults. They have more friends in adulthood (Column 5)
and are more likely to be married (Column 6). Controlling for extroversion
in adolescence, those with more adolescent friends are more likely
to be extroverted in adulthood (Column 7). Individuals with more friends
also have greater GPAs (Columns 1 of Table \ref{tab:cognitive_mechanism}),
are less likely to get in trouble while they are in school (Columns
2 and 3), and ultimately are more likely to work in jobs that require
higher cognitive skills (Column 4). They are also less depressed and
healthier (Columns 5 and 6). Thus the returns to friendships in the
labor force do not operate uniquely through social skills and networks
-- friendships also help in the formation of other cognitive and
non-cognitive skills that are rewarded in the labor market.

\section{Conclusion}

We show that individuals that have more friends in adolescence have
higher earnings in adulthood, partly because they are employed in
higher paying occupations that require higher social and cognitive
skills, and partly because they turn into more socially connected
adults that are also more extroverted.

Our results confirm recent findings emphasizing the importance of
adolescence as a formative period for socio-emotional outcomes \citep{Jackson2020}
and particularly for friendships \citep{Denworth2020}. Our findings
also suggest that when students are more similar to each other they
are more likely to form friendships. Therefore, how students are allocated
in groups in schools has important long-term consequences. Re-structuring
classrooms to be more homogeneous in age would appear to benefit all
students involved, from the point of view of friendships and adult
earnings. An important direction for future research is to investigate
which other contexts and student compositions are most conducive to
the creation of friendships.

The importance of socialization during schooling years has further
implications for higher education policy. Many colleges and universities
face criticism for investing in infrastructure for non-academic, recreational
facilities that are believed to contribute to increasing tuition \citep{Jacob2018college}.
These investments are more reasonable in the presence of high returns
to socialization. Our results help rationalize why, for instance,
Greek fraternities and sororities persist, despite the fact that they
can detract from strictly academic endeavors. While partying may decrease
education, particularly if it is associated with drinking, it does
not necessarily decrease earnings. Overall if schools can promote
productive social activities (like working together), it might be
possible to improve both students' social connections and their educational
attainment.

Like recent work (\citealp{Xiang2019}; \citealp{Heckman2012}), our
findings suggest that paying excessive attention to traditional education
measures like test scores might lower the long-term outcomes and well-being
of individuals because they reduce investments in other important
forms of human capital. They also suggest that educational interventions
that are becoming common, such as remote learning, might deter from
social capital formation and result in lost lifetime earnings. In
contrast, interventions that target soft skills might have large returns,
partly through their effects on network formation. Indeed an emerging
literature shows that interventions targeting non-cognitive skills
among young adults can have large returns (\citealp{Katz2022}, \citealp{Heller2017}).
Our paper complements this research by documenting the importance
of having friends, separately from social skills.

There are many unanswered questions that remain. For example, we don't
know much about which environments best foster friendships while maintaining
academic performance. We provide some evidence that not all friendships
matter equally in the labor market but our evidence is only suggestive.
Finally, we only track the impact of adolescent friendships -- not
whether earlier friendships matter, how social networks evolve from
adolescence onward and how adult networks affect labor markets. These
important questions are left to future research.

\bibliographystyle{jpe}
\bibliography{ms.bbl}
\newpage{}
\begin{figure}
\centering\caption{\label{fig:returns_edu_friend}Log Earnings in Adulthood and Friendships
in Adolescence}

\includegraphics{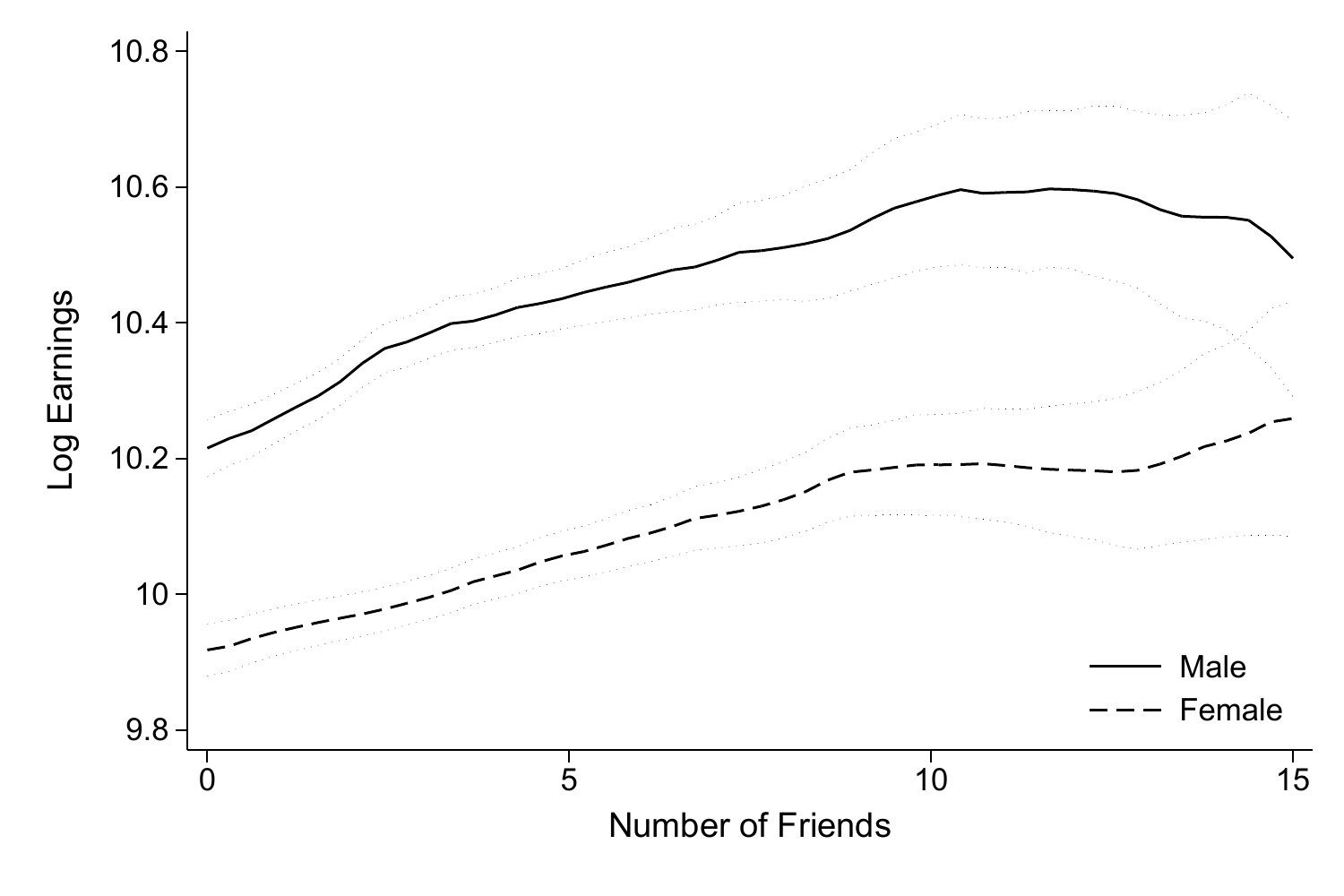}\caption*{Note: Add Health restricted-use data. Local polynomial nonparametric plots and 95\% confidence intervals of log earnings on number of friendships in adolescence measured by grade in-degree (see data section for explanations on how grade in-degree is defined). The series separate males and females with non-zero earnings.}
\end{figure}
\newpage{}
\begin{figure}
\centering

\caption{Homophily, Friendships, and Earnings: First Stage and Reduced Form}

\subfloat[\label{fig:first_stage}Friendships and Homophily in Age]{\includegraphics[scale=0.8]{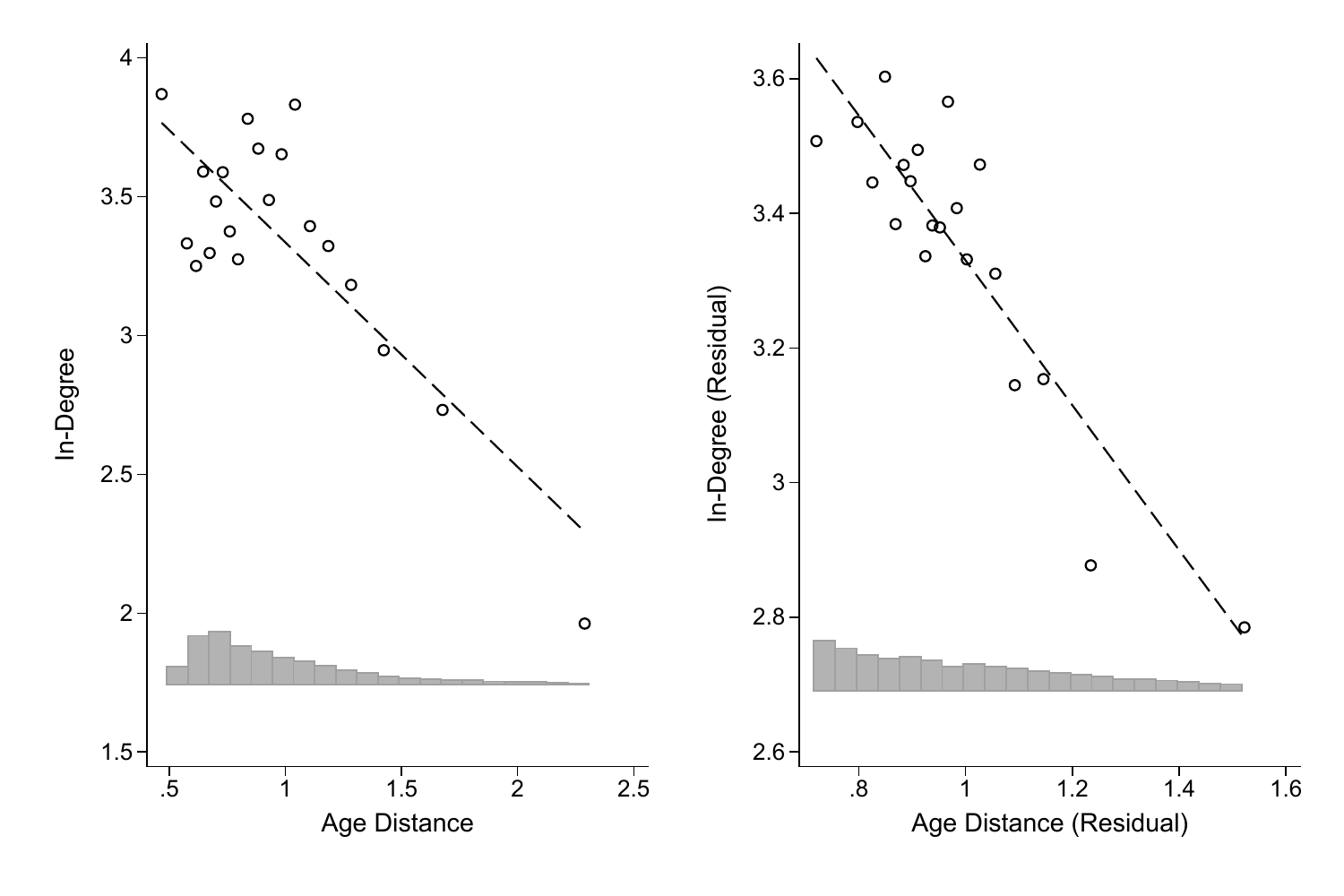}}

\subfloat[\label{fig:earn_iv}Log Earnings and Homophily in Age]{

\includegraphics[scale=0.8]{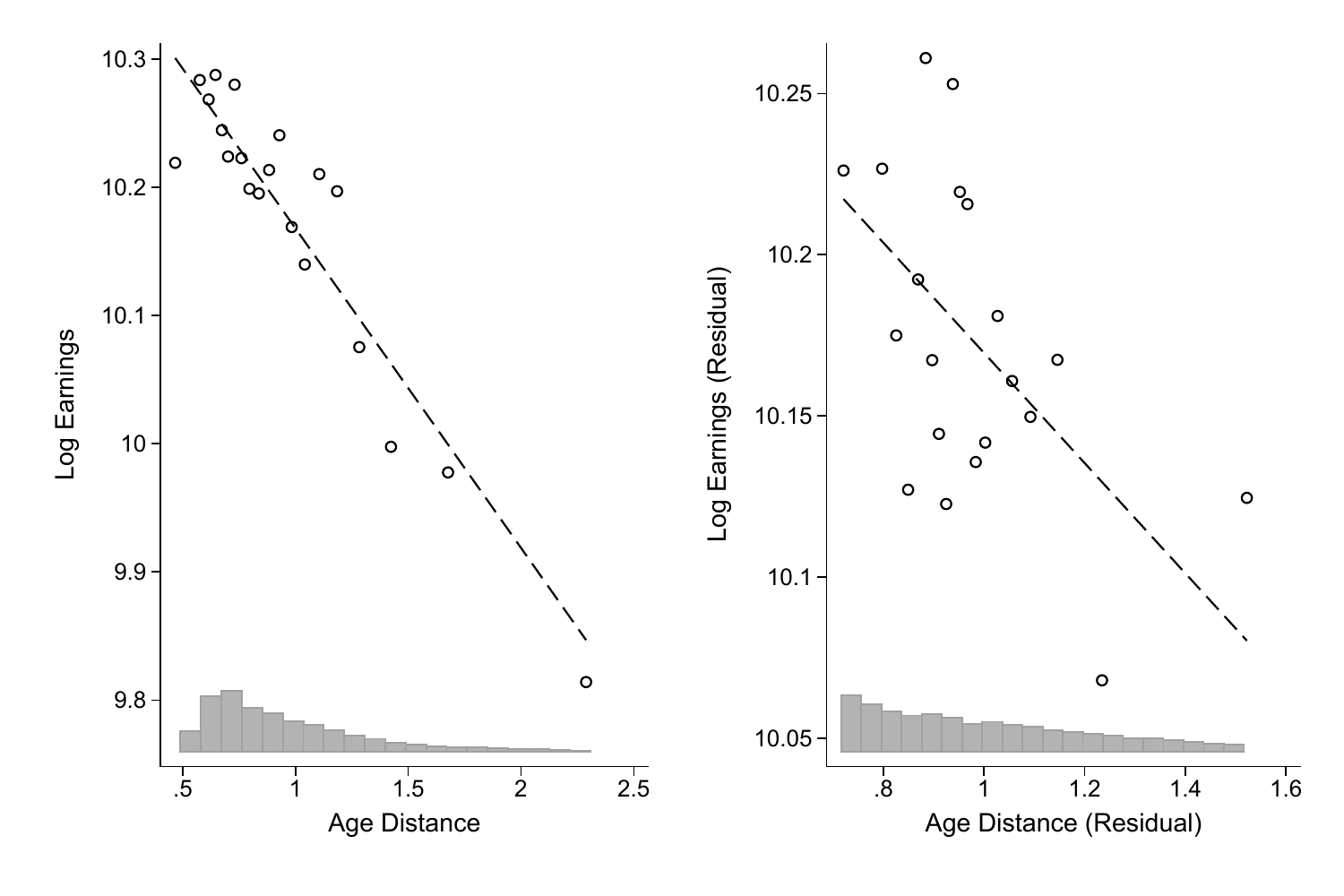}}

\caption*{Note: Add Health restricted-use data. Bin-scatter plots of grade in-degree (panel a) and log earnings (panel b) on age distance with the histogram of age distance (gray bars). In both panels, the left plot uses the original variables, and the right plot uses their residuals after removing individual characteristics (age, IQ, and indicators for whether the student is extrovert, female, and white), cohort-level characteristics (mean age, mean IQ, fraction extrovert, fraction female, and fraction white), grade fixed effects, and  school fixed effects.}
\end{figure}
\newpage{}
\begin{table}[t]
\centering \begin{threeparttable}\caption{\label{tab:sum_stats}Summary Statistics}
{\footnotesize{}\begin{tabular}{l*{5}{c}} \toprule                     &\multicolumn{1}{c}{(1)}&\multicolumn{1}{c}{(2)}&\multicolumn{1}{c}{(3)}&\multicolumn{1}{c}{(4)}&\multicolumn{1}{c}{(5)}\\                     &All Students&   Extrovert&         Shy&       Hi IQ&      Low IQ\\ \midrule Annual Earnings (W4)&     38073.2&     37753.8&     35312.9&     41988.6&     33578.1\\                     &   (46674.4)&   (43283.4)&   (38818.7)&   (45419.6)&   (45555.9)\\ \addlinespace School In-Degree (W1)&       4.359&       4.733&       3.928&       4.682&       3.998\\                     &     (3.729)&     (3.891)&     (3.386)&     (3.896)&     (3.507)\\ \addlinespace Grade In-Degree (W1)&       3.352&       3.688&       3.069&       3.645&       3.024\\                     &     (3.153)&     (3.347)&     (2.847)&     (3.271)&     (2.975)\\ \addlinespace Grade Out-Degree (W1)&       3.285&       3.484&       3.250&       3.638&       2.883\\                     &     (2.688)&     (2.747)&     (2.620)&     (2.694)&     (2.618)\\ \addlinespace Grade Reciprocated Degree (W1)&       1.388&       1.491&       1.311&       1.597&       1.147\\                     &     (1.564)&     (1.616)&     (1.469)&     (1.640)&     (1.429)\\ \addlinespace Grade Network Size (W1)&       5.248&       5.681&       5.009&       5.686&       4.760\\                     &     (3.843)&     (4.025)&     (3.555)&     (3.887)&     (3.722)\\ \addlinespace Years of Schooling (W4)&       14.76&       14.82&       14.73&       15.37&       14.08\\                     &     (2.125)&     (2.122)&     (2.110)&     (1.930)&     (2.126)\\ \addlinespace GPA (W2)            &       2.839&       2.850&       2.818&       2.997&       2.649\\                     &     (0.746)&     (0.739)&     (0.758)&     (0.735)&     (0.714)\\ \addlinespace IQ (W1)             &       101.5&       102.1&       100.6&       112.4&       89.30\\                     &     (13.95)&     (13.50)&     (14.59)&     (7.447)&     (9.408)\\ \addlinespace Extrovert (W2)      &       0.650&           1&           0&       0.663&       0.632\\                     &     (0.412)&         (0)&         (0)&     (0.407)&     (0.418)\\ \addlinespace Very Hard Study Effort (W1)&       0.381&       0.383&       0.419&       0.346&       0.419\\                     &     (0.486)&     (0.486)&     (0.494)&     (0.476)&     (0.494)\\ \addlinespace Some Study Effort (W1)&       0.511&       0.515&       0.483&       0.536&       0.484\\                     &     (0.500)&     (0.500)&     (0.500)&     (0.499)&     (0.500)\\ \addlinespace Frequently Hang with Friends (W1)&       0.387&       0.401&       0.344&       0.376&       0.399\\                     &     (0.487)&     (0.490)&     (0.475)&     (0.485)&     (0.490)\\ \addlinespace Sometimes Hang with Friends (W1)&       0.523&       0.514&       0.553&       0.541&       0.505\\                     &     (0.499)&     (0.500)&     (0.497)&     (0.498)&     (0.500)\\ \addlinespace Frequently Drink (W1)&       0.163&       0.158&       0.121&       0.174&       0.152\\                     &     (0.369)&     (0.365)&     (0.326)&     (0.379)&     (0.359)\\ \addlinespace Sometimes Drink (W1)&       0.299&       0.306&       0.265&       0.316&       0.283\\                     &     (0.458)&     (0.461)&     (0.441)&     (0.465)&     (0.450)\\ \addlinespace Age (W1)            &       16.06&       15.70&       15.83&       16.04&       16.05\\                     &     (1.670)&     (1.534)&     (1.546)&     (1.619)&     (1.716)\\ \addlinespace Age Distance (W1)   &       0.980&       0.964&       0.984&       0.903&       1.061\\                     &     (0.435)&     (0.421)&     (0.438)&     (0.341)&     (0.497)\\ \midrule Observations        &      10,605&       5,124&       2,765&       5,384&       4,741\\ \bottomrule \end{tabular} }{\footnotesize\par}

\begin{tablenotes}[flushleft]\footnotesize \item 
Note: Add Health restricted-use data. Standard deviations in parentheses. W1 stands for Wave I, etc. Due to missing values in IQ and Extrovert, the numbers of observations in Columns 2 and 3 do not sum up to that in Column 1, and the same for Columns 4 and 5.
\end{tablenotes}\end{threeparttable}
\end{table}
\newpage{}
\begin{table}
\centering

\caption{\label{tab:first_stage}What Predicts Friendships in Adolescence?
First-Stage Results}
\begin{threeparttable}{ \def\sym#1{\ifmmode^{#1}\else\(^{#1}\)\fi} \begin{tabular}{l*{3}{c}} \toprule                     &\multicolumn{1}{c}{(1)}&\multicolumn{1}{c}{(2)}&\multicolumn{1}{c}{(3)}\\                     &\multicolumn{1}{c}{In-Degree}&\multicolumn{1}{c}{In-Degree}&\multicolumn{1}{c}{In-Degree}\\ \midrule Age Distance        &     -1.0760\sym{***}&     -0.7023\sym{**} &                     \\                     &    (0.1111)         &    (0.2767)         &                     \\ Age Distance Squared&                     &     -0.1120         &                     \\                     &                     &    (0.0721)         &                     \\ Age Distance to Older Peers&                     &                     &     -2.0300\sym{***}\\                     &                     &                     &    (0.2822)         \\ Age Distance to Young Peers&                     &                     &     -0.2571         \\                     &                     &                     &    (0.3130)         \\ Age                 &      0.2895\sym{***}&      0.2534\sym{***}&     -0.5636\sym{*}  \\                     &    (0.0841)         &    (0.0813)         &    (0.3046)         \\ Mean Age            &     -0.3538\sym{*}  &     -0.3669\sym{*}  &      0.0784         \\                     &    (0.1982)         &    (0.2010)         &    (0.2314)         \\ \midrule F-stat of Homophily Measures&      93.846         &      49.464         &      72.139         \\ P-value             &       0.000         &       0.000         &       0.000         \\ $ R^{2}$            &       0.038         &       0.038         &       0.039         \\ Observations        &      10,605         &      10,605         &      10,605         \\ \bottomrule \end{tabular} }

\begin{tablenotes}[flushleft]\footnotesize \item 
Note: Add Health restricted-use data. In-degree refers to the number of friendships nominated by other students in the same school-grade. Age distance refers to the average age distance to other students in the same school-grade. Age distance to older (younger) peers refers to the average age distance to other students in the same school-grade who are older (younger) than the respondent. Additional covariates include individual characteristics (IQ and indicators for whether the student is extrovert, female, and white), cohort-level characteristics (mean IQ, fraction extrovert, fraction female, and fraction white), grade fixed effects, and  school fixed effects. Standard errors in parentheses are clustered at the school level.
\item * p < 0.10, ** p < 0.05, *** p < 0.01.
\end{tablenotes}\end{threeparttable}
\end{table}
\newpage\begin{landscape}
\begin{table}[t]
\centering \begin{threeparttable}\caption{\label{tab:second_stage_main}Labor Market Returns to Friendships:
Main Results}
\begin{tabular}{lccccc}
\toprule 
 & (1) & (2) & (3) & (4) & (5)\tabularnewline
 & OLS & IV & IV & IV & IV\tabularnewline
\midrule 
In-Degree & 0.0245 & 0.1242 & 0.1146 & {[}0.0926, 0.1367{]} & {[}0.0650, 0.0956{]}\tabularnewline
95\% CI & {[}0.0177, 0.0312{]} & {[}0.0130, 0.2354{]} & {[}0.0155, 0.2138{]} & {[}0.0075, 0.2242{]} & {[}-0.0308, 0.1936{]}\tabularnewline
\midrule
First Stage for In-Degree &  & Aggregate & Aggregate & Aggregate & Pairwise Probit\tabularnewline
Education Endogenous & N & N & Y & Y & Y\tabularnewline
Education Returns & 10.30\% & 7.83\% & 10\% & 5-15\% & 5-15\%\tabularnewline
Observations & 10,605 & 10,605 & 10,605 & 10,605 & 10,605\tabularnewline
\bottomrule
\end{tabular}

\begin{tablenotes}[flushleft]\footnotesize \item 
Note: Add Health restricted-use data. The dependent variable is the log annual earnings. In-degree refers to the number of friendships nominated by other students in the same school-grade. Column 1 presents OLS estimates for the returns to in-degree. Column 2 presents estimates where in-degree is endogenous, but education is not (estimated returns to education are reported in the second to last row). Column 3 presents estimates where in-degree is endogenous and the returns to education is assumed to be 10\%. Columns 4-5 present estimates where in-degree is endogenous and the returns to education are bounded between 5 and 15\%. All specifications include individual characteristics (age, IQ, and indicators for whether the student is extrovert, female, and white), cohort-level characteristics (mean age, mean IQ, fraction extrovert, fraction female, and fraction white), grade fixed effects, and school fixed effects. The confidence intervals are constructed based on standard errors clustered at the school level.
\end{tablenotes} \end{threeparttable}
\end{table}
\newpage{}
\begin{table}[t]
\centering \begin{threeparttable}\caption{\label{tab:second_stage_robust}Labor Market Returns to Friendships:
Robustness Checks}
\begin{tabular}{lccccc}
\toprule 
 & (1) & (2) & (3) & (4) & (5)\tabularnewline
 & IV & IV & IV & IV & IV\tabularnewline
\midrule 
In-Degree & {[}0.0926, 0.1367{]} & {[}0.0908, 0.1347{]} & {[}0.0952, 0.1356{]} & {[}0.0951, 0.1355{]} & {[}0.0694, 0.1157{]}\tabularnewline
95\% CI & {[}0.0075, 0.2242{]} & {[}0.0068, 0.2212{]} & {[}0.0002, 0.2333{]} & {[}0.0030, 0.2303{]} & {[}0.0033, 0.1841{]}\tabularnewline
\midrule
Additional Individual Controls & N & Age Rank & Pre Controls & SES Controls & N\tabularnewline
Additional Cohort Mean Controls & N & N & N & N & SES Controls\tabularnewline
Observations & 10,605 & 10,605 & 10,605 & 10,605 & 10,605\tabularnewline
\bottomrule
\end{tabular}

\begin{tablenotes}[flushleft]\footnotesize \item  
Note: Add Health restricted-use data. The dependent variable is the log annual earnings. In-degree refers to the number of friendships nominated by other students in the same school-grade. All specifications include individual characteristics (age, IQ, and indicators for whether the student is extrovert, female, and white), cohort-level characteristics (mean age, mean IQ, fraction extrovert, fraction female, and fraction white), grade fixed effects, and school fixed effects. Column 2 controls for age rank, which refers to the ranking of the respondent's age in the school-grade. Column 3 controls for additional individual-level predetermined covariates, including number of siblings, mother's age at the student's birth, birth weight, height, and indicators for whether the mother was born in US, the father was born in US, the parents are Catholic, Baptist, the student was born in US, was breastfed, is mentally retarded, and has diabilibity. Column 4 controls for additional individual-level SES controls, including family income, mother's years of schooling, father's years of schooling, and indicators for whether the mother lives in the household and whether the father lives in the household. Column 5 controls for cohort-level means of the additional SES controls in Column 4. The instrument is age distance. The confidence intervals are constructed based on standard errors clustered at the school level. 
\end{tablenotes} \end{threeparttable}
\end{table}
\end{landscape}

\pagestyle{appendix}\newpage{}

\appendix
\begin{center}
	{\LARGE{}Online Appendix to}\\[1cm]{\LARGE{}Party On: The Labor Market
		Returns to Social}\\[0.25cm]{\LARGE{}Networks In Adolescence}\\[1cm]
	{\Large{}Adriana Lleras-Muney, Matthew Miller, Shuyang Sheng, and}\\[0.25cm]
	{\Large{}Veronica Sovero }\\[1cm]\setcounter{page}{1} 
	\numberwithin{figure}{section} \numberwithin{table}{section}
	\numberwithin{equation}{section} \numberwithin{assumption}{section}
	\par\end{center}

\section{\label{app:Additional-Tabs-Figs}Additional Figures and Tables}

\begin{figure}[H]
	\centering
	
	\caption{\label{fig:first_stage_nonpara}Friendships and Homophily in Age:
		Nonparametric Plots}
	\includegraphics[scale=0.99]{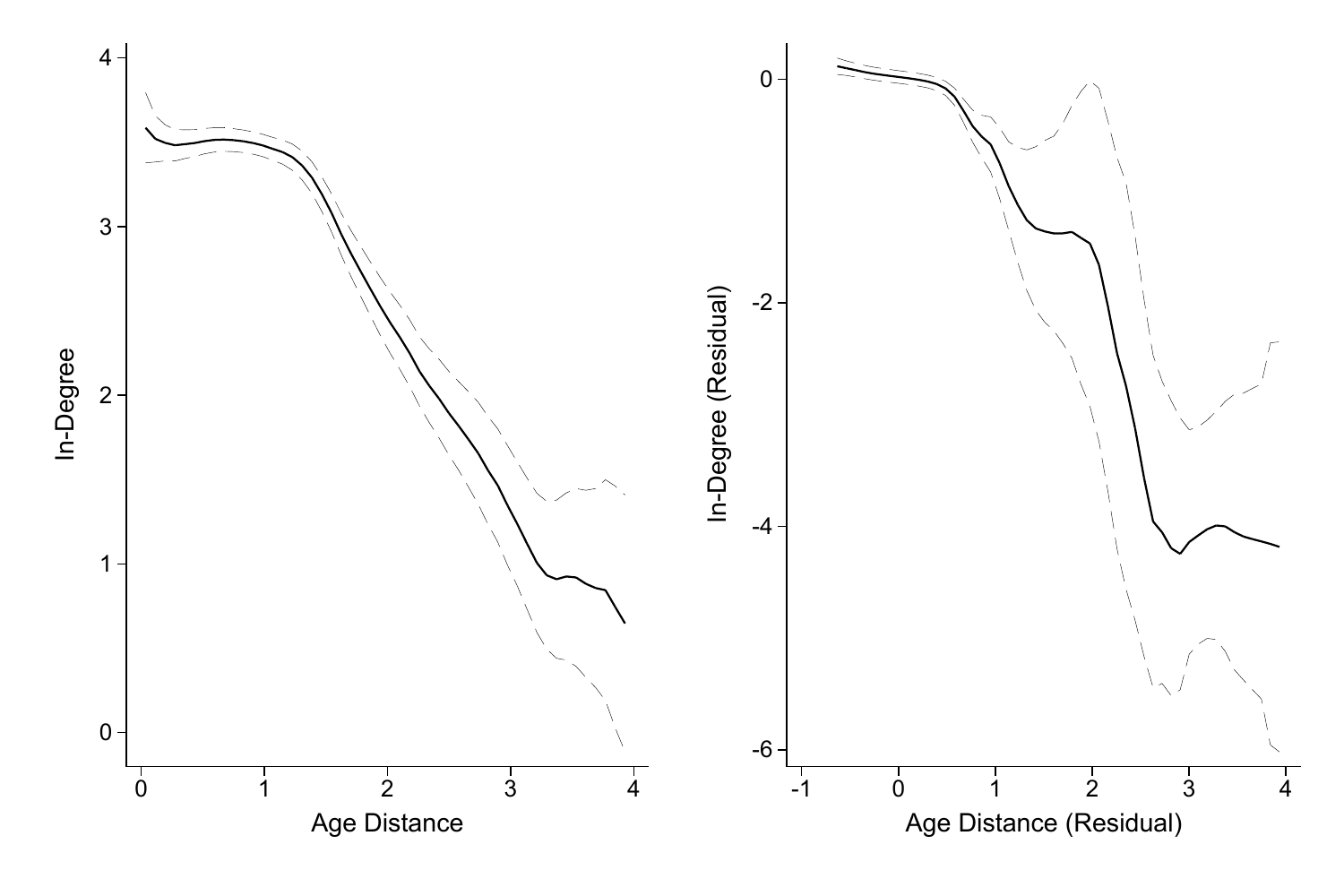}
	
	\caption*{Note: Add Health restricted-use data. Local polynomial nonparametric plots and 95\% confidence intervals of grade in-degree on age distance. The left plot uses the original variables. The right plot uses their residuals after removing individual characteristics (age, IQ, and indicators for whether the student is extrovert, female, and white), cohort-level characteristics (mean age, mean IQ, fraction extrovert, fraction female, and fraction white), grade fixed effects, and  school fixed effects.}
\end{figure}
\newpage{}
\begin{figure}
	\centering\caption{\label{fig:cdf_df}Difference in Friendship CDF by Age Distance}
	\includegraphics{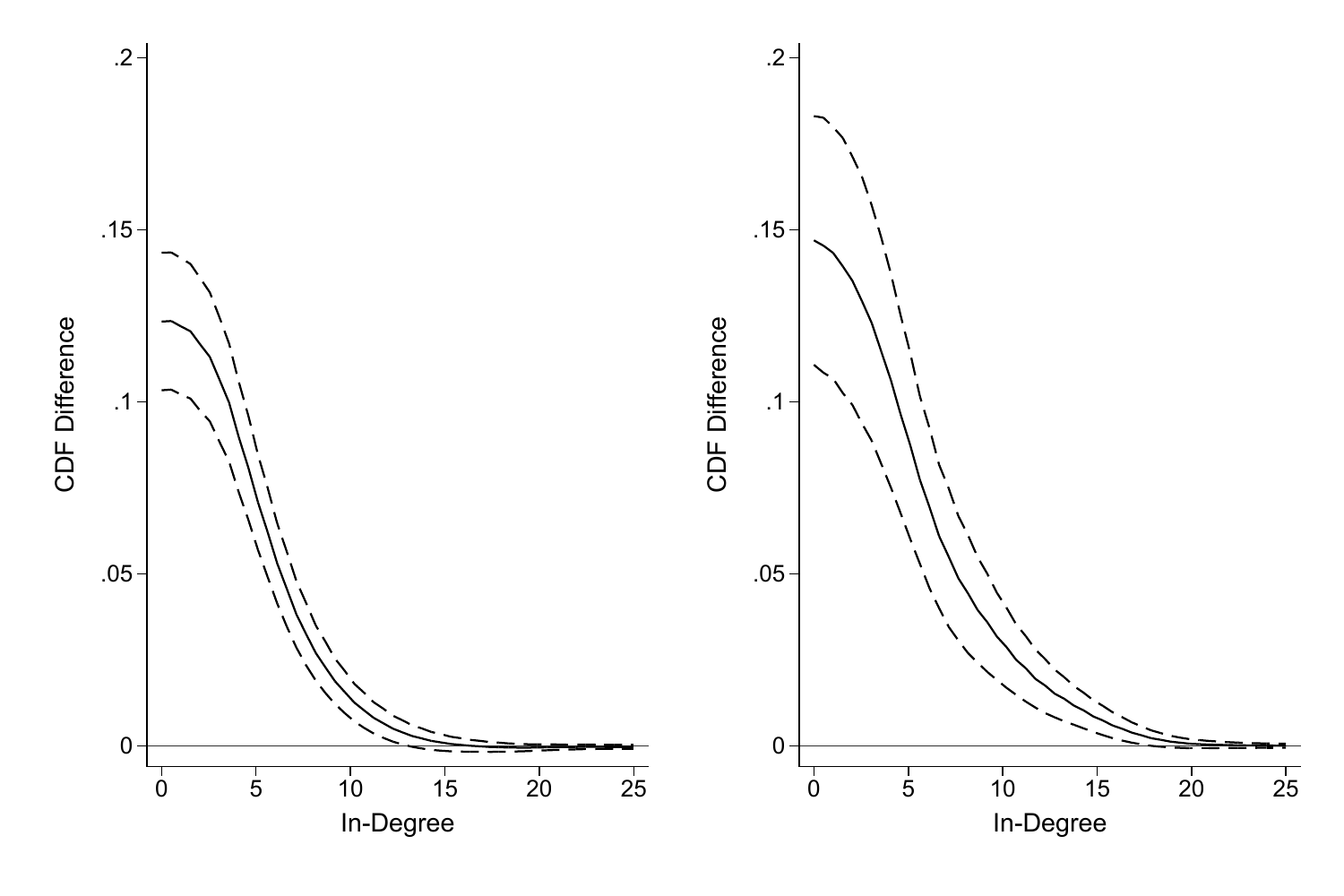}
	
	\caption*{Note: Add Health restricted-use data. Estimated difference in the CDF of grade in-degree as age distance increases by one unit. Dashed lines are 95\% confidence intervals. To be specific about the estimation procedure, let $X$ represent grade in-degree and $Z$ age distance. We estimate the CDF difference $\mathbb{E}[1\{X\leq x\}|Z=z']-\mathbb{E}[1\{X\leq x\}|Z=z]$ for all $x$ and $z'\geq z$ by regressing the indicator variable $1\{X\leq x\}$ on $Z$. The coefficient of $Z$ provides an estimate of the CDF difference by one unit increase in $Z$. The left plot does not include any covariates. The right plot controls for individual characteristics (age, IQ, and indicators for whether the student is extrovert, female, and white), cohort-level characteristics (mean age, mean IQ, fraction extrovert, fraction female, and fraction white), grade fixed effects, and  school fixed effects.}
\end{figure}
\newpage{}
\begin{figure}
	\centering\caption{\label{fig:edu_friend}Education and Friendships}
	\includegraphics{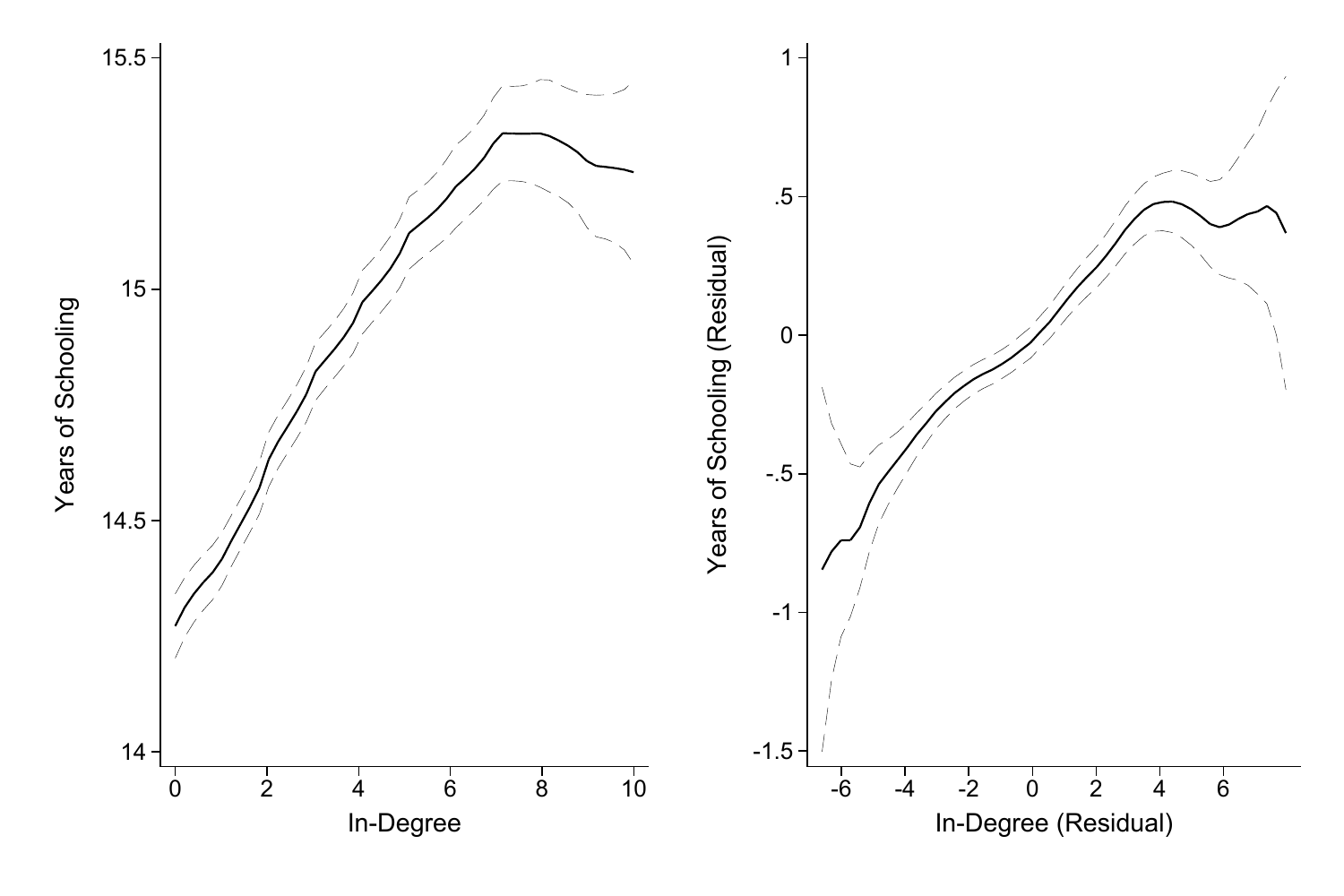}\caption*{Note: Add Health restricted-use data. Local polynomial nonparametric plots and 95\% confidence intervals of years of schooling on grade in-degree. The left plot uses the original variables. The right plot uses their residuals after removing individual characteristics (age, IQ, and indicators for whether the student is extrovert, female and white), cohort-level characteristics (mean age, mean IQ, fraction extrovert, fraction female, and fraction white), grade fixed effects, and school fixed effects.}
\end{figure}

\newpage\begin{landscape}
	\begin{figure}
		\centering\caption{\label{fig:friend_types}Log Earnings and Friendship Measures}
		
		\includegraphics[scale=0.7]{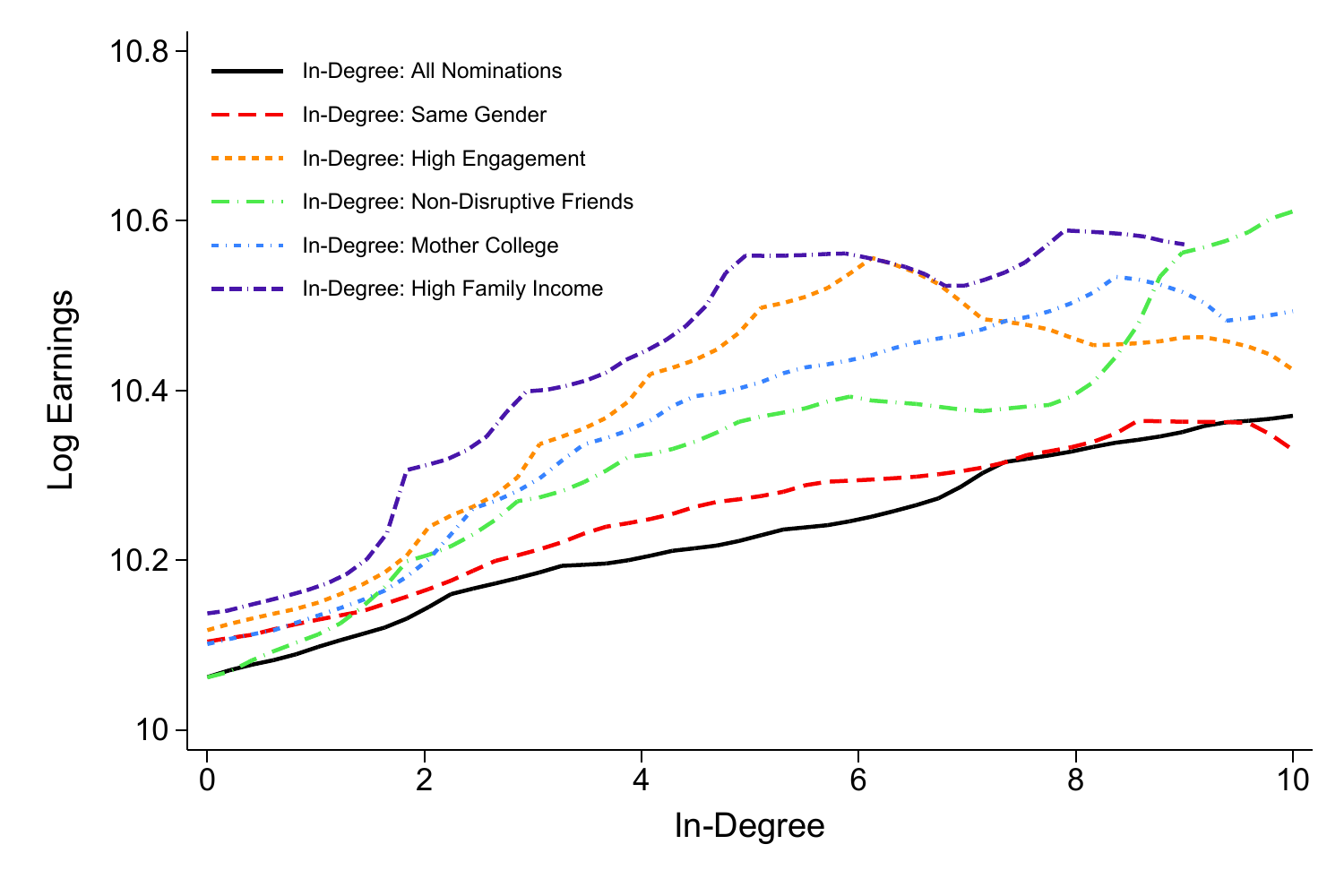}\includegraphics[scale=0.7]{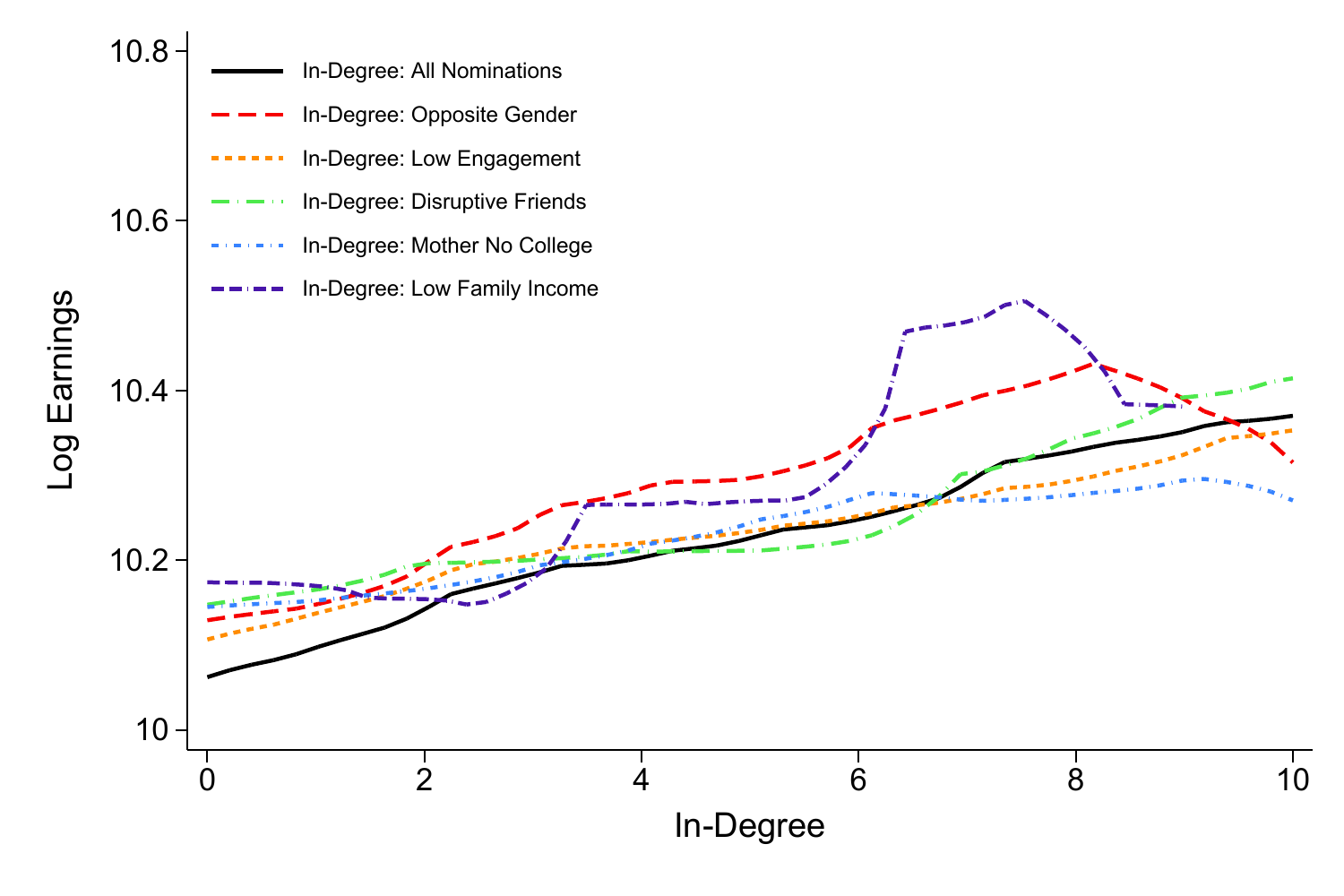}
		
		\caption*{Note: Add Health restricted-use data. Local polynomial nonparametric plots of log earnings on various network measures. In both the left and right figures, the solid line includes all the friendships nominated by students in the same school-grade. The red long-dashed lines restrict to friendships nominated by students in the same (opposite) gender. The orange short-dashed lines restrict to high(low)-engagement nominations (nominations that the nominator reports doing at least three (at most two) out of the five listed activities with the nominated friend). The green long-dash-dotted lines restrict to friendships nominated by (non-)disruptive students.  The blue short-dash-dotted lines restrict to friendships nominated by students with mothers who are (not) college educated.  The purple double-dash-dotted lines restrict to friendships nominated by students from high(low)-income families (above (below) median).}
	\end{figure}
	\newpage{}
	\begin{table}
		\centering \begin{threeparttable}\caption{\label{tab:variation_dist}Variation in Age Distance: An Illustrating
				Example}
			
			\begin{tabular}{cccccc}
				\toprule 
				Cohort & Person & Age & Cohort-Mean Age & Age Distance & Cohort-Mean Age Distance\tabularnewline
				\midrule 
				Cohort 1 & 1 & 14 & 13.5 & 0.75 & 0.67\tabularnewline
				& 2 & 13.5 & 13.5 & 0.5 & 0.67\tabularnewline
				& 3 & 13 & 13.5 & 0.75 & 0.67\tabularnewline
				&  &  &  &  & \tabularnewline
				Cohort 2 & 1 & 13.8 & 13.5 & 0.45 & 0.4\tabularnewline
				& 2 & 13.5 & 13.5 & 0.3 & 0.4\tabularnewline
				& 3 & 13.2 & 13.5 & 0.45 & 0.4\tabularnewline
				\bottomrule
			\end{tabular}
			
			\begin{tablenotes}[flushleft]\footnotesize \item 
				Note: We compute the average distance of a person to all their peers in their cohort excluding themselves. For example, for person 1 in cohort 1, their distance to person 2 is 0.5 and to person 3 is 1, so the average distance is 0.75.
		\end{tablenotes}\end{threeparttable}
	\end{table}
\end{landscape}\newpage{}
\begin{table}
	\centering \begin{threeparttable}\caption{\label{tab:dist_resid}Sample Variation in Age Distance}
		
		{ \def\sym#1{\ifmmode^{#1}\else\(^{#1}\)\fi} \begin{tabular}{l*{1}{c}} \toprule                                 &          SD         \\ \midrule Age Distance                    &       0.435         \\                                 &                     \\ Residual After School FE        &       0.400         \\                                 &                     \\ Residual After School FE + Mean Age&       0.400         \\                                 &                     \\ Residual After School FE + Mean Age + Age&       0.205         \\                                 &                     \\ Residual After School FE + Full Controls&       0.202         \\                                 &                     \\ Residual After School FE + Grade FE&       0.399         \\                                 &                     \\ Residual After School FE + Grade FE + Age&       0.214         \\                                 &                     \\ Residual After School FE + Grade FE + Full Controls&       0.200         \\                                 &                     \\ \midrule Observations                    &      10,605         \\ \bottomrule \end{tabular} }
		
		\begin{tablenotes}[flushleft]\footnotesize \item 
			Note: Add Health restricted-use data. Standard deviations in the residuals of age distance after removing a variety of controls. Mean Age refers to the average age of all the students in a school-grade. Full Controls include individual characteristics (age, IQ, and indicators for whether the student is extrovert, female, and white) and cohort-level characteristics (mean age, mean IQ, fraction extrovert, fraction female, and fraction white).
	\end{tablenotes}\end{threeparttable}
\end{table}
\newpage{}
\begin{table}
	\centering\caption{\label{tab:iv_test}Testing for Instrument Validity: Individual Tests}
	\begin{threeparttable}
		
		{ \def\sym#1{\ifmmode^{#1}\else\(^{#1}\)\fi} \begin{tabular}{l*{1}{c}} \toprule                     &Age Distance         \\ \midrule \textit{The Dependent Variable is}&                     \\ Mother's Years of Schooling&     -0.0642         \\                     &    (0.1437)         \\ Father's Years of Schooling&     -0.0175         \\                     &    (0.1009)         \\ Mother Born in US   &     -0.0119         \\                     &    (0.0260)         \\ Father Born in US   &     -0.0224         \\                     &    (0.0138)         \\ Catholic            &      0.0114         \\                     &    (0.0176)         \\ Baptist             &     -0.0256         \\                     &    (0.0178)         \\ Mother's Age at Respondent's Birth&     -0.1676         \\                     &    (0.2476)         \\ Number of Siblings  &      0.0999         \\                     &    (0.0961)         \\ Born in US          &     -0.0041         \\                     &    (0.0233)         \\ Birth Weight        &      0.0513         \\                     &    (0.0555)         \\ Breastfed           &      0.0221         \\                     &    (0.0198)         \\ Height              &     -0.1614         \\                     &    (0.1616)         \\ Mentally Retarded   &     -0.0000         \\                     &    (0.0019)         \\ Disability          &      0.0005         \\                     &    (0.0029)         \\ \midrule Observations        &      10,605         \\ \bottomrule \end{tabular} }
		
		\begin{tablenotes}[flushleft]\footnotesize \item
			Note: Add Health restricted-use data. Each row presents an OLS regression of the specified dependent variable on age distance, controlling for individual characteristics (age, IQ, and indicators for whether the student is extrovert, female and white), cohort-level characteristics (mean age, mean IQ, fraction extrovert, fraction female, and fraction white), grade fixed effects, and school fixed effects. Standard errors in parentheses are clustered at the school level.
			\item * p < 0.10, ** p < 0.05, *** p < 0.01.
	\end{tablenotes} \end{threeparttable}
\end{table}

\newpage{}
\begin{table}
	\centering \begin{threeparttable}\caption{\label{tab:iv_test_joint}Testing for Instrument Validity: Joint Tests}
		{ \def\sym#1{\ifmmode^{#1}\else\(^{#1}\)\fi} \begin{tabular}{l*{3}{c}} \toprule                     &\multicolumn{1}{c}{(1)}&\multicolumn{1}{c}{(2)}&\multicolumn{1}{c}{(3)}\\                     &\multicolumn{1}{c}{Log Earnings}&\multicolumn{1}{c}{Log Earnings}&\multicolumn{1}{c}{Age Distance}\\ \midrule Mother's Years of Schooling&                     &      0.0143\sym{***}&     -0.0004         \\                     &                     &    (0.0046)         &    (0.0013)         \\ Father's Years of Schooling&                     &      0.0138\sym{**} &      0.0002         \\                     &                     &    (0.0060)         &    (0.0011)         \\ Mother Born in US   &                     &     -0.0518         &      0.0006         \\                     &                     &    (0.0330)         &    (0.0115)         \\ Father Born in US   &                     &     -0.0772\sym{*}  &     -0.0096         \\                     &                     &    (0.0424)         &    (0.0068)         \\ Catholic            &                     &      0.0645\sym{***}&      0.0007         \\                     &                     &    (0.0216)         &    (0.0053)         \\ Baptist             &                     &     -0.0171         &     -0.0072         \\                     &                     &    (0.0280)         &    (0.0059)         \\ Mother's Age at Respondent's Birth&                     &     -0.0000         &     -0.0004         \\                     &                     &    (0.0020)         &    (0.0004)         \\ Number of Siblings  &                     &      0.0128\sym{*}  &      0.0026         \\                     &                     &    (0.0068)         &    (0.0022)         \\ Born in US          &                     &     -0.1577\sym{***}&      0.0033         \\                     &                     &    (0.0410)         &    (0.0148)         \\ Birth Weight        &                     &      0.0193\sym{**} &      0.0017         \\                     &                     &    (0.0079)         &    (0.0015)         \\ Breastfed           &                     &     -0.0292         &      0.0046         \\                     &                     &    (0.0224)         &    (0.0043)         \\ Height              &                     &      0.0030         &     -0.0007         \\                     &                     &    (0.0029)         &    (0.0006)         \\ Mentally Retarded   &                     &     -0.4339         &     -0.0094         \\                     &                     &    (0.3464)         &    (0.0327)         \\ Disability          &                     &     -0.1970         &      0.0019         \\                     &                     &    (0.1370)         &    (0.0192)         \\ \midrule F-stat of Predetermined Characteristics&                     &       6.790         &       0.904         \\ P-value             &                     &       0.000         &       0.556         \\ $ R^{2}$            &       0.059         &       0.066         &       0.753         \\ Observations        &      10,605         &      10,605         &      10,605         \\ \bottomrule \end{tabular} }
		
		\begin{tablenotes}[flushleft]\footnotesize \item
			Note: Add Health restricted-use data. The dependent variable in Columns 1 and 2 is the log annual earnings. The dependent variable in Column 3 is age distance. All specifications include individual characteristics (age, IQ, and indicators for whether the student is extrovert, female and white), cohort-level characteristics (mean age, mean IQ, fraction extrovert, fraction female, and fraction white), grade fixed effects, and school fixed effects. Standard errors in parentheses are clustered at the school level.
			\item * p < 0.10, ** p < 0.05, *** p < 0.01.
	\end{tablenotes} \end{threeparttable}
\end{table}
\newpage{}
\begin{table}
	\centering \begin{threeparttable}
		
		\caption{\label{tab:pair_dist}First-Stage Estimates for Pairwise Friendship
			Nominations}
		{ \def\sym#1{\ifmmode^{#1}\else\(^{#1}\)\fi} \begin{tabular}{l*{3}{c}} \toprule                     &\multicolumn{1}{c}{(1)}         &\multicolumn{1}{c}{(2)}         &\multicolumn{1}{c}{(3)}         \\                     &      Probit         &      Probit         &      Probit         \\ \midrule Pairwise Age Distance&     -0.0699\sym{***}&     -0.0343\sym{***}&                     \\                     &    (0.0031)         &    (0.0051)         &                     \\ Pairwise Age Distance Squared&                     &     -0.0176\sym{***}&                     \\                     &                     &    (0.0022)         &                     \\ Pairwise Age Distance: Older Sender&                     &                     &     -0.1017\sym{***}\\                     &                     &                     &    (0.0046)         \\ Pairwise Age Distance: Younger Sender&                     &                     &     -0.0160\sym{***}\\                     &                     &                     &    (0.0042)         \\ Receiver Age        &     -0.0417\sym{***}&     -0.0424\sym{***}&     -0.0844\sym{***}\\                     &    (0.0035)         &    (0.0036)         &    (0.0059)         \\ Mean Age            &      0.0636         &      0.0656         &      0.0994         \\                     &    (0.0658)         &    (0.0657)         &    (0.0652)         \\ \midrule Mean of Dep. Var.   &       0.011         &       0.011         &       0.011         \\ LR-stat of Homophily Measures&     516.798         &     594.731         &     531.939         \\ P-value             &       0.000         &       0.000         &       0.000         \\ Observations        &    21,918,404         &    21,918,404         &    21,918,404         \\ \bottomrule \end{tabular} }
		
		\begin{tablenotes}[flushleft]\footnotesize \item 
			Note: Add Health restricted-use data. The estimation sample includes all the pairwise combinations of students in the in-home survey within the same school-grade. The dependent variable is an indicator for whether student $i$ nominates student $j$ as a friend. Pairwise age distance is constructed by taking the absolute value of the age distance between student $i$ (sender) and student $j$ (receiver). Pairwise age distance to older (younger) sender is equal to pairwise age distance if the sender is older (younger) than the receiver and  0 otherwise. Additional covariates include receiver's characteristics (IQ and indicators for whether the student is extrovert, female and white), cohort-level characteristics (mean IQ, fraction extrovert, fraction female, and fraction white), grade fixed effects, and school fixed effects. Standard errors in parentheses are clustered at the school level.
			\item * p < 0.10, ** p < 0.05, *** p < 0.01.
	\end{tablenotes}\end{threeparttable}
\end{table}
\newpage{}
\begin{table}
	\centering \begin{threeparttable}\caption{\label{tab:reduced_form}Log Earnings on Age Distance: Reduced-Form
			Results}
		{ \def\sym#1{\ifmmode^{#1}\else\(^{#1}\)\fi} \begin{tabular}{l*{3}{c}} \toprule                     &\multicolumn{1}{c}{Log Earnings}&\multicolumn{1}{c}{Log Earnings}&\multicolumn{1}{c}{Log Earnings}\\ \midrule Age Distance        &     -0.1708\sym{***}&     -0.2908\sym{***}&                     \\                     &    (0.0571)         &    (0.1042)         &                     \\ Age Distance Squared&                     &      0.0359         &                     \\                     &                     &    (0.0315)         &                     \\ Age Distance to Older Peers&                     &                     &     -0.4134\sym{**} \\                     &                     &                     &    (0.1823)         \\ Age Distance to Younger Peers&                     &                     &      0.0373         \\                     &                     &                     &    (0.1131)         \\ Age                 &     -0.0474         &     -0.0358         &     -0.2643\sym{**} \\                     &    (0.0376)         &    (0.0373)         &    (0.1216)         \\ Mean Age            &      0.0644         &      0.0687         &      0.1743         \\                     &    (0.1143)         &    (0.1165)         &    (0.1495)         \\ \midrule F-stat of Homophily Measures&       8.942         &       6.428         &       4.931         \\ P-value             &       0.003         &       0.002         &       0.009         \\ $ R^{2}$            &       0.054         &       0.054         &       0.054         \\ Observations        &      10,605         &      10,605         &      10,605         \\ \bottomrule \end{tabular} }
		
		\begin{tablenotes}[flushleft]\footnotesize \item 
			Note: Add Health restricted-use data. Age distance refers to the average age distance to other students in the same school-grade. Age distance to older (younger) peers refers to the average age distance to other students in the same school-grade who are older (younger) than the respondent. Additional covariates include individual characteristics (IQ and indicators for whether the student is extrovert, female, and white), cohort-level characteristics (mean IQ, fraction extrovert, fraction female, and fraction white), grade fixed effects, and school fixed effects. Standard errors in parentheses are clustered at the school level.
			\item * p < 0.10, ** p < 0.05, *** p < 0.01.
	\end{tablenotes}\end{threeparttable}
\end{table}
\newpage\begin{landscape}
	\begin{table}
		\centering \begin{threeparttable}\caption{\label{tab:cov_edu=000026indeg}Covariance of Education and Friendships}
			\begin{tabular}{lccclcc}
				\toprule 
				\multicolumn{3}{l}{Panel A: Covariance Matrix} & \qquad{} & \multicolumn{3}{l}{Panel B: Inverse of Covariance Matrix}\tabularnewline
				\midrule 
				& Years of Schooling & In-Degree &  &  & Years of Schooling & In-Degree\tabularnewline
				\midrule 
				Years of Schooling & 3.3532 &  &  & Years of Schooling & 0.3055 & \tabularnewline
				In-Degree & 0.8296 & 8.6717 &  & In-Degree & -0.0292 & 0.1181\tabularnewline
				\bottomrule
			\end{tabular}
			
			\begin{tablenotes}[flushleft]\footnotesize \item 
				Note: In-degree refers to the number of friendships nominated by other students in the same school-grade. Both years of schooling and in-degree are the residuals after controlling for individual characteristics (age, IQ, and indicators for whether the student is extrovert, female, and white), cohort-level characteristics (mean age, mean IQ, fraction extrovert, fraction female, and fraction white), grade fixed effects, and school fixed effects.
		\end{tablenotes}\end{threeparttable}
	\end{table}
	\newpage{}
	\begin{table}[t]
		\centering \begin{threeparttable}\caption{\label{tab:second_stage_measures}Labor Market Returns to Friendships:
				Various Friendship Measures}
			\begin{tabular}{lcccc}
				\toprule 
				& (1) & (2) & (3) & (4)\tabularnewline
				& In-Degree & Out-Degree & Reciprocated Degree & Network Size\tabularnewline
				\midrule 
				OLS & 0.0245 & 0.0146 & 0.0392 & 0.0183\tabularnewline
				95\% CI & {[}0.0177, 0.0312{]} & {[}0.0064, 0.0227{]} & {[}0.0261, 0.0522{]} & {[}0.0123, 0.0243{]}\tabularnewline
				\midrule
				IV & {[}0.0926, 0.1367{]} & {[}0.1246, 0.1840{]} & {[}0.2007, 0.2964{]} & {[}0.0722, 0.1067{]}\tabularnewline
				95\% CI & {[}0.0075, 0.2242{]} & {[}0.0114, 0.3022{]} & {[}0.0149, 0.4898{]} & {[}0.0067, 0.1742{]}\tabularnewline
				\midrule
				Observations & 10,605 & 10,605 & 10,605 & 10,605\tabularnewline
				\bottomrule
			\end{tabular}
			
			\begin{tablenotes}[flushleft]\footnotesize \item 
				Note: Add Health restricted-use data. The dependent variable is the log annual earnings. All friendship measures restrict to nominations within the same school-grade. In-degree refers to the number of friendships nominated by other students. Out-degree refers to the number of friendships that the respondent nominates. Reciprocated degree refers to the number of friendships that both nominate. Network size refers to the number of friendships that either one nominates. All specifications include individual characteristics (age, IQ, and indicators for whether the student is extrovert, female, and white), cohort-level characteristics (mean age, mean IQ, fraction extrovert, fraction female, and fraction white), grade fixed effects, and school fixed effects. The instrument is age distance. The confidence intervals are constructed based on standard errors clustered at the school level.
		\end{tablenotes} \end{threeparttable}
	\end{table}
	\newpage{}
	\begin{table}[t]
		\centering \begin{threeparttable}\caption{\label{tab:second_stage_bully}Robustness Check: Bullying}
			
			\begin{tabular}{lccccc}
				\toprule 
				& (1) & (2) & (3) & (4) & (5)\tabularnewline
				& IV & IV & IV & IV & IV\tabularnewline
				\midrule 
				In-Degree & {[}0.0926, 0.1367{]} & {[}0.0913, 0.1347{]} & {[}0.0923, 0.1356{]} & {[}0.0930, 0.1363{]} & {[}0.0923, 0.1351{]}\tabularnewline
				95\% CI & {[}0.0075, 0.2242{]} & {[}0.0051, 0.2235{]} & {[}0.0036, 0.2268{]} & {[}0.0063, 0.2255{]} & {[}0.0029, 0.2270{]}\tabularnewline
				\midrule
				Additional Individual Controls & N & Get Along w Others & Part of School & Safe in School & All 3 Measures\tabularnewline
				Observations & 10,605 & 10,605 & 10,605 & 10,605 & 10,605\tabularnewline
				\bottomrule
			\end{tabular}
			
			\begin{tablenotes}[flushleft]\footnotesize \item 
				Note: Add Health restricted-use data. The dependent variable is the log annual earnings. In-degree refers to the number of friendships nominated by other students in the same school-grade. All specifications include individual characteristics (age, IQ, and indicators for whether the student is extrovert, female, and white), cohort-level characteristics (mean age, mean IQ, fraction extrovert, fraction female, and fraction white), grade fixed effects, and school fixed effects. Column 2 controls for an indicator for whether the student gets along with other students. Column 3 controls for an indicator for whether the student feels like he/she is part of the school.  Column 4 controls for an indicator for whether the student feels safe in the school.  Column 5 controls for all the three indicators in Columns 2-4. The confidence intervals are constructed based on standard errors clustered at the school level.
		\end{tablenotes} \end{threeparttable}
	\end{table}
	\newpage{}
	\begin{table}
		\centering
		
		\caption{\label{tab:subsample}Labor Market Returns to Friendships in Various
			Subsamples: OLS Estimates}
		\begin{threeparttable}{\small{}{ \def\sym#1{\ifmmode^{#1}\else\(^{#1}\)\fi} \begin{tabular}{l*{8}{c}} \toprule                     &\multicolumn{1}{c}{(1)}&\multicolumn{1}{c}{(2)}&\multicolumn{1}{c}{(3)}&\multicolumn{1}{c}{(4)}&\multicolumn{1}{c}{(5)}&\multicolumn{1}{c}{(6)}&\multicolumn{1}{c}{(7)}&\multicolumn{1}{c}{(8)}\\                     &\multicolumn{1}{c}{Log Earnings}&\multicolumn{1}{c}{Log Earnings}&\multicolumn{1}{c}{Log Earnings}&\multicolumn{1}{c}{Log Earnings}&\multicolumn{1}{c}{Log Earnings}&\multicolumn{1}{c}{Log Earnings}&\multicolumn{1}{c}{Log Earnings}&\multicolumn{1}{c}{Log Earnings}\\ \midrule Years of Schooling  &      0.0758\sym{***}&      0.1301\sym{***}&      0.0924\sym{***}&      0.1063\sym{***}&      0.1104\sym{***}&      0.1050\sym{***}&      0.1027\sym{***}&      0.0859\sym{***}\\                     &    (0.0080)         &    (0.0077)         &    (0.0083)         &    (0.0102)         &    (0.0117)         &    (0.0081)         &    (0.0060)         &    (0.0163)         \\ In-Degree           &      0.0240\sym{***}&      0.0239\sym{***}&      0.0221\sym{***}&      0.0255\sym{***}&      0.0191\sym{***}&      0.0263\sym{***}&      0.0237\sym{***}&      0.0292\sym{**} \\                     &    (0.0053)         &    (0.0041)         &    (0.0045)         &    (0.0063)         &    (0.0058)         &    (0.0041)         &    (0.0035)         &    (0.0128)         \\ \midrule Subsample           &        Male         &      Female         & High Family         &  Low Family         &      Mother         &      Mother         &      Native         &   Immigrant         \\                     &                     &                     &      Income         &      Income         &     College         &  No College         &                     &                     \\ $ R^{2}$            &       0.063         &       0.096         &       0.080         &       0.102         &       0.082         &       0.111         &       0.101         &       0.097         \\ Observations        &       5,059         &       5,546         &       4,342         &       3,768         &       3,933         &       5,347         &       9,774         &         831         \\ \bottomrule \end{tabular} }}{\small\par}
			
			\begin{tablenotes}[flushleft]\scriptsize  \item 
				Note: Add Health restricted-use data. In-degree refers to the number of friendships nominated by other students in the same school-grade. Additional covariates include individual characteristics (age, IQ, and indicators for whether the student is extrovert, female and white), cohort-level characteristics (mean age, mean IQ, fraction extrovert, fraction female, and fraction white), grade fixed effects, and school fixed effects. Standard errors in parentheses are clustered at the school level.
				\item * p < 0.10, ** p < 0.05, *** p < 0.01.
		\end{tablenotes}\end{threeparttable}
	\end{table}
	\newpage{}
	\begin{table}
		\centering 
		
		\caption{\label{tab:production_functions}The Effect of Social Activities on
			Friendships, and Cognitive and Health Outcomes}
		\begin{threeparttable}{ \def\sym#1{\ifmmode^{#1}\else\(^{#1}\)\fi} \begin{tabular}{l*{5}{c}} \toprule                     &\multicolumn{1}{c}{(1)}&\multicolumn{1}{c}{(2)}&\multicolumn{1}{c}{(3)}&\multicolumn{1}{c}{(4)}&\multicolumn{1}{c}{(5)}\\                     &\multicolumn{1}{c}{In-Degree}&\multicolumn{1}{c}{GPA}&\multicolumn{1}{c}{Job Cognitive Score}&\multicolumn{1}{c}{Depression}&\multicolumn{1}{c}{Healthy}\\ \midrule Years of Schooling  &                     &                     &      0.0620\sym{***}&     -0.0119\sym{***}&      0.0408\sym{***}\\                     &                     &                     &    (0.0031)         &    (0.0020)         &    (0.0026)         \\ Very Hard Study Effort&      0.3352\sym{***}&      0.3671\sym{***}&      0.0983\sym{***}&     -0.0353\sym{***}&      0.1079\sym{***}\\                     &    (0.1121)         &    (0.0295)         &    (0.0181)         &    (0.0104)         &    (0.0186)         \\ Some Study Effort   &      0.5040\sym{***}&      0.1646\sym{***}&      0.0735\sym{***}&     -0.0255\sym{**} &      0.0632\sym{***}\\                     &    (0.1039)         &    (0.0270)         &    (0.0156)         &    (0.0113)         &    (0.0181)         \\ Frequently Hang with Friends&      0.9035\sym{***}&     -0.0084         &     -0.0330\sym{*}  &     -0.0215\sym{*}  &      0.0727\sym{***}\\                     &    (0.1178)         &    (0.0354)         &    (0.0186)         &    (0.0112)         &    (0.0184)         \\ Sometimes Hang with Friends&      0.6947\sym{***}&      0.0405         &     -0.0169         &     -0.0284\sym{**} &      0.0578\sym{***}\\                     &    (0.1077)         &    (0.0318)         &    (0.0209)         &    (0.0122)         &    (0.0170)         \\ Frequently Drink    &      0.4353\sym{***}&     -0.1996\sym{***}&     -0.0224\sym{*}  &      0.0257\sym{**} &     -0.0259\sym{*}  \\                     &    (0.1199)         &    (0.0281)         &    (0.0134)         &    (0.0104)         &    (0.0151)         \\ Sometimes Drink     &      0.3519\sym{***}&     -0.1050\sym{***}&     -0.0158         &      0.0212\sym{***}&     -0.0243\sym{**} \\                     &    (0.0767)         &    (0.0221)         &    (0.0119)         &    (0.0072)         &    (0.0118)         \\ IQ                  &      0.0148\sym{***}&      0.0122\sym{***}&      0.0017\sym{***}&      0.0014\sym{***}&      0.0003         \\                     &    (0.0024)         &    (0.0010)         &    (0.0005)         &    (0.0003)         &    (0.0004)         \\ Extrovert           &      0.3960\sym{***}&      0.0075         &      0.0283\sym{*}  &     -0.0123         &      0.0251\sym{**} \\                     &    (0.0777)         &    (0.0185)         &    (0.0145)         &    (0.0074)         &    (0.0121)         \\ \midrule Mean of Dep. Var.   &       3.385         &       2.851         &       2.668         &       0.145         &       0.590         \\ $ R^{2}$            &       0.046         &       0.127         &       0.123         &       0.039         &       0.043         \\ Observations        &      10,205         &       7,008         &       5,754         &      10,204         &      10,205         \\ \bottomrule \end{tabular} } 
			
			\begin{tablenotes}[flushleft]\footnotesize \item
				Note: Add Health restricted-use data. The dependent variable in Column 2 is the GPA of the respondent in Wave 2. The dependent variable in Columns 3 is the index of cognitive skills required by the respondent's occupation that is constructed from O*NET data. The dependent variable in Column 4 is an indicator for whether the respondent has ever been diagnosed with depression (Wave 4). The dependent variable in Column 5 is an indicator for whether the respondent's self-reported health status in Wave 4 is excellent or very good. Additional covariates include individual characteristics (age, indicators for whether the student is female and white), cohort-level characteristics (mean age, mean IQ, fraction extrovert, fraction female, and fraction white), grade fixed effects, and school fixed effects. Standard errors in parentheses are clustered at the school level.
				\item * p < 0.10, ** p < 0.05, *** p < 0.01.
		\end{tablenotes}\end{threeparttable}
	\end{table}
	\newpage{}
	\begin{table}
		\centering \begin{threeparttable}\caption{\label{tab:ols_build}Labor Market Returns to Friendships: OLS Estimates}
			
			{ \def\sym#1{\ifmmode^{#1}\else\(^{#1}\)\fi} \begin{tabular}{l*{6}{c}} \toprule                     &\multicolumn{1}{c}{(1)}&\multicolumn{1}{c}{(2)}&\multicolumn{1}{c}{(3)}&\multicolumn{1}{c}{(4)}&\multicolumn{1}{c}{(5)}&\multicolumn{1}{c}{(6)}\\                     &\multicolumn{1}{c}{Log Earnings}&\multicolumn{1}{c}{Log Earnings}&\multicolumn{1}{c}{Log Earnings}&\multicolumn{1}{c}{Log Earnings}&\multicolumn{1}{c}{Log Earnings}&\multicolumn{1}{c}{Log Earnings}\\ \midrule Years of Schooling  &      0.1175\sym{***}&      0.1064\sym{***}&      0.1219\sym{***}&      0.1125\sym{***}&      0.1099\sym{***}&      0.1030\sym{***}\\                     &    (0.0068)         &    (0.0059)         &    (0.0060)         &    (0.0059)         &    (0.0060)         &    (0.0058)         \\ In-Degree           &      0.0153\sym{***}&      0.0145\sym{***}&      0.0211\sym{***}&      0.0219\sym{***}&      0.0229\sym{***}&      0.0245\sym{***}\\                     &    (0.0034)         &    (0.0033)         &    (0.0034)         &    (0.0035)         &    (0.0035)         &    (0.0034)         \\ \midrule Endowments          &           N         &           Y         &           Y         &           Y         &           Y         &           Y         \\ Individual Covariates&           N         &           N         &           Y         &           Y         &           Y         &           Y         \\ Cohort Means        &           N         &           N         &           N         &           Y         &           Y         &           Y         \\ Grade FE            &           N         &           N         &           N         &           N         &           Y         &           Y         \\ School FE           &           N         &           N         &           N         &           N         &           N         &           Y         \\ $ R^{2}$            &       0.065         &       0.070         &       0.117         &       0.125         &       0.127         &       0.098         \\ Observations        &      10,605         &      10,605         &      10,605         &      10,605         &      10,605         &      10,605         \\ \bottomrule \end{tabular} }
			
			\begin{tablenotes}[flushleft]\footnotesize \item
				Note: Add Health restricted-use data. In-degree refers to the number of friendships nominated by other students in the same school-grade. Endowments include IQ and an indicator for whether the student is extrovert. Individual Covariates include age and indicators for whether the student is female and white. Cohort Means include mean age, mean IQ, fraction extrovert, fraction female, and fraction white. Standard errors in parentheses are clustered at the school level.
				\item * p < 0.10, ** p < 0.05, *** p < 0.01.
		\end{tablenotes}\end{threeparttable}
	\end{table}
	\newpage{}
	\begin{table}
		\centering  \begin{threeparttable}\caption{\label{tab:cov_network_measures}Covariance Matrix of Network Measures}
			
			\begin{tabular}{lccc}
				\hline 
				& School In-Degree & Grade In-Degree & Grade In-Degree\tabularnewline
				&  &  & (High Engagement)\tabularnewline
				\hline 
				School In-Degree & 12.250 &  & \tabularnewline
				Grade In-Degree & 9.507 & 8.672 & \tabularnewline
				Grade In-Degree (High Engagement) & 2.499 & 2.281 & 1.363\tabularnewline
				\hline 
			\end{tabular}\begin{tablenotes}[flushleft]\footnotesize \item 
				Note: Add Health restricted-use data. The variance-covariance matrix is calculated using the residuals of the variables after removing individual characteristics (IQ, age, and indicators for whether the student is extrovert, female, and white), cohort-level characteristics (mean IQ, mean age, fraction extrovert, fraction female, and fraction white), grade fixed effects, and school fixed effects.
		\end{tablenotes}\end{threeparttable}
	\end{table}
	\newpage{}
	\begin{table}
		\centering \begin{threeparttable}
			
			\caption{\label{tab:ols_types}Labor Market Returns to Friendships of Various
				Types: OLS Estimates}
			\footnotesize{ \def\sym#1{\ifmmode^{#1}\else\(^{#1}\)\fi} \begin{tabular}{l*{6}{c}} \toprule                     &\multicolumn{1}{c}{(1)}&\multicolumn{1}{c}{(2)}&\multicolumn{1}{c}{(3)}&\multicolumn{1}{c}{(4)}&\multicolumn{1}{c}{(5)}&\multicolumn{1}{c}{(6)}\\                     &\multicolumn{1}{c}{Log Earnings}&\multicolumn{1}{c}{Log Earnings}&\multicolumn{1}{c}{Log Earnings}&\multicolumn{1}{c}{Log Earnings}&\multicolumn{1}{c}{Log Earnings}&\multicolumn{1}{c}{Log Earnings}\\ \midrule Years of Schooling  &      0.1030\sym{***}&      0.1027\sym{***}&      0.1029\sym{***}&      0.1013\sym{***}&      0.1025\sym{***}&      0.1029\sym{***}\\                     &    (0.0058)         &    (0.0058)         &    (0.0058)         &    (0.0059)         &    (0.0060)         &    (0.0058)         \\ In-Degree: All Nominations&      0.0245\sym{***}&                     &                     &                     &                     &                     \\                     &    (0.0034)         &                     &                     &                     &                     &                     \\ In-Degree: Same Gender&                     &      0.0344\sym{***}&                     &                     &                     &                     \\                     &                     &    (0.0048)         &                     &                     &                     &                     \\ In-Degree: Opposite Gender&                     &      0.0149\sym{***}&                     &                     &                     &                     \\                     &                     &    (0.0053)         &                     &                     &                     &                     \\ In-Degree: High Engagement&                     &                     &      0.0401\sym{***}&                     &                     &                     \\                     &                     &                     &    (0.0078)         &                     &                     &                     \\ In-Degree: Low Engagement&                     &                     &      0.0189\sym{***}&                     &                     &                     \\                     &                     &                     &    (0.0047)         &                     &                     &                     \\ In-Degree: Non-Disruptive Friends&                     &                     &                     &      0.0373\sym{***}&                     &                     \\                     &                     &                     &                     &    (0.0049)         &                     &                     \\ In-Degree: Disruptive Friends&                     &                     &                     &      0.0101         &                     &                     \\                     &                     &                     &                     &    (0.0064)         &                     &                     \\ In-Degree: Mother College&                     &                     &                     &                     &      0.0290\sym{***}&                     \\                     &                     &                     &                     &                     &    (0.0072)         &                     \\ In-Degree: Mother No College&                     &                     &                     &                     &      0.0260\sym{***}&                     \\                     &                     &                     &                     &                     &    (0.0045)         &                     \\ In-Degree: Missing Mother Education&                     &                     &                     &                     &     -0.0018         &                     \\                     &                     &                     &                     &                     &    (0.0160)         &                     \\ In-Degree: High Family Income&                     &                     &                     &                     &                     &      0.0474\sym{***}\\                     &                     &                     &                     &                     &                     &    (0.0093)         \\ In-Degree: Low Family Income&                     &                     &                     &                     &                     &      0.0342\sym{*}  \\                     &                     &                     &                     &                     &                     &    (0.0174)         \\ In-Degree: Missing Family Income&                     &                     &                     &                     &                     &      0.0186\sym{***}\\                     &                     &                     &                     &                     &                     &    (0.0044)         \\ \midrule $ R^{2}$            &       0.098         &       0.098         &       0.098         &       0.099         &       0.098         &       0.098         \\ Observations        &      10,605         &      10,605         &      10,605         &      10,605         &      10,605         &      10,605         \\ \bottomrule \end{tabular} }
			
			\begin{tablenotes}[flushleft]\scriptsize  \item 
				Note: Add Health restricted-use data. Column 2  separates friendships nominated by students in the same (opposite) gender. Column 3 separates nominations of high (low) engagement  (nominations that the nominator reports doing three or more out of the five listed activities with the nominated friend). Column 4 separates friendships nominated by (non-)disruptive students. Column 5 separates friendships nominated by students with (without) college-educated mothers.  Column 6 separates friendships nominated by students with family income above (below) median. All specifications include individual characteristics (age, IQ, and indicators for whether the student is extrovert, female, and white), cohort-level characteristics (mean age, mean IQ, fraction extrovert, fraction female, and fraction white), grade fixed effects, and school fixed effects. Standard errors in parentheses are clustered at the school level.
				\item * p < 0.10, ** p < 0.05, *** p < 0.01. 
		\end{tablenotes}\end{threeparttable}
	\end{table}
	\newpage{}
	\begin{table}
		\centering \begin{threeparttable}
			
			\caption{\label{tab:first_stage_types}First-Stage Results for Friendships
				of Various Types}
			{ \def\sym#1{\ifmmode^{#1}\else\(^{#1}\)\fi} \begin{tabular}{l*{5}{c}} \toprule                     &\multicolumn{1}{c}{(1)}         &\multicolumn{1}{c}{(2)}         &\multicolumn{1}{c}{(3)}         &\multicolumn{1}{c}{(4)}         &\multicolumn{1}{c}{(5)}         \\                     &   In-Degree         &   In-Degree         &   In-Degree         &   In-Degree         &   In-Degree         \\ \midrule Age Distance        &     -0.6239\sym{***}&     -0.3580\sym{***}&     -0.6443\sym{***}&     -0.4980\sym{***}&     -0.1941\sym{***}\\                     &    (0.0707)         &    (0.0420)         &    (0.0665)         &    (0.0631)         &    (0.0353)         \\ Age                 &      0.1294\sym{**} &      0.1255\sym{***}&      0.1880\sym{***}&      0.1348\sym{***}&      0.0615\sym{**} \\                     &    (0.0560)         &    (0.0304)         &    (0.0501)         &    (0.0400)         &    (0.0235)         \\ Mean Age            &     -0.0928         &     -0.1111         &     -0.3320\sym{**} &     -0.1198         &     -0.1679\sym{**} \\                     &    (0.1167)         &    (0.0685)         &    (0.1561)         &    (0.1209)         &    (0.0691)         \\ \midrule Who Nominates?      &        Same         &        High         &        Non-         &      Mother         & High Family         \\                     &      Gender         &  Engagement         &  Disruptive         &     College         &      Income         \\ Mean of Dep. Var.   &       2.103         &       0.906         &       1.800         &       1.250         &       0.464         \\ F-stat of Instrument&      77.905         &      72.556         &      93.841         &      62.221         &      30.284         \\ P-value             &       0.000         &       0.000         &       0.000         &       0.000         &       0.000         \\ $ R^{2}$            &       0.043         &       0.030         &       0.039         &       0.036         &       0.014         \\ Observations        &      10,605         &      10,605         &      10,605         &      10,605         &      10,605         \\ \bottomrule \end{tabular} }
			
			\begin{tablenotes}[flushleft]\footnotesize \item 
				Note: Add Health restricted-use data. The dependent variable in Column 1 is the number of friendships nominated by students in the same gender. The dependent variable in Column 2 is the number of high-engagement nominations (nominations that the nominator reports doing three out of the five listed activities with the nominated friend). The dependent variable in Column 3 is the number of friendships nominated by non-disruptive students. The dependent variable in Column 4 is the number of friendships nominated by students with college-educated mothers.  The dependent variable in Column 5 is the number of friendships nominated by students with family income above median. Additional covariates include individual characteristics (IQ and indicators for whether the student is extrovert, female, and white), cohort-level characteristics (mean IQ, fraction extrovert, fraction female, and fraction white), grade fixed effects, and school fixed effects. Standard errors in parentheses are clustered at the school level.
				\item * p < 0.10, ** p < 0.05, *** p < 0.01.
		\end{tablenotes}\end{threeparttable}
	\end{table}
	\newpage{}
	\begin{table}
		\centering\caption{\label{tab:social_mechanism}Why Do Adolescent Friendships Matter
			for Labor Market Outcomes? Social Mechanisms}
		\begin{threeparttable}{ \def\sym#1{\ifmmode^{#1}\else\(^{#1}\)\fi} \begin{tabular}{l*{7}{c}} \toprule                     &\multicolumn{1}{c}{(1)}&\multicolumn{1}{c}{(2)}&\multicolumn{1}{c}{(3)}&\multicolumn{1}{c}{(4)}&\multicolumn{1}{c}{(5)}&\multicolumn{1}{c}{(6)}&\multicolumn{1}{c}{(7)}\\                     &\multicolumn{1}{c}{Work}&\multicolumn{1}{c}{Repetitive}&\multicolumn{1}{c}{Supervisory}&\multicolumn{1}{c}{Job Social Score}&\multicolumn{1}{c}{Friends (W4)}&\multicolumn{1}{c}{Marry}&\multicolumn{1}{c}{Extrovert (W4)}\\ \midrule In-Degree (W1)      &      0.0049\sym{***}&     -0.0091\sym{***}&      0.0025         &      0.0124\sym{***}&      0.0682\sym{***}&      0.0034\sym{**} &      0.0092\sym{***}\\                     &    (0.0013)         &    (0.0014)         &    (0.0015)         &    (0.0025)         &    (0.0093)         &    (0.0017)         &    (0.0015)         \\ Years of Schooling  &      0.0235\sym{***}&     -0.0338\sym{***}&      0.0050\sym{*}  &      0.0662\sym{***}&      0.1841\sym{***}&      0.0046\sym{*}  &      0.0069\sym{***}\\                     &    (0.0023)         &    (0.0024)         &    (0.0030)         &    (0.0041)         &    (0.0173)         &    (0.0026)         &    (0.0025)         \\ IQ                  &     -0.0002         &     -0.0016\sym{***}&     -0.0003         &      0.0006         &      0.0051\sym{**} &     -0.0008\sym{*}  &     -0.0003         \\                     &    (0.0004)         &    (0.0004)         &    (0.0004)         &    (0.0006)         &    (0.0025)         &    (0.0004)         &    (0.0005)         \\ Extrovert (W2)      &     -0.0070         &     -0.0032         &      0.0135         &      0.0427\sym{**} &      0.1255\sym{**} &      0.0309\sym{***}&      0.1694\sym{***}\\                     &    (0.0084)         &    (0.0107)         &    (0.0097)         &    (0.0180)         &    (0.0613)         &    (0.0116)         &    (0.0121)         \\ \midrule Mean of Dep. Var.   &       0.809         &       0.314         &       0.362         &       2.890         &       4.726         &       0.491         &       0.364         \\ $ R^{2}$            &       0.027         &       0.044         &       0.012         &       0.085         &       0.047         &       0.035         &       0.034         \\ Observations        &      11,590         &      11,442         &      11,442         &       5,982         &      10,469         &      10,599         &      10,599         \\ \bottomrule \end{tabular} } 
			
			\begin{tablenotes}[flushleft]\footnotesize \item
				Note: Add Health restricted-use data. W1 stands for Wave I, etc. OLS estimates are reported. The dependent variable in Column 1 is an indicator for whether the respondent is currently working for at least ten hours a week. The dependent variable in Column 2 is an indicator for whether the respondent's job tasks are repetitive. The dependent variable in Column 3 is an indicator for whether the respondent has a supervisory role at their current or previous job. The dependent variable in Columns 4 is the index of social skills required by the respondent's occupation that is constructed from O*NET data. The dependent variable in Column 5 is the number of friends reported in Wave 4. The dependent variable in Column 6 is an indicator for whether the respondent has ever been married (Wave 4). The dependent variable in Column 7 is an indicator for whether the respondent's extroversion index in Wave 4 is 15 and above. In-degree refers to the number of friendships nominated by students in the same school-grade. Additional covariates include individual characteristics (age, and indicators for whether the student is female and white), cohort-level characteristics (mean age, mean IQ, fraction extrovert, fraction female, and fraction white), grade fixed effects, and school fixed effects. Standard errors in parentheses are clustered at the school level.
				\item * p < 0.10, ** p < 0.05, *** p < 0.01.
		\end{tablenotes} \end{threeparttable}
	\end{table}
	\newpage{}
	\begin{table}
		\centering \caption{\label{tab:cognitive_mechanism}Why Do Adolescent Friendships Matter
			for Labor Market Outcomes? Cognitive and Other Mechanisms}
		\begin{threeparttable}{ \def\sym#1{\ifmmode^{#1}\else\(^{#1}\)\fi} \begin{tabular}{l*{6}{c}} \toprule                     &\multicolumn{1}{c}{(1)}&\multicolumn{1}{c}{(2)}&\multicolumn{1}{c}{(3)}&\multicolumn{1}{c}{(4)}&\multicolumn{1}{c}{(5)}&\multicolumn{1}{c}{(6)}\\                     &\multicolumn{1}{c}{GPA}&\multicolumn{1}{c}{Skip School}&\multicolumn{1}{c}{Suspension}&\multicolumn{1}{c}{Job Cognitive Score}&\multicolumn{1}{c}{Depression}&\multicolumn{1}{c}{Healthy}\\ \midrule In-Degree           &      0.0273\sym{***}&     -0.0040\sym{***}&     -0.0105\sym{***}&      0.0091\sym{***}&     -0.0034\sym{***}&      0.0100\sym{***}\\                     &    (0.0028)         &    (0.0014)         &    (0.0017)         &    (0.0017)         &    (0.0010)         &    (0.0016)         \\ Years of Schooling  &                     &                     &                     &      0.0615\sym{***}&     -0.0119\sym{***}&      0.0408\sym{***}\\                     &                     &                     &                     &    (0.0030)         &    (0.0020)         &    (0.0027)         \\ IQ                  &      0.0114\sym{***}&     -0.0013\sym{***}&     -0.0027\sym{***}&      0.0015\sym{***}&      0.0014\sym{***}&      0.0001         \\                     &    (0.0010)         &    (0.0004)         &    (0.0004)         &    (0.0004)         &    (0.0003)         &    (0.0004)         \\ Extrovert           &     -0.0286\sym{*}  &      0.0509\sym{***}&      0.0515\sym{***}&      0.0192         &     -0.0093         &      0.0140         \\                     &    (0.0169)         &    (0.0087)         &    (0.0101)         &    (0.0148)         &    (0.0072)         &    (0.0120)         \\ \midrule Mean of Dep. Var.   &       2.839         &       0.274         &       0.244         &       2.663         &       0.144         &       0.587         \\ $ R^{2}$            &       0.097         &       0.050         &       0.071         &       0.122         &       0.036         &       0.041         \\ Observations        &       7,303         &      10,468         &      10,594         &       5,982         &      10,604         &      10,605         \\ \bottomrule \end{tabular} }
			
			\begin{tablenotes}[flushleft]\footnotesize \item 
				Note: Add Health restricted-use data. OLS estimates are reported. The dependent variable in Column 1 is the GPA of the respondent in Wave 2. The dependent variable in Column 2 is an indicator for whether the respondent has skipped school for at least one full day without an excuse during the school year (Wave 1). The dependent varaible in Column 3 is an indicator for whether the respondent has ever received an out-of-school suspension from school (Wave 1). The dependent variable in Columns 4 is the index of cognitive skills required by the respondent's occupation that is constructed from O*NET data. The dependent variable in Column 5 is an indicator for whether the respondent has ever been diagnosed with depression (Wave 4). The dependent variable in Column 6 is an indicator for whether the respondent's self-reported health status in Wave 4 is excellent or very good. In-degree refers to the number of students in the same school-grade who nominate the respondent as a friend in Wave 1. Additional covariates include individual characteristics (age, and indicators for whether the student is female and white), cohort-level characteristics (mean age, mean IQ, fraction extrovert, fraction female, and fraction white), grade fixed effects, and school fixed effects. Standard errors in parentheses are clustered at the school level.
				\item * p < 0.10, ** p < 0.05, *** p < 0.01.
		\end{tablenotes} \end{threeparttable}
	\end{table}
\end{landscape}

\newpage{}

\section{\label{app:Model}A Model of Friendship Formation, Education and
	Earnings}

We now present a simple model of how individuals decide to invest
in educational and social connections and what ultimately determines
their education, friendships and earnings. In the model, each individual
$i$ decides how to allocate their time (which we normalize to one)
between hours of studying ($H_{i}$), socializing ($S_{i}$), and
leisure ($L_{i}=1-H_{i}-S_{i}$). Studying increases education (denoted
by $E_{i}$), whereas socializing increases the number of friends
(denoted by $F_{i}$). Both $E_{i}$ and $F_{i}$ raise earnings.
Any remaining time is devoted to pure leisure ($L_{i}$), which represents
time spent on solo activities such as watching television or sleeping
and has no labor market returns.

Following \citet{card1999causal}, we assume the individual's utility
is determined by consumption, which is determined by log earnings
($Y$)\footnote{We assume that all earnings are consumed. We abstract from saving
	and borrowing considerations in this paper because we observe earnings
	in the data only once.} and leisure ($U$) (which is split into social and non-social leisure
time), and can be written as
\begin{equation}
	Y(E_{i},F_{i},X_{i},\epsilon_{i})+U(S_{i},L_{i},X_{i},\upsilon_{i},\omega_{i}).\label{eq:utility}
\end{equation}
The log earnings $Y(E_{i},F_{i},X_{i},\epsilon_{i})\equiv Y_{i}(E_{i},F_{i})$
depends on education ($E_{i}$), friendships ($F_{i}$), observed
characteristics ($X_{i}$) such as gender, IQ or sociability, and
unobserved characteristics that affect earnings ($\epsilon_{i}$)
such as competence, motivation, etc. The utility from socializing
and other forms of leisure $U(S_{i},L_{i},X_{i},\upsilon_{i})\equiv U_{i}(S_{i},L_{i})$
depends on the time one spends socializing ($S_{i}$) and doing solo
leisure activities ($L_{i}$), one's type of endowment ($X_{i}$),
and one's unobserved preferences for socializing ($\upsilon_{i}$)
and for leisure ($\omega_{i}$).\footnote{Given that we have two choice variables $S_{i}$ and $L_{i}$, we
	consider errors in both. If we specified the utility function fully,
	it would have two terms (one for the utility of socializing and one
	for the utility of leisure) each of which might contain some unobservable
	determinants.} The cost (or disutility) of studying ($C(H_{i},X_{i},\upsilon_{i},\omega_{i}$))
is netted from $U(S_{i},L_{i},X_{i},\upsilon_{i},\omega_{i})$, so
this utility term can be positive or negative.

Although the model is static, observations and decisions are realized
with a certain timing. At the beginning of the schooling period, each
student $i$ observes their characteristics and the characteristics
of their peers $X=(X_{1},\ldots,X_{n})$, as well as everyone's utility
preferences $\upsilon=(\upsilon_{1},\dots,\upsilon_{n})$ and $\omega=(\omega_{1},\dots,\omega_{n})$.
Then they decide how much time to spend on studying $H_{i}$ and socializing
$S_{i}$. A student's education $E_{i}$ and friendships $F_{i}$
are realized at the end of the schooling period. After the schooling
period, students enter the labor market, $\epsilon_{i}$ is realized,
and they receive log earnings $Y(E_{i},F_{i},X_{i},\epsilon_{i})$.

Given $X$, $\upsilon$ and $\omega$, an individual's expected utility
from the choices $H_{i}$ and $S_{i}$ is given by
\begin{equation}
	\mathbb{E}[Y(E_{i},F_{i},X_{i},\epsilon_{i})|X,\upsilon,\omega]+U(S_{i},1-H_{i}-S{}_{i},X_{i},\upsilon_{i},\omega_{i}).\label{eq:EU}
\end{equation}
We assume that the expected value of log earnings takes the form $\mathbb{E}[Y(E_{i},F_{i},X_{i},\epsilon_{i})|X,\upsilon,\omega]=Y(\mathbb{E}[E_{i}|X,\upsilon,\omega],\mathbb{E}[F_{i}|X,\upsilon,\omega],X_{i},\mathbb{E}[\epsilon_{i}|X,\upsilon,\omega])$,
that is, the expected log earnings is a function of $i$'s expected
education, expected friendships, observable characteristics, and expected
labor market shocks. This is satisfied by the standard specification
in the literature that log earnings is a linear function of education,
friendships and other traits
\begin{equation}
	Y(E_{i},F_{i},X_{i},\epsilon_{i})=r_{e}E_{i}+r_{f}F_{i}+\beta'X_{i}+\epsilon_{i},\label{eq:earning}
\end{equation}
where $\text{\ensuremath{r_{e}}}$ represents the returns to education,
$r_{f}$ represents the returns to friendships, and $\beta$ captures
the effects of the observed characteristics of $i$.\footnote{More general specifications can also be considered. For example, we
	can add an interaction between education and networks, that is, $Y(E_{i},F_{i},X_{i},\epsilon_{i})=r_{e}E_{i}+r_{f}F_{i}+r_{e\cdot f}E_{i}\times F_{i}+\beta'X_{i}+\epsilon_{i}$,
	provided that the production of $E_{i}$ and $F_{i}$ are conditionally
	independent given $X$, $\upsilon$, and $\omega$, which is satisfied
	given the production functions specified below.}

We assume that $X_{i}$ is independent of $\epsilon_{i}$, $\upsilon_{i}$
and $\omega_{i}$ for all $i$, but we allow $\epsilon_{i}$ to be
correlated with $\upsilon_{i}$ and $\omega_{i}$, potentially making
education $E_{i}$ and friendships $F_{i}$ in this equation endogenous
(correlated with the error term). We also assume that conditional
on $(X,\upsilon,\omega)$, $\epsilon_{i}$ does not depend on time
spent studying $H_{i}$ and socializing $S_{i}$. In other words,
the amounts of time spent studying and socializing while growing up
have no direct effect on adult earnings: they affect earnings only
through their effects on education and friendships.\footnote{More explicitly, we are assuming that socializing, a form of leisure,
	increases one's utility but it does not directly affect earnings,
	except through its effect on one's network and education, and similarly
	for studying. The exclusion of $H_{i}$ and $S_{i}$ from $Y_{i}$
	will turn out to be a crucial assumption later on for our IVs to be
	valid.}

\subsection*{Production of education and friendships.}

Education $E_{i}$ and friendships $F_{i}$ depend on initial endowments
and on the time individuals allocate to each activity.

\textbf{Friendship formation.} We assume $i$ becomes a friend of
$j$ if they spend time together (they socialize), and they like each
other (they derive nonnegative utility from the friendship). If individuals
$i$ and $j$ spend $S_{i}$ and $S_{j}$ amounts of time socializing,
then $i$ and $j$ become friends ($F_{ij}=1$) following
\begin{equation}
	F_{ij}=\{g(S_{i},S_{j},X_{i},X_{j})+\eta_{ij}\geq0\},\label{eq:link}
\end{equation}
where $g(S_{i},S_{j},X_{i},X_{j})$ represents the deterministic latent
utility of becoming friends and $\eta_{ij}$ represents an unobservable
preference shock in friendship formation that is independent of $\ensuremath{X}$,
$\upsilon$, and $\omega$. For example, individuals $i$ and $j$
might become friends because they were together during a particularly
good or bad event. The likelihood that $i$ and $j$ become friends,
$\Pr(F_{ij}=1|S_{i},S_{j},X_{i},X_{j})\equiv p_{ij}(S_{i},S_{j})$,
depends on how much time they spend socializing $S_{i}$ and $S_{j}$.\footnote{This model is similar in spirit to the dyadic network formation models
	with individual fixed effects \citep{Graham2017}, where the time
	spent socializing $S_{i}$ and $S_{j}$ act as the individual fixed
	effects.} We assume that both\textit{ $i$} and $j$ have to spend time socializing
together ($S_{i}>0$ and $S_{j}>0$) for them to have a non-zero probability
of becoming friends. Thus the production of friendships requires coordination
with others -- you cannot ``party alone.'' We assume that if people
study together they are socializing part of the time, and studying
part of the time.\footnote{We assume that even when people study together that time can be split
	into socializing time and studying time. This is of course a simplification.
	The results are qualitatively similar if we write a more realistic
	model where there are four activities possible including ``studying
	together''. However, this model is more complicated and less intuitive
	so we present the simpler version here.}

Conditional on socializing $S_{i}$ and $S_{j}$, the likelihood of
becoming friends also depends on the individual and shared characteristics
of \textit{$i$} and\textit{ $j$} ($X_{i}$ and $X_{j}$). This feature
captures the empirical finding that individuals tend to form friendships
with other individuals with whom they share similar characteristics,
often referred to as homophily. Individual $i$'s total number of
friends $F_{i}$ is given by $F_{i}=\sum_{j\neq i}F_{ij},$where $F_{ij}$
is the indicator for whether $i$ and $j$ are friends.

\textbf{Education.} The production of education is standard. If individual
$i$ spends a certain amount of time $H_{i}$ studying, their education
is given by
\begin{equation}
	E_{i}=a(H_{i},X_{i},X_{-i})+\xi_{i},\label{eq:edu}
\end{equation}
where $a(H_{i},X_{i},X_{-i})\equiv a_{i}(H_{i})$ represents the deterministic
educational output, and $\xi_{i}$ is an unobservable shock that is
assumed to be independent of $X$, $\upsilon$, and $\omega$. For
example, individuals could suffer unexpected health shocks (such as
getting the flu) that affect their ability to attend school or study.
Thus in addition to studying, the production of education depends
on the observed characteristics of $i$ ($X_{i}$) (which include
their cognitive skills, social skills, and other characteristics,
most importantly the school they attend), and on the characteristics
of their peers $X_{-i}=(X_{j},j\neq i)$. This specification allows
for (exogenous) peer effects in education, namely, the possibility
that individuals learn faster (or slower) depending on the characteristics
of their classroom peers. Note however that even in the absence of
peers, individuals can obtain education.

\subsection*{How do individuals decide how much to study and socialize? Best response
	functions and equilibrium.}

Given information $(X,\upsilon,\omega)$, individual $i$ chooses
$H_{i}$ and $S_{i}$ in order to maximize their \textit{expected}
utility in (\ref{eq:EU}). Recall that the expected value of log earnings
is a function of $i$'s expected education, expected number of friends,
observable characteristics, and expected labor market shocks. For
individual $i,$ the expected number of friends depends on the time
she spends socializing and the amount of time others spend socializing.
In particular, the expected number of friends is of the form $\mathbb{E}[F_{i}|X,\upsilon,\omega]=\sum_{j\neq i}p_{ij}(S_{i},S_{j})$.
The expected education depends on the time $i$ spent studying and
takes the form $\mathbb{E}[E_{i}|X,\upsilon,\omega]=a_{i}(H_{i})$.
Let $\partial Y_{i}/\partial E_{i}$ and $\partial Y_{i}/\partial F_{i}$
represent the derivatives of $Y_{i}(E_{i},F_{i})$ with respective
to $E_{i}$ and $F_{i}$ evaluated at $E_{i}=a_{i}(H_{i})$ and $F_{i}=\sum_{j\neq i}p_{ij}(S_{i},S_{j})$.

Given the actions $S_{j}$ of other individuals $j\neq i$, the optimal
amounts of studying $H_{i}$ and socializing $S_{i}$ satisfy the
first-order conditions
\begin{eqnarray}
	\frac{\partial U_{i}}{\partial L_{i}}(S_{i},1-H_{i}-S_{i}) & = & \frac{\partial Y_{i}}{\partial E_{i}}\frac{\partial a_{i}}{\partial H_{i}}(H_{i})\label{eq:foc.study}\\
	\frac{\partial U_{i}}{\partial L_{i}}(S_{i},1-H_{i}-S_{i}) & = & \frac{\partial U_{i}}{\partial S_{i}}(S_{i},1-H_{i}-S_{i})+\frac{\partial Y_{i}}{\partial F_{i}}\sum_{j\neq i}\frac{\partial p_{ij}}{\partial S_{i}}(S_{i},S_{j}).\label{eq:foc.social}
\end{eqnarray}
The optimal amount of studying equates the marginal utility with the
marginal cost of studying. The marginal cost of studying is the utility
loss due to one less unit of leisure (the left-hand side of (\ref{eq:foc.study})).
The marginal benefit from studying (the right-hand side of (\ref{eq:foc.study}))
is the additional earnings from studying one more unit of time. Similarly,
the optimal amount of socializing equates the marginal utility with
the marginal cost of socializing. The marginal cost of socializing
is also the utility loss due to one less unit of leisure. The marginal
benefit of socializing (the right-hand side of (\ref{eq:foc.social}))
has two terms. The first term is the direct utility individuals derive
from socializing one more unit of time. The second term comes from
the additional earnings individuals get when they socialize and accumulate
friendships.

Assume that the marginal utilities from socializing and leisure are
both diminishing.\footnote{Specifically we assume that the marginal utility from socializing
	$\partial U_{i}/\partial S_{i}$ is decreasing in $S_{i}$ ($\partial^{2}U_{i}/\partial S_{i}^{2}<0$),
	and similarly for the marginal utility from leisure $\partial U_{i}/\partial L_{i}$
	($\partial^{2}U_{i}/\partial L_{i}^{2}<0$).} Monotonicity implies that their inverse functions exist. Let $I_{i}^{S}(\cdot,L_{i})$
and $I_{i}^{L}(\cdot,S_{i})$ denote the inverse of the marginal utility
functions $\partial U_{i}(\cdot,L_{i})/\partial S_{i}$ and $\partial U_{i}(S_{i},\cdot)/\partial L_{i}$
with respect to $S_{i}$ and $L_{i}$ respectively. Then, under the
standard Inada conditions,\footnote{We assume that $\lim_{S_{i}\rightarrow0}\frac{\partial U_{i}}{\partial S_{i}}(S_{i},L_{i})=\infty$,
	$\lim_{L_{i}\rightarrow0}\frac{\partial U_{i}}{\partial L_{i}}(S_{i},L_{i})=\infty$,
	and $\lim_{H_{i}\rightarrow0}\frac{\partial a_{i}}{\partial H_{i}}(H_{i})=\infty$.} we can derive the optimal $H_{i}$ and $S_{i}$ as an interior solution
to
\begin{eqnarray}
	S_{i} & = & I_{i}^{S}\left(\frac{\partial Y_{i}}{\partial E_{i}}\frac{\partial a_{i}}{\partial H_{i}}(H_{i})-\frac{\partial Y_{i}}{\partial F_{i}}\sum_{j\neq i}\frac{\partial p_{ij}}{\partial S_{i}}(S_{i},S_{j}),1-H_{i}-S_{i}\right)\label{eq:opt.social}\\
	H_{i} & = & 1-S_{i}-I_{i}^{L}\left(\frac{\partial Y_{i}}{\partial E_{i}}\frac{\partial a_{i}}{\partial H_{i}}(H_{i}),S_{i}\right).\label{eq:opt.study}
\end{eqnarray}
Because the right-hand sides of (\ref{eq:opt.social}) and (\ref{eq:opt.study})
are both continuous functions of $H_{i}$ and $S_{i}$, by Brouwer's
fixed-point theorem, there exists at least one solution to equations
(\ref{eq:opt.social})-(\ref{eq:opt.study}). In general, there may
be multiple solutions.\footnote{In the most general form of the model -- without imposing the Inada
	conditions, there can be multiple equilibria in the model. Intuitively,
	there can be situations where no one ever parties because no one else
	is partying, or where everyone parties all the time, because everyone
	else is partying. We can also obtain multiple interior solutions where
	individuals both study and socialize in partial amounts.}

The optimal amount of socializing depends on the decisions of others
to socialize; thus, equations (\ref{eq:opt.social})-(\ref{eq:opt.study})
correspond to the ``best response function'' of a given individual,
who takes others' actions as given. A Nash equilibrium is a profile
of actions $(H_{i}^{*},S_{i}^{*})$, $i=1,\dots,n$, that satisfies
(\ref{eq:opt.social})-(\ref{eq:opt.study}) for all $i=1,\dots,n$.
The best response of socializing in a Nash equilibrium is given by
\begin{equation}
	S_{i}^{*}=I_{i}^{S}\left(\frac{\partial Y_{i}^{*}}{\partial E_{i}}\frac{\partial a_{i}}{\partial H_{i}}(H_{i}^{*})-\frac{\partial Y_{i}^{*}}{\partial F_{i}}\sum_{j\neq i}\frac{\partial p_{ij}}{\partial S_{i}}(S_{i}^{*},S_{j}^{*}),1-H_{i}^{*}-S_{i}^{*}\right),\label{eq:brf}
\end{equation}
where $\partial Y_{i}^{*}/\partial E_{i}$ and $\partial Y_{i}^{*}/\partial F_{i}$
represent the derivatives of $Y_{i}(E_{i},F_{i})$ with respect to
$E_{i}$ and $F_{i}$ evaluated at $E_{i}=a_{i}(H_{i}^{*})$ and $F_{i}=\sum_{j\neq i}p_{ij}(S_{i}^{*},S_{j}^{*})$.
This equation states that the optimal amounts of socializing of students
in a school are jointly determined by the returns to schooling, the
returns to friendships, schooling inputs and productivity, homophily
with peers, and their initial endowments and preferences. Due to the
coordination effects, the social investment of a student is no longer
unilaterally determined by that student: it also depends on the investments
and endowments of the peers.

\subsection{Implications.}

We now investigate a few properties of the model that are useful for
the empirical analysis.
\begin{prop}
	\label{prop:no-social-return}Assume that the marginal product of
	studying is positive ($\partial\ensuremath{a_{i}/\partial H_{i}}>0$)
	and the labor market returns to education are positive ($\partial Y_{i}/\partial E_{i}>0$).
	If there are no labor market returns to friendships ($\partial Y_{i}/\partial F_{i}=0$),
	then in an OLS regression of log earnings on socializing, the coefficient
	on socializing will be negative when education and friendships are
	not controlled for.
\end{prop}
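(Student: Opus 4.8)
The plan is to reduce the claim to a statement about the sign of a single covariance and then pin down that sign through the model's first-order conditions. Since the OLS slope from regressing $Y_i$ on $S_i$ (without $E_i$ or $F_i$ as controls) converges to $\mathrm{Cov}(Y_i,S_i)/\mathrm{Var}(S_i)$ and $\mathrm{Var}(S_i)>0$, it suffices to show $\mathrm{Cov}(Y_i,S_i)<0$. Imposing $\partial Y_i/\partial F_i=0$ in the earnings equation (\ref{eq:earning}) gives $Y_i=r_eE_i+\beta'X_i+\epsilon_i$, so socializing enters earnings only through its effect on education via the time constraint $L_i=1-H_i-S_i$. Holding the observed type $X_i$ fixed (equivalently, reading the regression as conditional on $X_i$) and restricting the identifying variation to the unobserved taste for socializing $\upsilon_i$, the task becomes signing $r_e\,\mathrm{Cov}(E_i,S_i)$; since $r_e>0$ this reduces to showing that socializing crowds out education, $dE_i/dS_i<0$.

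The heart of the argument is this crowd-out, which I would obtain from the study first-order condition (\ref{eq:foc.study}). With $\partial Y_i/\partial F_i=0$ the coordination term $\frac{\partial Y_i}{\partial F_i}\sum_{j\neq i}\partial p_{ij}/\partial S_i$ in (\ref{eq:foc.social}) vanishes, so $i$'s problem decouples from the peers and reduces to a single-agent optimization depending only on $(X_i,\upsilon_i,\omega_i)$, removing any concern about equilibrium multiplicity for the comparative static. The two conditions then collapse to $r_e\,a_i'(H_i)=\partial U_i/\partial L_i=\partial U_i/\partial S_i$, so at the optimum the marginal earnings value of study equals the marginal utilities of leisure and of socializing. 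Because $\upsilon_i$ does not enter (\ref{eq:foc.study}), perturbing it moves the solution along the study-optimality locus $\partial U_i/\partial L_i(S_i,1-H_i-S_i)=r_e\,a_i'(H_i)$. Totally differentiating this locus and solving gives
\[
\frac{dH_i}{dS_i}=\frac{U_{LS}-U_{LL}}{r_e\,a_i''+U_{LL}},
\]
where $U_{LL}<0$ by diminishing marginal utility of leisure and $a_i''<0$ by diminishing marginal product of study, so the denominator is negative; under separability ($U_{LS}=0$), or more generally whenever $U_{LS}\ge U_{LL}$, the numerator is positive and hence $dH_i/dS_i<0$. Since $E_i=a_i(H_i)+\xi_i$ with $a_i'>0$, this yields $dE_i/dS_i=a_i'(H_i)\,dH_i/dS_i<0$: more time socializing displaces study time, lowers education, and therefore lowers earnings because $r_e>0$. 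Intuitively, a higher taste for socializing raises $S_i$, which compresses leisure, raises its marginal utility, and so—through $r_e a_i'=\partial U_i/\partial L_i$ together with $a_i''<0$—forces $H_i$ down.

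Assembling the pieces, $\mathrm{Cov}(Y_i,S_i)=r_e\,\mathrm{Cov}(E_i,S_i)<0$ along the socializing-taste variation, so the OLS coefficient is negative. The step I expect to be the main obstacle is controlling the channels I set aside: the leisure taste $\omega_i$ and the endowments $X_i$ both enter (\ref{eq:foc.study}) and hence \emph{shift} the study-optimality locus, so the clean sign of $dH_i/dS_i$ is not automatic for those sources of variation, and the permitted correlation between $\epsilon_i$ and $(\upsilon_i,\omega_i)$ can add a term $\mathrm{Cov}(\epsilon_i,S_i)$ to the covariance. I would handle $X_i$ by conditioning on it (it is the regression's implicit control set and is independent of $(\epsilon_i,\upsilon_i,\omega_i)$ by assumption), isolate the economically relevant ``taste for partying'' channel $\upsilon_i$, and either invoke separability or state the sign condition $U_{LS}\ge U_{LL}$ explicitly, so that the crowd-out—and thus the negative coefficient—holds unambiguously.
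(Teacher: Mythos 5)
Your core mechanism --- with no friendship returns, socializing affects earnings only by crowding out study time --- is the same as the paper's, but your formalization has a genuine gap, one your final paragraph concedes without closing. Reading the claim as a statement about the probability limit $\mathrm{Cov}(Y_i,S_i)/\mathrm{Var}(S_i)$, you need $\mathrm{Cov}(Y_i,S_i\mid X_i)=r_e\,\mathrm{Cov}(E_i,S_i\mid X_i)<0$. This requires (i) $\mathrm{Cov}(\epsilon_i,S_i\mid X_i)=0$, which the model pointedly does not grant: $\epsilon_i$ is allowed to be correlated with $(\upsilon_i,\omega_i)$, $S_i^*$ depends on both, and this correlation is exactly what makes the OLS biases in Proposition \ref{prop:ols} ambiguous; and (ii) that the residual variation in $S_i$ is driven only by $\upsilon_i$, whereas variation in the leisure taste $\omega_i$ can move $S_i$ and $H_i$ in the same direction and thus contribute a term of the opposite sign to $\mathrm{Cov}(E_i,S_i)$. ``Isolating the $\upsilon_i$ channel'' is not a derivation; it is an extra assumption foreign to the proposition, as are the conditions your comparative static needs ($U_{LL}<0$, $a_i''<0$, separability or $U_{LS}\geq U_{LL}$, and $\upsilon_i$ not entering $\partial U_i/\partial L_i$). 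Under your reading, the statement is simply not true in the model's full generality, which is a signal that the intended reading is different.

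The paper's own proof shows what that reading is, and it is essentially mechanical. Parameterize time by $(S_i,L_i)$, so that an extra unit of socializing, holding leisure fixed, comes one-for-one out of studying, and write expected log earnings as $Y(a_i(1-S_i-L_i),\sum_{j}p_{ij}(S_i,S_j),X_i,\mathbb{E}[\epsilon_i\mid X,\upsilon,\omega])$. With $\partial Y_i/\partial F_i=0$ the friendship argument is irrelevant, and
\[
\frac{\partial\mathbb{E}[Y_i\mid X,\upsilon,\omega]}{\partial S_i}=-\frac{\partial Y_i}{\partial E_i}\,\frac{\partial a_i}{\partial H_i}(1-S_i-L_i)<0,
\]
using only the hypotheses stated in the proposition --- no first-order conditions, no curvature, no separability, and no restriction on the unobservables. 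The ``coefficient on socializing'' is thus interpreted structurally, as the effect of reallocating time from studying to socializing, not as the plim of an endogenous OLS slope. Your FOC computation $dH_i/dS_i=(U_{LS}-U_{LL})/(r_e a_i''+U_{LL})$ is correct under your added assumptions and is close in spirit to the paper's Lemma showing $\partial S_i^*/\partial\upsilon_i>0$ (which the paper proves under Assumption \ref{ass:S_v} and uses later, for the discussion of the sign of the OLS bias, not for this proposition) --- but as a proof of Proposition \ref{prop:no-social-return} it simultaneously over-assumes and, because of (i) and (ii), under-delivers.
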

\begin{proof}
	See Appendix \ref{app:Proof.no-social-return}. If there are no returns
	to friendships ($\partial Y_{i}/\partial F_{i}=0$), then the optimal
	amount of socializing becomes
	\begin{equation}
		S_{i}^{*}=I_{i}^{S}\left(\frac{\partial Y_{i}^{*}}{\partial E_{i}}\frac{\partial a_{i}}{\partial H_{i}}(H_{i}^{*}),1-H_{i}^{*}-S_{i}^{*}\right).\label{eq:social.brf.nr}
	\end{equation}
	In this case, individuals still socialize ($S_{i}^{*}>0$), but socializing
	would have non-positive returns in the labor market because it is
	pure leisure.
\end{proof}
\begin{prop}
	\label{prop:ols}If we estimate the earnings equation by OLS, then
	the estimated returns to education and friendships will be biased.
	Without further assumptions, the directions of the biases in the OLS
	estimates of returns to education and friendships are ambiguous.
\end{prop}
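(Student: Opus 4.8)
The plan is to show that the two structural regressors $E_i$ and $F_i$ in the earnings equation (\ref{eq:earning}) are each correlated with the error $\epsilon_i$, so that the probability limit of the OLS estimator differs from $(r_e,r_f)$, and then to argue that the sign of this discrepancy is not determined by the model. Because $X_i$ is assumed independent of $\epsilon_i$, I would first partial out $X_i$: letting $\tilde{E}_i$ and $\tilde{F}_i$ denote the residuals of $E_i$ and $F_i$ after projecting on $X_i$, the OLS estimator of $(r_e,r_f)'$ converges in probability to $(r_e,r_f)' + \Sigma^{-1}c$, where $\Sigma$ is the $2\times 2$ covariance matrix of $(\tilde{E}_i,\tilde{F}_i)$ and $c = (\mathrm{Cov}(\tilde{E}_i,\epsilon_i),\mathrm{Cov}(\tilde{F}_i,\epsilon_i))'$. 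Establishing bias then reduces to showing $c\neq 0$.

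The key step is to trace how $E_i$ and $F_i$ inherit dependence on the preference shocks. From the education technology (\ref{eq:edu}) and the friendship technology (\ref{eq:link}), we have $E_i = a_i(H_i^{*}) + \xi_i$ and $F_i = \sum_{j\neq i} F_{ij}$, while the equilibrium choices $H_i^{*}$ and $S_i^{*}$ solving the best-response system (\ref{eq:brf}) are functions of $(X,\upsilon,\omega)$. Since $\xi_i$ and $\eta_{ij}$ are independent of $(X,\upsilon,\omega)$, both $E_i$ and $F_i$ can be correlated with $\epsilon_i$ only through the channel $(\upsilon,\omega)$, which the model explicitly permits to be correlated with $\epsilon_i$. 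Hence generically $\mathrm{Cov}(E_i,\epsilon_i)\neq 0$ and $\mathrm{Cov}(F_i,\epsilon_i)\neq 0$, so $c\neq 0$ and both coefficients are biased.

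For the ambiguity, I would combine a comparative-statics argument with the observation that the relevant correlations are sign-unrestricted. Raising an individual's taste for socializing $\upsilon_i$ raises optimal socializing $S_i^{*}$ in (\ref{eq:brf}) and, through the time constraint $L_i = 1-H_i-S_i$, lowers optimal studying $H_i^{*}$; by (\ref{eq:edu}) and (\ref{eq:link}) this raises $F_i$ and lowers $E_i$. Thus $\mathrm{Cov}(F_i,\epsilon_i)$ carries the sign of $\mathrm{Cov}(\upsilon_i,\epsilon_i)$ and $\mathrm{Cov}(E_i,\epsilon_i)$ its opposite, with an analogous contribution entering through $\omega_i$. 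Because the model places no restriction on the signs of $\mathrm{Cov}(\upsilon_i,\epsilon_i)$ or $\mathrm{Cov}(\omega_i,\epsilon_i)$, the entries of $c$ can each be positive or negative. Moreover, since $E_i$ and $F_i$ are correlated, $\Sigma^{-1}$ has a nonzero off-diagonal element, so the bias in $r_f$ blends both entries of $c$ and vice versa; the net sign is therefore doubly indeterminate.

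I expect the main obstacle to be the comparative statics on the equilibrium best responses, since $H_i^{*}$ and $S_i^{*}$ are defined only implicitly as a fixed point of (\ref{eq:brf}) with strategic interaction through peers' $S_j^{*}$. A fully rigorous signing of $\partial S_i^{*}/\partial \upsilon_i$ would require an implicit-function-theorem argument controlling the feedback through the coordination term $\sum_{j\neq i}\partial p_{ij}/\partial S_i$. For the proposition, however, I would only need these derivatives to be generically nonzero, not their exact magnitudes: the ambiguity follows purely from the fact that the map from $(\upsilon_i,\omega_i)$ to $(E_i,F_i)$ is nondegenerate while the correlation of $(\upsilon_i,\omega_i)$ with $\epsilon_i$ is left free by the model.
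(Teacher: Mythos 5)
Your proposal is correct and follows essentially the same route as the paper's proof: partial out the exogenous $X_{i}$, show that the OLS estimator of $(r_{e},r_{f})'$ converges to the true value plus $\Sigma^{-1}c$ with $c$ the vector of covariances between the residualized regressors and $\epsilon_{i}$, conclude bias from $c\neq0$ (endogeneity through the preference shocks $\upsilon_{i},\omega_{i}$), and conclude ambiguity from the unrestricted signs of those covariances together with the fact that the non-diagonal $\Sigma^{-1}$ mixes the two entries of $c$. The comparative-statics sketch on $\partial S_{i}^{*}/\partial\upsilon_{i}$ that you flag as the hard step is not needed for the proposition itself, exactly as you suspect; the paper relegates it to a separate lemma under additional assumptions, used only for the empirical discussion of the likely sign of the bias.
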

\begin{proof}
	See Appendix \ref{app:Proof.ols}. In the model, both education and
	friendships can be correlated with the error term because tastes for
	leisure and for socializing are unobserved. If we estimate the earnings
	equation by OLS, then the estimated returns to education and friendships
	will be biased. This proposition states that ex-ante, it is not possible
	to assign the direction of the bias. The bias can be upward or downward,
	depending on, for example, whether the taste for socializing leans
	more toward productive activities like studying or working together
	or pure leisure like drinking or partying.
\end{proof}
\begin{prop}
	\label{prop:homophily}Suppose that Assumption \ref{ass:S_d} is satisfied.
	Increasing an individual's homophily level has ambiguous predictions
	on socializing and on the number of friends. But the level of homophily
	will in general affect the equilibrium level of socializing and the
	number of friends an individual has.
\end{prop}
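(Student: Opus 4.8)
The plan is to treat homophily as a scalar parameter $h$ entering the friendship technology $p_{ij}(S_i,S_j;h)$ — a higher $h$ meaning $i$ is more similar to her school-grade peers — and to perform comparative statics on the equilibrium best-response system (\ref{eq:brf}). First I would fix notation by writing the marginal friend-value of socializing, i.e.\ the coordination term on the right-hand side of (\ref{eq:foc.social}), as $M_i^F(h)=\frac{\partial Y_i}{\partial F_i}\sum_{j\neq i}\frac{\partial p_{ij}}{\partial S_i}(S_i,S_j;h)$, and observing that $h$ enters individual $i$'s optimization problem only through this single object. Using the inverse-marginal-utility representation (\ref{eq:opt.social})--(\ref{eq:opt.study}) together with the fact (established just before the proposition) that $I_i^S$ is strictly decreasing in its first argument, I would isolate the own, partial effect of $h$ on $i$'s best response holding the profile $S_{-i}$ fixed. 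Assumption \ref{ass:S_d} is what guarantees the relevant derivative $\partial M_i^F/\partial h$ is well-defined and nonzero; this immediately delivers the second half of the claim, since whenever $\partial Y_i/\partial F_i>0$ and the friend-formation slope genuinely responds to similarity, the condition (\ref{eq:foc.social}) shifts with $h$, so the equilibrium $S_i^*$ and hence $F_i^*=\sum_{j\neq i}p_{ij}(S_i^*,S_j^*;h)$ must move — the level of homophily is not a free parameter.

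To establish the \emph{ambiguity}, I would apply the implicit function theorem to the full equilibrium system obtained by stacking (\ref{eq:brf}) across all $i$. Totally differentiating in $h$ yields a linear system $\mathbf{J}\,\frac{dS^*}{dh}=\frac{\partial M^F}{\partial h}$, where $\mathbf{J}$ is the Jacobian of the stacked best responses (after substituting the budget-constrained $H_i^*$). The diagonal of $\mathbf{J}$ encodes own concavity and the time constraint and is sign-definite, but the off-diagonal entries are governed by the cross-partial $\partial^2 p_{ij}/\partial S_i\partial S_j$ — that is, by whether socializing efforts are strategic complements or substitutes. Because nothing in the maintained assumptions restricts the sign of this cross-partial, $\mathbf{J}^{-1}$ applied to the direct-effect vector can return either sign in a given coordinate: the peer-coordination feedback can reinforce or reverse the own effect. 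This is precisely the source of ambiguity anticipated in the text, where placing a student among older peers may, through coordination, \emph{raise} socializing despite the larger age gap.

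Rather than sign an inverse Jacobian that the primitives deliberately leave indeterminate, the cleaner route I would actually write up is to exhibit two transparent parameterizations that preserve existence of an interior Nash equilibrium (guaranteed by the Brouwer argument following (\ref{eq:opt.study})) but flip the comparative static. In the first, socializing efforts are strong strategic substitutes, so the induced drop in peers' $S_j^*$ pulls $i$'s equilibrium $S_i^*$ below baseline even when the own effect is positive; in the second they are complements, so the feedback amplifies the own effect and $S_i^*$ rises. Evaluating $F_i^*$ in each case then shows the number of friends can also move either way, since $F_i^*$ combines the \emph{direct} gain from higher $p_{ij}$ with the \emph{indirect} change running through the equilibrium socializing profile, and these can offset. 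Two examples with opposite signs suffice to prove ambiguous predictions for both socializing and the number of friends. I would also note a second, reinforcing source of ambiguity already present in the direct effect: with a Probit-type link as in the empirical section, $\partial p_{ij}/\partial S_i$ is proportional to a density evaluated at the latent index, so raising $h$ can increase or decrease the marginal productivity of socializing depending on where one sits on the (non-monotone) density.

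The main obstacle is the equilibrium fixed-point structure: because $S_i^*$ depends on the entire profile $S_{-i}^*$ through the coordination term, the comparative static cannot be read off from a single agent's first-order condition, and the honest work is controlling the peer-feedback channel. Either one signs and inverts the equilibrium Jacobian — which the assumptions intentionally leave free through the unrestricted sign of $\partial^2 p_{ij}/\partial S_i\partial S_j$ — or one constructs the two example economies carefully enough that interior equilibria persist while $dS_i^*/dh$ changes sign. I expect the example construction to be routine once the strategic-complements-versus-substitutes dichotomy is made explicit, so the conceptual crux is simply recognizing that the off-diagonal Jacobian sign is exactly the object the model does not pin down, in the same spirit as the indeterminate-bias conclusion of Proposition \ref{prop:ols}.
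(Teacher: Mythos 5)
There is a genuine gap, and it sits exactly where you locate the ambiguity. Your proof derives the indeterminacy from the claim that ``nothing in the maintained assumptions restricts the sign of $\partial^{2}p_{ij}/\partial S_{i}\partial S_{j}$,'' and your cleaner write-up constructs one economy with strategic substitutes and one with strategic complements. But Assumption \ref{ass:S_d}(iii) explicitly signs this cross-partial: the marginal product of socializing is \emph{increasing} in the peer's socializing, $\partial^{2}p_{ij}/\partial S_{i}\partial S_{j}>0$. The proposition must be proved under that maintained assumption, so the strategic-substitutes economy is inadmissible, the off-diagonal entries of your Jacobian $\mathbf{J}$ are in fact sign-restricted, and the dichotomy your argument rests on is not available. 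Your ``reinforcing'' remark about a Probit-type link has the same problem: Assumption \ref{ass:S_d}(i) signs the direct effect, $\partial^{2}p_{ij}/\partial S_{i}\partial d_{ij}<0$, so the direction of the own effect is not free either. The deeper issue is your parameterization: by letting homophily be a scalar $h$ that enters only the link technology $p_{ij}(S_{i},S_{j};h)$, you sever the channel the paper actually exploits. In the paper, homophily changes because the \emph{peer's characteristics} $X_{j}$ change, and this moves $j$'s own equilibrium socializing $S_{j}^{*}$ in a direction the model leaves unrestricted (e.g., an older, more distant peer may socialize more). The ambiguity in $\partial S_{i}^{*}/\partial d_{ij}$ then comes from the competition between the signed direct effect $r_{f}\,\partial^{2}p_{ij}/\partial S_{i}\partial d_{ij}<0$ and the feedback term $r_{f}\,(\partial^{2}p_{ij}/\partial S_{i}\partial S_{j})\,(\partial S_{j}^{*}/\partial d_{ij})$, whose sign is free precisely because $\partial S_{j}^{*}/\partial d_{ij}$ is free --- not because the complementarity is unsigned. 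Under your formulation, where $h$ affects $j$ only through the same link function, this offsetting channel cannot arise, and (with the assumption's signs imposed) your comparative static would in fact be determinate, contradicting the claimed ambiguity.

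Your treatment of the second half of the claim (the effect is generically nonzero) is closer in spirit to the paper's, but it is also incomplete as stated: a shift in the socializing first-order condition (\ref{eq:foc.social}) does not by itself force $S_{i}^{*}$ to move, since the adjustment could in principle be absorbed by $H_{i}^{*}$ and $L_{i}^{*}$. The paper closes this loophole with a contradiction argument through the time constraint $H_{i}^{*}+S_{i}^{*}+L_{i}^{*}=1$: if $\partial S_{i}^{*}/\partial d_{ij}=0$, then equation (\ref{eq:foc.social.dd}) forces $\partial L_{i}^{*}/\partial d_{ij}\neq0$, equation (\ref{eq:foc.study.dd}) forces $\partial H_{i}^{*}/\partial d_{ij}$ to share that sign, and then $H_{i}^{*}+L_{i}^{*}$ changes while $S_{i}^{*}$ does not, which is impossible. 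To repair your proof you would need to (a) re-pose the perturbation as a change in $X_{j}$ (equivalently, allow $h$ to shift $j$'s primitives and hence $S_{j}^{*}$ directly), deriving ambiguity from the unrestricted sign of $\partial S_{j}^{*}/\partial d_{ij}$ interacting with the \emph{positive} complementarity, and (b) add the time-constraint contradiction to establish the generically nonzero effect.
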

\begin{proof}
	See Appendix \ref{app:Proof.homophily}. Without coordination effects,
	the model predicts that individuals that are more similar to each
	other are more likely to be friends. However, the effect of increasing
	homophily on the number of friends is theoretically ambiguous because
	of coordination effects. For example, suppose that a boy is placed
	in a cohort with only girls. Compare him to another boy placed in
	a cohort with only boys. If girls socialize more than boys, then the
	boy placed in the girls-only cohort may still want to socialize more
	because it is more productive to socialize in this group and he might
	make more friends as a result, despite the gender difference. Therefore,
	the effect of homophily on social outcomes (socializing and number
	of friends) is an empirical question. Nevertheless, the theory also
	predicts that unless the direct impact of social distance is exactly
	offset by the coordination effects, the effect of homophily on social
	outcomes will not be zero.
\end{proof}
\begin{prop}
	\label{prop:endow}Suppose that Assumptions \ref{ass:S_x}-\ref{ass:S_x.extro}
	are satisfied. The effect of an individual's cognitive and social
	endowments ($X_{i}$) on socializing and thus on friendships is theoretically
	ambiguous. The same is true for studying and education.
\end{prop}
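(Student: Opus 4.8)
The plan is to perform comparative statics on the two first-order conditions (\ref{eq:foc.study})--(\ref{eq:foc.social}) that jointly pin down the optimal $(H_i^*,S_i^*)$, treating a component of the endowment $X_i$ (cognitive, i.e. IQ, or social, i.e. extroversion) as the parameter of interest. To isolate the individual's response I would hold the peers' socializing $\{S_j^*\}_{j\neq i}$ fixed, so that the two equations form a self-contained system $\Psi(H_i,S_i;X_i)=0$: the first equates the marginal cost of studying (forgone leisure) with $r_e\,\partial a_i/\partial H_i$, and the second equates the marginal value of leisure with the marginal utility of socializing plus the coordination-augmented friendship return $r_f\sum_{j\neq i}\partial p_{ij}/\partial S_i$.

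Totally differentiating this system in $X_i$ gives a $2\times2$ linear system
\[
J\begin{pmatrix} dH_i\\ dS_i\end{pmatrix} = -\,\partial_{X_i}\Psi\; dX_i,
\]
where $J$ is the Jacobian of the first-order conditions in $(H_i,S_i)$. The key steps are: (i) sign $\det J$ using the second-order conditions together with the diminishing-marginal-utility and Inada assumptions maintained in the model, so that the comparative statics are well defined at an interior, locally stable equilibrium; (ii) compute the forcing vector $\partial_{X_i}\Psi$, which collects precisely the channels through which the endowment operates — the marginal product of studying $\partial^2 a_i/\partial H_i\partial X_i$ (cognitive channel), the marginal utilities of socializing and leisure $\partial^2 U_i/\partial S_i\partial X_i$ and $\partial^2 U_i/\partial L_i\partial X_i$ (taste channel), and the friendship-formation sensitivity $\partial^2 p_{ij}/\partial S_i\partial X_i$ (homophily channel); and (iii) apply Cramer's rule to obtain closed-form expressions for $dH_i/dX_i$ and $dS_i/dX_i$.

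The final step is to read off the signs. Under Assumptions \ref{ass:S_x}--\ref{ass:S_x.extro} some of these cross-partials are signed (a higher cognitive endowment raises the return to studying; a higher social endowment raises the marginal utility of socializing and the friendship-formation sensitivity), but the Cramer's-rule numerators pit a productivity channel that pulls time toward one activity against a time-budget substitution that must pull time out of the others. Because these enter with opposite signs and the assumptions do not pin down their relative magnitudes, neither $dS_i/dX_i$ nor $dH_i/dX_i$ has a determinate sign. To make the ambiguity concrete I would exhibit two admissible parameter configurations — one in which the productivity channel dominates (studying rises, socializing falls) and one in which the taste/homophily channel dominates (socializing rises, studying falls) — both consistent with the maintained assumptions. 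Since $F_i=\sum_{j\neq i}p_{ij}(S_i,S_j)$ is increasing in $S_i$ and $E_i=a_i(H_i)$ is increasing in $H_i$, the ambiguity in $(dS_i,dH_i)$ transmits directly to friendships and education, which establishes both halves of the claim.

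The main obstacle is the coordination term $\sum_{j\neq i}\partial p_{ij}/\partial S_i(S_i,S_j)$: its dependence on the peers' equilibrium socializing means the clean two-equation comparative static holds only for the individual best response with peers held fixed. I would either state the result at this partial-response level, which already suffices for the ambiguity claim, or, for the full-equilibrium statement, observe that endogenizing the $S_j^*$ merely adds further terms of indeterminate sign and so cannot overturn the ambiguity. A secondary technical point — guaranteeing $\det J\neq 0$ and interiority — I would dispatch by invoking the diminishing marginal utilities and Inada conditions already assumed, exactly as in the proof of Proposition \ref{prop:homophily}.
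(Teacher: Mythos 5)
Your overall framework is the same as the paper's: differentiate the two first-order conditions (\ref{eq:foc.study})--(\ref{eq:foc.social}) with respect to the endowment component $X_{i,1}$, holding peers' socializing fixed ($\partial S_{j}^{*}/\partial X_{i,1}=0$), and then pass the conclusions to $F_{i}$ and $E_{i}$ through the production functions. The paper signs the resulting system not by Cramer's rule on a Jacobian but by a contradiction argument off the time budget $H_{i}^{*}+S_{i}^{*}+L_{i}^{*}=1$ (as in the proofs of the other propositions); that difference is cosmetic.

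The genuine gap is in your central claim that, under Assumptions \ref{ass:S_x}--\ref{ass:S_x.extro}, neither $\partial S_{i}^{*}/\partial X_{i,1}$ nor $\partial H_{i}^{*}/\partial X_{i,1}$ has a determinate sign because a signed productivity channel competes against time-budget substitution with unrestricted relative magnitudes. First, that mechanism does not generate ambiguity at all: when the endowment shifts the marginal return of only one activity, concavity plus the budget argument delivers determinate signs (the activity whose return rose expands, the others contract); ambiguity requires forcing terms of unrestricted sign, not substitution effects. Second, and consequently, your claim is false for the social endowment: under Assumption \ref{ass:S_x.extro}, part (ii) shuts down the studying-productivity channel ($\partial^{2}a_{i}/\partial H_{i}\partial X_{i,1}=0$), all remaining forcing terms load positively on the socializing condition, and the paper proves \emph{determinate} signs, $\partial S_{i}^{*}/\partial X_{i,1}>0$, $\partial H_{i}^{*}/\partial X_{i,1}<0$, $\partial L_{i}^{*}/\partial X_{i,1}<0$. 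So your plan to exhibit two admissible configurations with opposite signs cannot be carried out for extroversion --- one configuration necessarily violates Assumption \ref{ass:S_x.extro}; in the paper, the ambiguity for the social endowment is argued by stepping \emph{outside} the maintained assumptions (if social skills do raise studying productivity, e.g.\ because people study together, then \ref{ass:S_x.extro}(ii) fails and the prediction flips). For IQ your two-configuration strategy does work, but for a different reason than you give: Assumption \ref{ass:S_x.iq} signs only $\partial^{2}a_{i}/\partial H_{i}\partial X_{i,1}>0$ and leaves the sign of $\partial^{2}U_{i}/\partial S_{i}\partial X_{i,1}+r_{f}\,\partial^{2}p_{ij}/\partial S_{i}\partial X_{i,1}$ completely unrestricted; when that term is nonpositive (condition (\ref{eq:MU.iq})) the signs are again determinate (study more, socialize less), and indeterminacy arises only when it is positive and large relative to $r_{e}\,\partial^{2}a_{i}/\partial H_{i}\partial X_{i,1}$. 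A correct proof must make that unrestricted cross-partial the explicit source of variation for the cognitive case, and handle the social case by perturbing the assumption set itself, rather than appealing to unknown magnitudes of channels the assumptions already sign.
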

\begin{proof}
	See Appendix \ref{app:Proof.endow}. One might expect that individuals
	reinforce their initial endowments -- social people socialize and
	make friends, smart people study -- but that is not necessarily the
	case. If we assume that individuals with large social skills are not
	better students (social endowments do not affect studying productivity),
	then individuals with social skills will spend more time socializing
	and less time studying. Consequently, they accumulate more friends
	and less education. But if social endowments affect studying productivity
	-- for example if people study together, then the predictions will
	not hold. Similarly, whether or not smart students socialize more
	depends on their socializing and studying productivity. If the marginal
	productivity of socializing is increasing in intelligence and large
	enough to induce a reduction in studying time, then high-IQ students
	can accumulate more friends.
\end{proof}

\subsection{\label{app:Proofs}Proofs}

\subsubsection{\label{app:Proof.no-social-return}Proposition \ref{prop:no-social-return}}
\begin{proof}[Proof of Proposition \ref{prop:no-social-return}]
	Recall that the expected log earnings is 
	\begin{eqnarray*}
		\mathbb{E}[Y_{i}|X,\upsilon,\omega] & = & Y(a_{i}(1-S_{i}-L_{i}),\sum_{j}p_{ij}(S_{i},S_{j}),X_{i},\mathbb{E}[\epsilon_{i}|X,\upsilon,\omega]).
	\end{eqnarray*}
	Suppose there are no social returns ($\partial\ensuremath{Y}_{i}/\partial\ensuremath{F}_{i}=0$).
	Under the assumption that the education returns are positive ($\partial Y_{i}/\partial E_{i}>0$)
	and the marginal product of studying is positive ($\partial\ensuremath{a_{i}(H_{i})}/\partial H_{i}>0$),
	the derivative of the expected log earnings with respect to $S_{i}$
	is negative
	\[
	\frac{\partial\mathbb{E}[Y_{i}|X,\upsilon,\omega]}{\partial S_{i}}=-\frac{\partial\ensuremath{Y}_{i}}{\partial E_{i}}\frac{\partial\ensuremath{a_{i}}}{\partial H_{i}}(1-S_{i}-L_{i})<0.
	\]
	In other words, when there are no returns to socializing then a student
	who socializes relatively more will study less, attain a lower level
	of education, and have lower expected log earnings.
\end{proof}

\subsubsection{\label{app:Proof.ols}Proposition \ref{prop:ols}}
\begin{proof}[Proof of Proposition \ref{prop:ols}]
	Write the OLS regression of log earnings on education $E_{i}$, friendships
	$F_{i}$, and other traits $X_{i}$ as 
	\begin{equation}
		Y_{i}=r'K_{i}+\beta'X_{i}+\epsilon_{i},\label{eq:app.earning}
	\end{equation}
	where $K_{i}=(E_{i},F_{i})'$ represents the vector of education and
	friendships and $r=(r_{e},r_{f})'$ represents the vector of returns
	to education and friendships. Assume that $\mathbb{E}(\epsilon_{i})=0$.
	The endogeneity of $K_{i}$ and exogeneity of $X_{i}$ mean that $\mathbb{E}[K_{i}\epsilon_{i}]\neq0$
	and $\mathbb{E}[X_{i}\epsilon_{i}]=0$.
	
	The OLS estimator of $r$ is given by
	\[
	\text{\ensuremath{\text{\ensuremath{\hat{r}=(\sum_{i=1}^{n}\tilde{K}_{i}K_{i}')^{-1}\sum_{i=1}^{n}\tilde{K}_{i}Y_{i}}}}},
	\]
	where $\tilde{K}_{i}=K_{i}-(\sum_{i=1}^{n}K_{i}X_{i}')(\sum_{i=1}^{n}X_{i}X_{i}')^{-1}X_{i}$
	is the residual of $K_{i}$ after partialing out $X_{i}$. Because
	$\sum_{i=1}^{n}\tilde{K}_{i}X_{i}'=0$ we can derive $\text{\ensuremath{\text{\ensuremath{\hat{r}-r}}}}=(\sum_{i=1}^{n}\tilde{K}_{i}K_{i}')^{-1}\sum_{i=1}^{n}\tilde{K}_{i}\epsilon_{i}$.
	By the law of large numbers and the exogeneity of $X_{i}$ ($\mathbb{E}[X_{i}\epsilon_{i}]=0$),
	we obtain
	\[
	\text{\ensuremath{\text{\ensuremath{\hat{r}-r}}}}\overset{p}{\rightarrow}(\mathbb{E}[K_{i}K_{i}']-\mathbb{E}[K_{i}X_{i}']\mathbb{E}[X_{i}X_{i}']^{-1}\mathbb{E}[X_{i}K_{i}'])^{-1}\mathbb{E}[K_{i}\epsilon_{i}]
	\]
	as $n\rightarrow\infty$. Because $\mathbb{E}[K_{i}\epsilon_{i}]\neq0$
	due to the endogeneity of $K_{i}$, the right-hand side is nonzero,
	leading to a bias in the OLS estimator $\hat{r}$.
	
	The bias in $\ensuremath{\hat{r}}$ depends on both the correlation
	between $K_{i}$ and $\epsilon_{i}$ ($\mathbb{E}[K_{i}\epsilon_{i}]$)
	and the inverse of the matrix
	\begin{equation}
		\mathbb{E}[K_{i}K_{i}']-\mathbb{E}[K_{i}X_{i}']\mathbb{E}[X_{i}X_{i}']^{-1}\mathbb{E}[X_{i}K_{i}'].\label{eq:app.Exx}
	\end{equation}
	Notice that because this matrix is positive definite,\footnote{This is because $\mathbb{E}[K_{i}K_{i}']-\mathbb{E}[K_{i}X_{i}']\mathbb{E}[X_{i}X_{i}']^{-1}\mathbb{E}[X_{i}K_{i}']=\mathbb{E}[(K_{i}-\mathbb{E}[K_{i}X_{i}']\mathbb{E}[X_{i}X_{i}']^{-1}X_{i})(K_{i}-\mathbb{E}[K_{i}X_{i}']\mathbb{E}[X_{i}X_{i}']^{-1}X_{i})']$.}
	so is its inverse. Unlike the case with a single endogenous variable,
	unless matrix (\ref{eq:app.Exx}) is diagonal, the OLS bias in $r_{f}$
	does not necessarily have the same sign as the correlation between
	friendships and the error term $\epsilon_{i}$, and the same for $r_{e}$.
	Without knowledge about matrix (\ref{eq:app.Exx}), even if the correlations
	between education, friendships and $\epsilon_{i}$ are known, the
	OLS biases in the returns to education and friendships are ambiguous.
\end{proof}

\paragraph{Discussion: OLS bias in our sample}

Observe that matrix (\ref{eq:app.Exx}) involves the observables $K_{i}$
and $X_{i}$ only. In a given dataset this matrix can be estimated.
In our Add Health data, we find that the off-diagonal elements in
the inverse of matrix (\ref{eq:app.Exx}) are relatively small compared
with the diagonal elements (Table \ref{tab:cov_edu=000026indeg}).\footnote{The off-diagonal elements in the inverse matrix are determined by
	the correlation between education and friendships. If conditional
	on other covariates, education and friendships are positively (negatively)
	correlated, then we expect the off-diagonal elements in the inverse
	of the matrix (\ref{eq:app.Exx}) to be negative (positive). In our
	data, we find that education and friendships are positively correlated,
	controlling for other covariates, and the off-diagonal elements in
	the inverse of the matrix (\ref{eq:app.Exx}) are indeed negative.} This suggests that the OLS biases in the returns to education and
friendships are mostly determined by their correlations with $\epsilon_{i}$.
In particular, a positive (negative) correlation between friendships
and \textbf{$\epsilon_{i}$ }will lead to an upward (downward) bias
in the returns to friendships.

The number of friends can be correlated with $\epsilon_{i}$ for a
number of reasons. For example, because the number of friends depends
on socializing, which in turn depends on the unobserved preference
for socializing $\upsilon_{i}$, if $\upsilon_{i}$ is correlated
with \textbf{$\epsilon_{i}$}, so is the number of friends.

To speculate on the direction of the correlation between friendships
and \textbf{$\epsilon_{i}$ }through $\upsilon_{i}$, we show in a
lemma that socializing is increasing in the unobserved preference
for socializing.
\begin{assumption}
	\label{ass:S_v}(i) The marginal utility from socializing is increasing
	in \textup{$\upsilon_{i}$} ($\partial^{2}U_{i}/\partial S_{i}\partial\upsilon_{i}>0$),
	that is, students with higher \textup{$\upsilon_{i}$} enjoy more
	utility from each unit of socializing. (ii) The marginal utility from
	leisure does not depend on socializing nor $\upsilon_{i}$ ($\partial^{2}U_{i}/\partial L_{i}\partial S_{i}=0$
	and $\partial^{2}U_{i}/\partial L_{i}\partial\upsilon_{i}=0$). (iii)
	The marginal product from socializing (or studying) is diminishing
	in socializing (or studying) (\textup{$\partial^{2}p_{ij}/\partial S_{i}^{2}<0$}
	and \textup{$\partial^{2}a_{i}/\partial H_{i}^{2}<0$}). (iv) The
	labor market returns to education and friendships are positive constants
	(\textup{$\partial Y_{i}/\partial E_{i}=r_{e}>0$} and $\partial Y_{i}/\partial F_{i}=r_{f}>0$)\textup{.}
\end{assumption}
\begin{lem}
	Under Assumption \ref{ass:S_v}, the optimal amount of socializing
	$S_{i}^{*}$ is increasing in \textup{$\upsilon_{i}$}.
\end{lem}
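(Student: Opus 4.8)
The plan is to treat this as a comparative-statics exercise on student $i$'s best-response problem, holding the socializing choices $S_j$ of the other students ($j\neq i$) fixed, and to apply the implicit function theorem to the pair of first-order conditions (\ref{eq:foc.study})--(\ref{eq:foc.social}). Under Assumption \ref{ass:S_v}(iv) expected log earnings is linear, so $i$ chooses $(H_i,S_i)$ to maximize
\[
V_i(H_i,S_i)=r_e\,a_i(H_i)+r_f\sum_{j\neq i}p_{ij}(S_i,S_j)+U_i(S_i,1-H_i-S_i),
\]
whose two first-order conditions are exactly (\ref{eq:foc.study}) and (\ref{eq:foc.social}). First I would write $L_i=1-H_i-S_i$ and compute the $2\times2$ Hessian of $V_i$ in $(H_i,S_i)$, using Assumption \ref{ass:S_v}(ii) ($\partial^2U_i/\partial L_i\partial S_i=0$) to kill the cross term $U_{i,LS}$ wherever it appears.

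Next I would verify the second-order condition so that the implicit function theorem applies. Combining $\partial^2a_i/\partial H_i^2<0$ and $\partial^2p_{ij}/\partial S_i^2<0$ from \ref{ass:S_v}(iii) with the maintained diminishing-marginal-utility assumptions $\partial^2U_i/\partial L_i^2<0$ and $\partial^2U_i/\partial S_i^2<0$, the Hessian takes the form $\left(\begin{smallmatrix}A&B\\B&C\end{smallmatrix}\right)$ with $A=r_e a_i''+U_{i,LL}$, $C=r_f\sum_j p_{ij,SS}+U_{i,SS}+U_{i,LL}$, and $B=U_{i,LL}$, where all three of $A,B,C$ are negative. A short expansion then gives
\[
AC-B^2=(r_e a_i'')\bigl(r_f\textstyle\sum_j p_{ij,SS}+U_{i,SS}\bigr)+U_{i,LL}\bigl(r_e a_i''+r_f\textstyle\sum_j p_{ij,SS}+U_{i,SS}\bigr)>0,
\]
since each summand is a product of two negative quantities. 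Thus the Hessian is negative definite (confirming the critical point is a maximum) and nonsingular.

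I would then differentiate both first-order conditions with respect to $\upsilon_i$. Assumption \ref{ass:S_v}(ii) ($\partial^2U_i/\partial L_i\partial\upsilon_i=0$) makes the derivative of the $H_i$-equation in $\upsilon_i$ vanish, while \ref{ass:S_v}(i) ($\partial^2U_i/\partial S_i\partial\upsilon_i=:\mu>0$) gives a strictly positive derivative in the $S_i$-equation. The implicit function theorem yields
\[
\begin{pmatrix}A&B\\B&C\end{pmatrix}\begin{pmatrix}dH_i^*/d\upsilon_i\\ dS_i^*/d\upsilon_i\end{pmatrix}=\begin{pmatrix}0\\-\mu\end{pmatrix},
\]
and Cramer's rule gives $dS_i^*/d\upsilon_i=-A\mu/(AC-B^2)$. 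Because $A<0$, $\mu>0$, and $AC-B^2>0$, this is strictly positive, which is the claim; as a by-product $dH_i^*/d\upsilon_i=B\mu/(AC-B^2)<0$, so studying falls as the taste for socializing rises.

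The step I expect to be the main obstacle is conceptual rather than computational: I must be explicit that the statement concerns the individual best response with $S_{-i}$ held fixed, so that the coordination cross-partials $\partial^2V_i/\partial S_i\partial S_j$ never enter the differentiation. If one instead wanted the full-equilibrium derivative—allowing every agent's action to adjust—one would have to differentiate the entire fixed-point system (\ref{eq:brf}) and sign the inverse of the game's Jacobian, which would require a diagonal-dominance or contraction restriction on the coordination effects that is not imposed here. I would therefore state and prove the lemma as a best-response monotonicity result, which is all that the downstream discussion of the sign of $\mathbb{E}[F_i\epsilon_i]$ requires.
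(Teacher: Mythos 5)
Your proof is correct, but it takes a genuinely different route from the paper's. The paper differentiates the two first-order conditions (\ref{eq:foc.study})--(\ref{eq:foc.social}) with respect to $\upsilon_{i}$ and then argues by contradiction using the time budget: if $\partial S_{i}^{*}/\partial\upsilon_{i}\leq0$, the differentiated social FOC forces $\partial L_{i}^{*}/\partial\upsilon_{i}<0$, the differentiated study FOC then forces $\partial H_{i}^{*}/\partial\upsilon_{i}<0$, and all three time derivatives being negative contradicts $H_{i}^{*}+S_{i}^{*}+L_{i}^{*}=1$. You instead substitute $L_{i}=1-H_{i}-S_{i}$, form the $2\times2$ Hessian of the best-response objective, verify negative definiteness ($A,B,C<0$ and $AC-B^{2}>0$, which your expansion establishes correctly), and invert the system via Cramer's rule to get $dS_{i}^{*}/d\upsilon_{i}=-A\mu/(AC-B^{2})>0$. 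Both proofs make the identical simplification of holding $S_{j}$, $j\neq i$, fixed (the paper phrases this as assuming the change in $\upsilon_{i}$ "does not trigger any change in the equilibrium"), so neither is a full-equilibrium result; you are more explicit about this caveat and about what a full-equilibrium version would require. Your approach buys three things the paper's does not: it verifies the second-order/regularity conditions needed for the implicit function theorem (which the paper leaves implicit), it produces explicit formulas for the comparative statics rather than only signs, and it delivers $dH_{i}^{*}/d\upsilon_{i}=B\mu/(AC-B^{2})<0$ as a by-product, whereas the paper's contradiction argument signs only $S_{i}^{*}$. What the paper's argument buys in exchange is economy and reusability: the same budget-constraint contradiction template is recycled almost verbatim in the proofs of Propositions \ref{prop:homophily} and \ref{prop:endow}, where the sign pattern of the cross-partials is messier and an explicit matrix inversion would be more cumbersome.
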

\begin{proof}
	Suppose that the change of $\upsilon_{i}$ does not trigger any change
	in the equilibrium so that $\frac{\partial S_{j}^{*}}{\partial\upsilon_{i}}=0$,
	for all $j\neq i$. Taking the derivative of both sides of (\ref{eq:foc.study})
	and (\ref{eq:foc.social}) evaluated at optimal $(H_{i}^{*},S_{i}^{*})$,
	$i=1,\dots,n$, with respect to $\upsilon_{i}$, we obtain
	\begin{eqnarray}
		\frac{\partial^{2}U_{i}}{\partial L_{i}^{2}}(S_{i}^{*},L_{i}^{*})\frac{\partial L_{i}^{*}}{\partial\upsilon_{i}} & = & r_{e}\frac{\partial^{2}a_{i}}{\partial H_{i}^{2}}(H_{i}^{*})\frac{\partial H_{i}^{*}}{\partial\upsilon_{i}}\label{eq:foc.study.dv}\\
		\frac{\partial^{2}U_{i}}{\partial L_{i}^{2}}(S_{i}^{*},L_{i}^{*})\frac{\partial L_{i}^{*}}{\partial\upsilon_{i}} & = & \frac{\partial^{2}U_{i}}{\partial S_{i}\partial\upsilon_{i}}(S_{i}^{*},L_{i}^{*})+\left(\frac{\partial^{2}U_{i}}{\partial S_{i}^{2}}(S_{i}^{*},L_{i}^{*})+r_{f}\sum_{j\neq i}\frac{\partial^{2}p_{ij}}{\partial S_{i}^{2}}(S_{i}^{*},S_{j}^{*})\right)\frac{\partial S_{i}^{*}}{\partial\upsilon_{i}},\label{eq:foc.social.dv}
	\end{eqnarray}
	where we have used $\partial^{2}U_{i}/\partial L_{i}\partial S_{i}=0$,
	$\partial^{2}U_{i}/\partial L_{i}\partial\upsilon_{i}=0$, \textit{$\partial Y_{i}/\partial E_{i}=r_{e}$},
	and $\partial Y_{i}/\partial F_{i}=r_{f}$ by Assumption \ref{ass:S_v}(ii)
	and (iv), and the fact that $a_{i}$ does not depend on $\upsilon_{i}$.
	Recall that $\partial^{2}U_{i}/\partial L_{i}^{2}<0$ and $\partial^{2}U_{i}/\partial S_{i}^{2}<0$.
	Under Assumption \ref{ass:S_v}(i) and (iii), we have $\partial^{2}U_{i}/\partial S_{i}\partial\upsilon_{i}>0$,
	$\partial^{2}p_{ij}/\partial S_{i}^{2}<0$, and $\partial^{2}a_{i}/\partial H_{i}^{2}<0$.
	Suppose that $\partial S_{i}^{*}/\partial\upsilon_{i}\leq0$. Equations
	(\ref{eq:foc.study.dv}) and (\ref{eq:foc.social.dv}) imply that
	$\partial L_{i}^{*}/\partial\upsilon_{i}<0$ and $\partial H_{i}^{*}/\partial\upsilon_{i}<0$.
	This contradicts the setup that $H_{i}^{*}+S_{i}^{*}+L_{i}^{*}=1$:
	the amounts of time spent on studying, socializing, and leisure cannot
	all decrease in $\upsilon_{i}$ because they sum up to $1$. We conclude
	that $\partial S_{i}^{*}/\partial\upsilon_{i}>0$.
\end{proof}
From the lemma, socializing is increasing in $\upsilon_{i}$. Because
the number of friends is increasing in the time spent socializing,
we therefore expect that the number of friends is positively (negatively)
correlated with $\epsilon_{i}$ if $\upsilon_{i}$ and $\epsilon_{i}$
are positively (negatively) correlated.

For example, unobserved communication skills in $\upsilon_{i}$ may
contribute to more friends and better labor market performance, resulting
in a positive correlation between friendships and $\epsilon_{i}$.
This will yield an upward bias in the social returns. On the other
hand, if unobserved tastes for socializing in $\upsilon_{i}$ are
in favor of leisure and counterproductive on the labor market, then
we expect a negative correlation between friendships and $\epsilon_{i}$,
yielding a downward bias in the social returns. Overall, the OLS bias
in the social returns can be ambiguous, depending on whether the positive
or negative correlation dominates.

\subsubsection{\label{app:Proof.homophily}Proposition \ref{prop:homophily}}

To prove Proposition \ref{prop:homophily}, let $d_{ij}=d(X_{i},X_{j})$
denote the social distance between students $i$ and $j$, and assume
that the linking probability $p_{ij}(S_{i},S_{j})$ depends on social
distance $d_{ij}$. We examine the impact of a change in social distance
$d_{ij}$, by changing $j$'s characteristics ($X_{j}$), on $i$'s
time spent socializing and studying and on later friendships and education.

To speculate on the impacts of social distance, we make additional
assumptions on the utility and production functions.
\begin{assumption}
	\label{ass:S_d}(i) The marginal product of socializing is decreasing
	in social distance (\textup{$\partial^{2}p_{ij}/\partial S_{i}\partial d_{ij}<0$}),
	that is, students with higher social distance to peers make fewer
	friends from each unit of socializing. (ii) The marginal utility from
	leisure does not depend on socializing ($\partial^{2}U_{i}/\partial L_{i}\partial S_{i}=0$).
	The marginal product of studying does not depend on social distance
	(\textup{$\partial^{2}a_{i}/\partial H_{i}\partial d_{ij}=0$}).\footnote{This assumption is imposed for simplicity and can be relaxed. For
		example, if we assume $\partial^{2}p_{ij}/\partial S_{i}\partial d_{ij}<\partial^{2}a_{i}/\partial H_{i}\partial d_{ij}<0$,
		we can derive similar results on $S_{i}^{*}$, though with more ambiguity.} (iii) The marginal product from socializing (or studying) is diminishing
	in socializing (or studying) (\textup{$\partial^{2}p_{ij}/\partial S_{i}^{2}<0$}
	and \textup{$\partial^{2}a_{i}/\partial H_{i}^{2}<0$}). The marginal
	product from socializing is increasing in peer's socializing ($\partial^{2}p_{ij}/\partial S_{i}\partial S_{j}>0$).
	(iv) The labor market returns to education and friendships are positive
	constants (\textup{$\partial Y_{i}/\partial E_{i}=r_{e}>0$} and $\partial Y_{i}/\partial F_{i}=r_{f}>0$)\textup{.}
\end{assumption}
\begin{proof}[Proof of Proposition \ref{prop:homophily}]
	Suppose that the change of $d_{ij}$ (due to the change in $X_{j}$)
	does not trigger any change in the equilibrium so that $\partial S_{k}^{*}/\partial d_{ij}=0$,
	for all $k\neq i,j$. Taking the derivatives of both sides of (\ref{eq:foc.study})
	and (\ref{eq:foc.social}) evaluated at optimal $(H_{i}^{*},S_{i}^{*})$,
	$i=1,\dots,n$, with respect to $d_{ij}$, we obtain
	\begin{eqnarray}
		\frac{\partial^{2}U_{i}}{\partial L_{i}^{2}}(S_{i}^{*},L_{i}^{*})\frac{\partial L_{i}^{*}}{\partial d_{ij}} & = & r_{e}\frac{\partial^{2}a_{i}}{\partial H_{i}^{2}}(H_{i}^{*})\frac{\partial H_{i}^{*}}{\partial d_{ij}}\label{eq:foc.study.dd}\\
		\frac{\partial^{2}U_{i}}{\partial L_{i}^{2}}(S_{i}^{*},L_{i}^{*})\frac{\partial L_{i}^{*}}{\partial d_{ij}} & = & r_{f}\frac{\partial^{2}p_{ij}}{\partial S_{i}\partial d_{ij}}(S_{i}^{*},S_{j}^{*})+\left(\frac{\partial^{2}U_{i}}{\partial S_{i}^{2}}(S_{i}^{*},L_{i}^{*})+r_{f}\sum_{k\neq i}\frac{\partial^{2}p_{ik}}{\partial S_{i}^{2}}(S_{i}^{*},S_{k}^{*})\right)\frac{\partial S_{i}^{*}}{\partial d_{ij}}\nonumber \\
		&  & +r_{f}\frac{\partial^{2}p_{ij}}{\partial S_{i}\partial S_{j}}(S_{i}^{*},S_{j}^{*})\frac{\partial S_{j}^{*}}{\partial d_{ij}},\label{eq:foc.social.dd}
	\end{eqnarray}
	where we have used $\partial^{2}U_{i}/\partial L_{i}\partial S_{i}=0$,
	\textit{$\partial^{2}a_{i}/\partial H_{i}\partial d_{ij}=0$, $\partial Y_{i}/\partial E_{i}=r_{e}$},
	and $\partial Y_{i}/\partial F_{i}=r_{f}$ by Assumption \ref{ass:S_d}(ii)
	and (iv), and the fact that $U_{i}$ and $p_{ik}$ (for $k\neq i,j$)
	do not depend on $d_{ij}$. Note that the presence of the term $\partial S_{j}^{*}/\partial d_{ij}$
	in equation (\ref{eq:foc.social.dd}) is because $S_{j}^{*}$ depends
	on the social distance $d_{ij}$ and $j$'s characteristics $X_{j}$.
	
	We start with the special case where we ignore the effect on $S_{j}^{*}$
	and assume that $\partial S_{j}^{*}/\partial d_{ij}=0$. Recall that
	$\partial^{2}U_{i}/\partial L_{i}^{2}<0$ and $\partial^{2}U_{i}/\partial S_{i}^{2}<0$.
	Under Assumption \ref{ass:S_d}(i) and (iii), we have $\partial^{2}p_{ij}/\partial S_{i}\partial d_{ij}<0$,
	$\partial^{2}p_{ik}/\partial S_{i}^{2}<0$ (for $k\neq i$) and $\partial^{2}a_{i}/\partial H_{i}^{2}<0$.
	Suppose that $\partial S_{i}^{*}/\partial d_{ij}\geq0$. Equations
	(\ref{eq:foc.social.dd}) and (\ref{eq:foc.study.dd}) then imply
	that $\partial L_{i}^{*}/\partial d_{ij}>0$ and $\partial H_{i}^{*}/\partial d_{ij}>0$.
	This contradicts the setup $H_{i}^{*}+S_{i}^{*}+L_{i}^{*}=1$: the
	amounts of time spent on studying, socializing, and leisure cannot
	all increase in $d_{ij}$ because they sum up to $1$. We conclude
	that $\partial S_{i}^{*}/\partial d_{ij}<0$, that is, socializing
	is decreasing in social distance $d_{ij}$. Moreover, from equation
	(\ref{eq:foc.study.dd}) we can see that $\partial L_{i}^{*}/\partial d_{ij}$
	and $\partial H_{i}^{*}/\partial d_{ij}$ have the same sign. Because
	$\partial S_{i}^{*}/\partial d_{ij}<0$, we thus derive that $\partial H_{i}^{*}/\partial d_{ij}>0$
	and $\partial L_{i}^{*}/\partial d_{ij}>0$. In sum, if the impact
	of social distance $d_{ij}$ on $j$'s time spent socializing is ignored,
	an increase in social distance $d_{ij}$ leads $i$ to socialize less
	and study more.
	
	In general, we have $\partial S_{j}^{*}/\partial d_{ij}\neq0$, that
	is, social distance $d_{ij}$ affects $j$'s time spent socializing.
	If we know $\partial S_{j}^{*}/\partial d_{ij}\leq0$, by $\partial^{2}p_{ij}/\partial S_{i}\partial S_{j}>0$
	(Assumption \ref{ass:S_d}(iii)) and the same reasoning as above,
	we can derive $\partial S_{i}^{*}/\partial d_{ij}<0$ and $\partial H_{i}^{*}/\partial d_{ij}>0$.
	However, the change in $X_{j}$ may lead to an increase in $j$'s
	time spent socializing ($\partial S_{j}^{*}/\partial d_{ij}>0$).
	In this case, the sign of $\partial S_{i}^{*}/\partial d_{ij}$ becomes
	ambiguous, as does the sign of $\partial H_{i}^{*}/\partial d_{ij}$.
	In other words, because the impact of social distance $d_{ij}$ (through
	$j$'s characteristics $X_{j}$) on $j$'s time spent socializing
	is ambiguous, we cannot predict its impact on $i$'s time spent socializing
	and studying.
	
	Even though the signs of $\partial S_{i}^{*}/\partial d_{ij}$ and
	$\partial H_{i}^{*}/\partial d_{ij}$ are ambiguous, from equations
	(\ref{eq:foc.study.dd}) and (\ref{eq:foc.social.dd}) we can derive
	that $\partial S_{i}^{*}/\partial d_{ij}\neq0$ and $\partial H_{i}^{*}/\partial d_{ij}\neq0$,
	that is, a change in social distance $d_{ij}$ affects $i$'s time
	spent socializing and studying. To see this, suppose that $\partial S_{i}^{*}/\partial d_{ij}=0$.
	Under Assumption \ref{ass:S_d}(i), (iii) and (iv), in general we
	have $r_{e}\partial^{2}p_{ij}/\partial S_{i}\partial d_{ij}+r_{f}\partial^{2}p_{ij}/\partial S_{i}\partial S_{j}\cdot\partial S_{j}^{*}/\partial d_{ij}\neq0$,
	so for equations (\ref{eq:foc.study.dd}) and (\ref{eq:foc.social.dd})
	to hold, we must have $\partial H_{i}^{*}/\partial d_{ij}\neq0$ and
	$\partial L_{i}^{*}/\partial d_{ij}\neq0$. Note that equation (\ref{eq:foc.study.dd})
	implies that $\partial H_{i}^{*}/\partial d_{ij}$ and $\partial L_{i}^{*}/\partial d_{ij}$
	have the same sign. This implies that $\partial(H_{i}^{*}+L_{i}^{*})/\partial d_{ij}\neq0$,
	which contradicts $\partial S_{i}^{*}/\partial d_{ij}=0$ because
	$H_{i}^{*}+S_{i}^{*}+L_{i}^{*}=1$. Moreover, because $\partial S_{i}^{*}/\partial d_{ij}\neq0$
	implies that the sum $H_{i}^{*}+L_{i}^{*}$ must change, and $\partial H_{i}^{*}/\partial d_{ij}$
	and $\partial L_{i}^{*}/\partial d_{ij}$ have the same sign, we can
	further derive that $\partial H_{i}^{*}/\partial d_{ij}\neq0$. To
	sum up, we have proved that $\partial S_{i}^{*}/\partial d_{ij}\neq0$
	and $\partial H_{i}^{*}/\partial d_{ij}\neq0$, that is, social distance
	$d_{ij}$ has nonzero effects on socializing and studying.
	
	Finally, because the expected number of friends and expected education
	depend on time spent socializing and studying ($\mathbb{E}[F_{i}|X,\upsilon,\omega]=\sum_{j\neq i}p_{ij}(S_{i},S_{j})$
	and $\mathbb{E}[E_{i}|X,\upsilon,\omega]=a_{i}(H_{i})$), a change
	in social distance $d_{ij}$ would in general affect friendships and
	education, through the change in socializing and studying. However,
	because the impacts of social distance $d_{ij}$ on socializing and
	studying are ambiguous, so are its impacts on friendships and education.
\end{proof}

\subsubsection{\label{app:Proof.endow}Proposition \ref{prop:endow}}

In this section, we investigate the roles of individual endowments
such as IQ and extroversion on socializing, studying and their friendship
and education outcomes. For convenience, we write $X_{i}=(X_{i,1},X_{i,-1})$,
where $X_{i,1}$ represents the endowment under investigation (e.g.,
IQ and extroversion) and $X_{i,-1}$ represents the remaining observed
characteristics. Assume that $X_{i,1}$ does not directly affect $i$'s
social distance to peers.

To speculate on the roles of IQ and extroversion, we make additional
assumptions on the utility and production functions. In particular,
we impose Assumptions \ref{ass:S_x} and \ref{ass:S_x.iq} when $X_{i,1}$
represents IQ and impose Assumptions \ref{ass:S_x} and \ref{ass:S_x.extro}
when $X_{i,1}$ represents extroversion.
\begin{assumption}
	\label{ass:S_x}(i) The marginal utility from leisure does not depend
	on socializing nor endowment $X_{i,1}$ ($\partial^{2}U_{i}/\partial L_{i}\partial S_{i}=0$
	and \textup{$\partial^{2}U_{i}/\partial L_{i}\partial X_{i,1}=0$}).
	(ii) The marginal product from socializing (or studying) is diminishing
	in socializing (or studying) (\textup{$\partial^{2}p_{ij}/\partial S_{i}^{2}<0$}
	and \textup{$\partial^{2}a_{i}/\partial H_{i}^{2}<0$}). (iii) The
	labor market returns to education and friendships are positive constants
	(\textup{$\partial Y_{i}/\partial E_{i}=r_{e}>0$} and $\partial Y_{i}/\partial F_{i}=r_{f}>0$)\textup{.}
\end{assumption}
\begin{assumption}[IQ]
	\label{ass:S_x.iq}The marginal product of studying is increasing
	in IQ (\textup{$\partial^{2}a_{i}/\partial H_{i}\partial X_{i,1}>0$}),
	that is, students with higher IQ attain higher levels of education
	from each unit of studying.
\end{assumption}
\begin{assumption}[Extroversion]
	\label{ass:S_x.extro}(i) The marginal product of socializing is
	increasing in the magnitude of extroversion (\textup{$\partial^{2}p_{ij}/\partial S_{i}\partial X_{i,1}>0$}),
	that is, students who are more extroverted make more friends from
	each unit of socializing. The marginal utility from socializing is
	also increasing in the magnitude of extroversion ($\partial^{2}U_{i}/\partial S_{i}\partial X_{i,1}>0$).
	(ii) The marginal product of studying does not depend on the the magnitude
	of extroversion (\textup{$\partial^{2}a_{i}/\partial H_{i}\partial X_{i,1}=0$}).
\end{assumption}
\begin{proof}[Proof of Proposition \ref{prop:endow}]
	Because $X_{i,1}$ does not directly affect $i$'s social distance
	to peer $j$, under the assumption that the equilibrium effects are
	negligible, we obtain $\partial S_{j}^{*}/\partial X_{i,1}=0$ for
	$j\neq i$, that is, $i$'s endowment does not affect $j$'s socializing.
	Taking the derivatives of both sides of (\ref{eq:foc.study}) and
	(\ref{eq:foc.social}) evaluated at optimal $(H_{i}^{*},S_{i}^{*})$,
	$i=1,\dots,n$, with respect to $X_{i,1}$, we obtain
	\begin{eqnarray}
		\frac{\partial^{2}U_{i}}{\partial L_{i}^{2}}(S_{i}^{*},L_{i}^{*})\frac{\partial L_{i}^{*}}{\partial X_{i,1}} & = & r_{e}\frac{\partial^{2}a_{i}}{\partial H_{i}\partial X_{i,1}}(H_{i}^{*})+r_{e}\frac{\partial^{2}a_{i}}{\partial H_{i}^{2}}(H_{i}^{*})\frac{\partial H_{i}^{*}}{\partial X_{i,1}}\label{eq:foc.study.dx}\\
		\frac{\partial^{2}U_{i}}{\partial L_{i}^{2}}(S_{i}^{*},L_{i}^{*})\frac{\partial L_{i}^{*}}{\partial X_{i,1}} & = & \frac{\partial^{2}U_{i}}{\partial S_{i}\partial X_{i,1}}(S_{i}^{*},L_{i}^{*})+r_{f}\frac{\partial^{2}p_{ij}}{\partial S_{i}\partial X_{i,1}}(S_{i}^{*},S_{j}^{*})\nonumber \\
		&  & +\left(\frac{\partial^{2}U_{i}}{\partial S_{i}^{2}}(S_{i}^{*},L_{i}^{*})+r_{f}\sum_{j\neq i}\frac{\partial^{2}p_{ij}}{\partial S_{i}^{2}}(S_{i}^{*},S_{j}^{*})\right)\frac{\partial S_{i}^{*}}{\partial X_{i,1}}\label{eq:foc.social.dx}
	\end{eqnarray}
	where we have used $\partial^{2}U_{i}/\partial L_{i}\partial S_{i}=0$\textit{,
		$\partial^{2}U_{i}/\partial L_{i}\partial X_{i,1}=0$, $\partial Y_{i}/\partial E_{i}=r_{e}$},
	and $\partial Y_{i}/\partial F_{i}=r_{f}$ by Assumption \ref{ass:S_x}(i)
	and (iii).
	
	\textbf{IQ.} We first consider the scenario where $X_{i,1}$ represents
	IQ. If we assume that 
	\begin{equation}
		\frac{\partial^{2}U_{i}}{\partial S_{i}\partial X_{i,1}}(S_{i}^{*},L_{i}^{*})+r_{f}\frac{\partial^{2}p_{ij}}{\partial S_{i}\partial X_{i,1}}(S_{i}^{*},S_{j}^{*})\leq0,\label{eq:MU.iq}
	\end{equation}
	which represents the case where students with higher IQ are less efficient
	in socializing, then we can derive that $\partial H_{i}^{*}/\partial X_{i,1}>0$
	and $\partial S_{i}^{*}/\partial X_{i,1}<0$, that is, smarter students
	study more and socialize less. To see this, suppose that $\partial H_{i}^{*}/\partial X_{i,1}\leq0$.
	Recall that $\partial^{2}U_{i}/\partial L_{i}^{2}<0$ and $\partial^{2}U_{i}/\partial S_{i}^{2}<0$.
	Under Assumptions \ref{ass:S_x}(ii) and \ref{ass:S_x.iq}, we have
	$\partial^{2}p_{ij}/\partial S_{i}^{2}<0$, $\partial^{2}a_{i}/\partial H_{i}^{2}<0$,
	and $\partial^{2}a_{i}/\partial H_{i}\partial X_{i,1}>0$. If $\partial H_{i}^{*}/\partial X_{i,1}\leq0$,
	equations (\ref{eq:foc.study.dx}) and (\ref{eq:foc.social.dx}) then
	imply that $\partial L_{i}^{*}/\partial X_{i,1}<0$ and $\partial S_{i}^{*}/\partial X_{i,1}<0$.
	This contradicts the setup $H_{i}^{*}+S_{i}^{*}+L_{i}^{*}=1$: the
	amounts of time spent on studying, socializing, and leisure cannot
	all decrease in $X_{i,1}$ because they sum up to $1$. We conclude
	that $\partial H_{i}^{*}/\partial X_{i,1}>0$, that is, studying is
	increasing in IQ. Moreover, suppose that $\partial S_{i}^{*}/\partial X_{i,1}\geq0$.
	From equation (\ref{eq:foc.social.dx}) we derive that $\partial L_{i}^{*}/\partial d_{ij}>0$,
	which again contradicts the setup that $H_{i}^{*}+S_{i}^{*}+L_{i}^{*}=1$.
	In sum, if condition (\ref{eq:MU.iq}) is satisfied, smarter students
	spend more time studying and less time socializing. This is in line
	with the idea of ``nerdy'' students who are not popular and prefer
	to spend their time studying.
	
	In general, without condition (\ref{eq:MU.iq}), the roles of IQ on
	socializing and studying become ambiguous. For example, if 
	\[
	0<\frac{\partial^{2}U_{i}}{\partial S_{i}\partial X_{i,1}}(S_{i}^{*},L_{i}^{*})+r_{f}\frac{\partial^{2}p_{ij}}{\partial S_{i}\partial X_{i,1}}(S_{i}^{*},S_{j}^{*})<r_{e}\frac{\partial^{2}a_{i}}{\partial H_{i}\partial X_{i,1}}(H_{i}^{*}),
	\]
	following similar reasoning we can derive $\partial H_{i}^{*}/\partial X_{i,1}>0$
	and $\partial L_{i}^{*}/\partial X_{i,1}<0$, but the sign of $\partial S_{i}^{*}/\partial X_{i,1}$
	is ambiguous. More interesting, if
	\[
	\frac{\partial^{2}U_{i}}{\partial S_{i}\partial X_{i,1}}(S_{i}^{*},L_{i}^{*})+r_{f}\frac{\partial^{2}p_{ij}}{\partial S_{i}\partial X_{i,1}}(S_{i}^{*},S_{j}^{*})>r_{e}\frac{\partial^{2}a_{i}}{\partial H_{i}\partial X_{i,1}}(H_{i}^{*}),
	\]
	we can derive that $\partial S_{i}^{*}/\partial X_{i,1}>0$ and $\partial L_{i}^{*}/\partial X_{i,1}<0$,
	but the sign of $\partial H_{i}^{*}/\partial X_{i,1}$ is ambiguous.
	This is the case where smarter students are relatively more efficient
	in socializing than studying: they enjoy more utility from one unit
	of socializing or are more productive in making friends. It is reasonable
	that these students socialize more by resting less, though it is unclear
	whether they study less as well.
	
	Because the roles of IQ on socializing and studying are in general
	ambiguous, so are its roles on their friendships and education.
	
	\textbf{Extroversion.} Next, we turn to the scenario where $X_{i,n}$
	represents extroversion. Unlike IQ, the roles of extroversion are
	predictable. In particular, we can show that $\partial S_{i}^{*}/\partial X_{i,1}>0$,
	$\partial H_{i}^{*}/\partial X_{i,1}<0$, and $\partial L_{i}^{*}/\partial X_{i,1}<0$,
	that is, students who are more extroverted socialize more, study less,
	and have less leisure. To see this, note that under Assumptions \ref{ass:S_x}(ii)
	and \ref{ass:S_x.extro}(i) and (ii), we have $\partial^{2}p_{ij}/\partial S_{i}^{2}<0$,
	$\partial^{2}a_{i}/\partial H_{i}^{2}<0$, $\partial^{2}U_{i}/\partial S_{i}\partial X_{i,1}>0$,
	$\partial^{2}p_{ij}/\partial S_{i}\partial X_{i,1}>0$, and $\partial^{2}a_{i}/\partial H_{i}\partial X_{i,1}=0$.
	Suppose that $\partial S_{i}^{*}/\partial X_{i,1}\leq0$. Equations
	(\ref{eq:foc.study.dx}) and (\ref{eq:foc.social.dx}) then imply
	that $\partial L_{i}^{*}/\partial X_{i,1}<0$ and $\partial H_{i}^{*}/\partial X_{i,1}<0$.
	This contradicts the setup $H_{i}^{*}+S_{i}^{*}+L_{i}^{*}=1$: the
	amounts of time spent on studying, socializing, and leisure cannot
	all decrease in $X_{i,1}$ because they sum up to $1$. We conclude
	that $\partial S_{i}^{*}/\partial X_{i,1}>0$, that is, socializing
	is increasing in the magnitude of extroversion. Moreover, from equation
	(\ref{eq:foc.study.dx}) we can see that $\partial L_{i}^{*}/\partial X_{i,1}$
	and $\partial H_{i}^{*}/\partial X_{i,1}$ have the same sign. Because
	$\partial S_{i}^{*}/\partial X_{i,1}>0$, we thus derive that $\partial H_{i}^{*}/\partial X_{i,1}<0$
	and $\partial L_{i}^{*}/\partial X_{i,1}<0$. In sum, more extroverted
	students spend more time socializing and less time studying and resting.
	
	Based on the roles of extroversion on time allocation, we expect that
	students who are more extroverted make more friends and attain lower
	levels of education.
\end{proof}

\section{\label{app:pairwise IV}Pairwise Instruments}

An alternative approach to constructing an individual-level instrument
is through a pairwise regression of friendship nominations. Specifically,
we run a Probit regression of whether $i$ normalizes $j$ as a friend
on pairwise homophily measures between $i$ and $j$ (such as age
distance). We also control for $i's$ characteristics ($X_{i}$),
the mean characteristics in $i's$ school-grade ($\bar{X}_{gs}$),
grade fixed effects ($\alpha_{g}$), and school fixed effects ($\lambda_{s}$).
In particular, the Probit regression of friendship nominations has
the specification
\[
\Pr(F_{ij}=1)=\Phi(\beta_{0}+\beta'_{1}d_{ij}+\beta'_{2}X_{i}+\beta'_{3}\bar{X}_{gs}+\alpha_{g}+\lambda_{s}),
\]
where $d_{ij}$ represents the pairwise homophily measures between
$i$ and $j$, and $\Phi$ represents the cdf of standard normal distribution.
To run this regression, we use the pairwise sample that consists of
all the pairs of individuals being in the same school and same grade.
From the pairwise regression, we obtain the predicted value of friendship
nomination for each pair of $i$ and $j$ ($\hat{\Pr}(F_{ij}=1)$).
The instrument for $i$'s friendship nominations is then given by
the sum of the predicted friendship nominations toward $i$, that
is, $\sum_{j\in c}\hat{\Pr}(F_{ij}=1)$, where the sum is over all
$j$ in $i$'s school-grade.

Compared with the average homophily measures, using the predicted
number of friends as an instrument can be more efficient. Following
\citet{NEWEY1994} and \citet{Wooldridge2010} we derive that under
homoscedasticity in the log earnings equation, an efficient instrument
for the number of friends takes the form of the predicted number of
friends. Using the predicted number of friends as an instrument instead
of a regressor in the second stage is also in line with what \citet[Section 4.6]{AngristHarmless}
suggested to avoid a forbidden regression in the case of a nonlinear
first stage.

\section{\label{app:ONET}Indices of Cognitive and Social Skills from the
	O{*}NET}

We use data from the O{*}NET to construct the indices of cognitive
and social skills.\footnote{The data can be found from https://www.onetonline.org/.}
We construct the indices as follows.

\textbf{Cognitive skills score.} From the O{*}NET we observe whether
an occupation requires the following cognitive abilities: Category
Flexibility; Deductive Reasoning; Flexibility of Closure; Fluency
of Ideas; Inductive Reasoning; Information Ordering; Mathematical
Reasoning; Memorization; Number Facility; Oral Comprehension; Oral
Expression; Originality; Perceptual Speed; Problem Sensitivity; Selective
Attention; Spatial Orientation; Speed of Closure; Time Sharing; Visualization;
Written Comprehension; and Written Expression. For each occupation
the O{*}NET reports the level of the skill that is needed. A higher
level indicates that the occupation requires a greater amount of that
skill. To compute the cognitive score for an occupation, we take the
average of the levels of these cognitive skills that O{*}NET reports
are needed for that particular occupation.

\textbf{Social skills score.} From the O{*}NET we observe whether
an occupation requires the following social skills: Coordination;
Instructing; Negotiation; Persuasion; Service Orientation; Social
Perceptiveness; Assisting and Caring for Others; Coaching and Developing
Others; Communicating with Persons Outside Organization; Communicating
with Supervisors, Peers or Subordinates; Coordinating the Work and
Activities of Others; Developing and Building Teams; Establishing
and Maintaining Interpersonal Relationships; Guiding, Directing and
Motivating Subordinates; Interpreting the Meaning of Information for
Others; Performing Administrative Activities; Performing for or Working
Directly with the Public; Provide Consultation and Advice to Others;
Resolving Conflicts and Negotiating with Others; Selling or Influencing
Others; Staffing Organizational Units; Training and Teaching Others.
For each occupation the O{*}NET reports the level of the skill that
is needed. A higher level indicates that the occupation requires a
greater amount of that skill. To compute the social score for an occupation,
we take the average of the levels of these social skills that are
needed in the occupation.

\section{\label{app:Code}STATA Code: Bound Estimates and Confidence Intervals}

\begin{lstlisting}[basicstyle={\footnotesize\ttfamily}]
* Set the significance level 
clear mata
mata alpha = 0.05
	
* Estimate the bounds by two-step GMM
xtivreg2 ln_earn_upper (in_grade=$iv) $xi $mean g2-g7 if insample, fe gmm cl(sid) 
mata bu = st_matrix("e(b)")' 
mata Vu = st_matrix("e(V)")
xtivreg2 ln_earn_lower (in_grade=$iv) $xi $mean g2-g7 if insample, fe gmm cl(sid) 
mata bl = st_matrix("e(b)")' 
mata Vl = st_matrix("e(V)")
	
* Construct the bound estimates and standard errors 
mata nb = rows(bu)                /* number of parameters      */
mata b_set = J(nb, 2, 0)          /* bound estimates           */
mata se_set = J(nb, 2, 0)         /* standard errors           */
mata d = J(nb, 1, 0)              /* widths of identified sets */
mata se_max = J(nb, 1, 0)         /* maximum of s.e.           */
mata for (i=1; i<=nb; i++){
    b_set[i,1] = bl[i,1] * (bu[i,1] > bl[i,1]) +       ///
                 bu[i,1] * (bu[i,1] < bl[i,1])
    b_set[i,2] = bu[i,1] * (bu[i,1] > bl[i,1]) +       ///
                 bl[i,1] * (bu[i,1] < bl[i,1])
    se_set[i,1] = Vl[i,i]^0.5 * (bu[i,1] > bl[i,1]) +  ///
                  Vu[i,i]^0.5 * (bu[i,1] < bl[i,1])
    se_set[i,2] = Vu[i,i]^0.5 * (bu[i,1] > bl[i,1]) +  ///                    
                  Vl[i,i]^0.5 * (bu[i,1] < bl[i,1])
    d[i,1] = b_set[i,2] - b_set[i,1]
    se_max[i,1] = se_set[i,1] * (se_set[i,1] > se_set[i,2]) +  /// 	  
                  se_set[i,2] * (se_set[i,1] < se_set[i,2])
}

* Calculate the critical values in Imbens and Manski (2004)
mata c = J(nb, 1, 0) 
mata c0 = invnormal(1 - alpha/2)     /* initial value of c */
mata void cfun(todo, c, d, se, alpha, f, g, H){
    f = (normal(c + d/se) - normal(-c) - (1 - alpha)) ^ 2
}
mata CV = optimize_init() 
mata optimize_init_which(CV, "min") 
mata optimize_init_evaluator(CV, &cfun()) 
mata optimize_init_params(CV, c0) 
mata optimize_init_argument(CV, 3, alpha)
mata for (i=1; i<=nb; i++){
    optimize_init_argument(CV, 1, d[i,1])
    optimize_init_argument(CV, 2, se_max[i,1])
    c[i,1] = optimize(CV)
}

* Construct the confidence interval in Imbens and Manski (2004)
mata ci = J(nb, 2, 0) 
mata for (i=1; i<=nb; i++){
    ci[i,1] = b_set[i,1] - c[i,1] * se_set[i,1]
    ci[i,2] = b_set[i,2] + c[i,1] * se_set[i,2]
}
mata round(b_set[1,.], 0.0001)   /* bounds               */
mata round(ci[1,.], 0.0001)      /* confidence interval  */
\end{lstlisting}

\end{document}